\newtheorem*{theorem*}{Theorem}
\newcommand{\bi}[2]{\neq_2^{#1,#2}}
\newcommand{\hol}[0]{\textsf{Holant}}
\newcommand{\eo}[0]{\textsf{EO}}
\newcommand{\csp}[0]{\textsf{CSP}}
\newcommand{\eoe}[0]{\text{HW}^=}
\newcommand{\eog}[0]{\text{HW}^\geq}
\newcommand{\eol}[0]{\text{HW}^\leq}
\newcommand{\eosg}[0]{\text{HW}^>}
\newcommand{\hw}[1]{\text{HW}^{#1}}
\newcommand{\su}[0]{\text{supp}}
\newcommand{\pin}[0]{\Delta}
\newcommand{\eom}[1][\text{M}]{\textsf{EO}^{#1}}
\newcommand{\ba}[1][0]{{{#1}-rebalancing}}
\title{P-time algorithms for typical \#$\eo$ problems} 
\author{Boning Meng\footnote{First authors.}}{Key Laboratory of System Software (Chinese Academy of Sciences) and State Key Laboratory of Computer Science, Institute of Software, Chinese Academy of Sciences; University of Chinese Academy of Sciences, Beijing, China}{mengbn@ios.ac.cn}{https://orcid.org/0009-0006-0088-1639}{}
\author{Juqiu Wang\footnotemark[1]}{Key Laboratory of System Software (Chinese Academy of Sciences) and State Key Laboratory of Computer Science, Institute of Software, Chinese Academy of Sciences; University of Chinese Academy of Sciences, Beijing, China}{wangjq21@ios.ac.cn}{https://orcid.org/0000-0001-9801-271X}{}
\author{Mingji Xia\footnotemark[1]}{Key Laboratory of System Software (Chinese Academy of Sciences) and State Key Laboratory of Computer Science, Institute of Software, Chinese Academy of Sciences; University of Chinese Academy of Sciences, Beijing, China}{mingji@ios.ac.cn}{https://orcid.org/0000-0002-3868-9910}{}
\authorrunning{B. Meng, J. Wang, M, Xia} 
\keywords{Counting complexity, Eulerian orientation, Holant, \#P-hardness, Dichotomy theorem} 
\begin{document}

\maketitle

\begin{abstract}
In this article, we study the computational complexity of counting weighted Eulerian orientations, denoted as \#\textsf{EO}. This problem is considered a pivotal scenario in the complexity classification for \textsf{Holant}, a counting framework of great significance. Our results consist of three parts. First, we prove a complexity dichotomy theorem for \#\textsf{EO} defined by a set of binary and quaternary signatures, which generalizes the previous dichotomy for the six-vertex model. Second, we prove a dichotomy for \#\textsf{EO} defined by a set of so-called pure signatures, which possess the closure property under gadget construction. Finally, we present a polynomial-time algorithm for \#\textsf{EO} defined by specific rebalancing signatures, which extends the algorithm for pure signatures to a broader range of problems, including \#\textsf{EO} defined by non-pure signatures such as $f_{40}$. We also construct a signature $f_{56}$ that is not rebalancing, and whether $\#\textsf{EO}(f_{56})$ is computable in polynomial time remains open.  
\end{abstract}

\section{Introduction}
In fields such as statistical physics, economics, machine learning, and combinatorics, the role of counting problems is becoming increasingly significant. Three well-founded frameworks have been put forth for the study of the complexity of counting problems: \#$\textsf{GH}$, $\#\csp$, and $\hol$. These frameworks are capable of expressing a wide range of natural counting problems and are of great significance in counting complexity. For example, counting vertex covers in a graph can be expressed in \#$\textsf{GH}$, while counting perfect matchings is a $\hol$ problem. The definitions of these frameworks are as follows; see \cite[Section 1.2]{cai2017complexity} and \cite[section 2.1]{cai2020beyond} for details. In this paper, we always restrict the variables in these counting problems to the Boolean domain, where each variable can only take values in $\{0,1\}$, by default.  

We begin by introducing some basic concepts. A \textit{Boolean variable} takes values from a Boolean domain consisting of two symbols $\{0,1\}$. Sometimes we treat it as $\mathbb{F}_2$, the finite field of size 2, to characterize special tractable classes.

A \textit{signature} $f$ with $r$ variables is a mapping from $\{0,1\}^r$ to $\mathbb{C}$. The value of $f$ on an input $\alpha$ is denoted as $f_{\alpha}$ or $f(\alpha)$. The set of all variables of $f$ is denoted by $\text{Var}(f)$, and its size by $\text{arity}(f)$. 
\begin{definition}[\#$\textsf{GH}$]
    A counting weighted graph homomorphisms problem $\#\textsf{GH}(\mathcal{F})$ parameterized by a binary signature $f$ is as the following: The input is a directed graph\footnote{In this article, graphs refer to multigraphs. It is always permissible for self-loops and parallel edges to be present.} $G = (V,E)$. The output is 

$$Z_G=\sum_{\sigma:V\to \{0,1\}}\prod_{(u,v)\in E} f(\sigma(u),\sigma(v))$$
\label{def:GH}
\end{definition}

\begin{definition}[$\#\csp$]
    A counting constraint satisfaction problem $\#\csp(\mathcal{F})$ parameterized by a set $\mathcal{F}$ is as the following: The input is an instance of $\#\csp(\mathcal{F})$, which consists of a finite set of variables $\{x_1,x_2,\cdots,x_n\}$ and a finite set $C$ of clauses. Each clause in $C$ contains a signature $f\in \mathcal{F}$ of arity $k$ and a sequence of variables of length $k$ $(x_{i_1},x_{i_2},\cdots,x_{i_k})$ from $\{x_1,x_2,\cdots,x_n\}$ \footnote{A variable can appear in the sequence more than one time.}. The output is 

$$\sum_{x_1,x_2,\cdots,x_n\in\{0,1\}}\prod_{(f,x_{i_1},x_{i_2},\cdots,x_{i_k})\in C}f(x_{i_1},x_{i_2},\cdots,x_{i_k})$$
\label{def:CSP}
\end{definition}

\begin{definition}[$\hol$]
    A $\hol$ problem $\hol(\mathcal{F})$ parameterized by a set $\mathcal{F}$ is as the following: The input is an signature grid $\Omega(G,\pi)$ over $\mathcal{F}$, denoted by $I$. Here, $G$ is a graph and $\pi:V\to \mathcal{F}\times S(E(v))$ assigns a signature $f_v\in\mathcal{F}$ of arity $|E(v)|$ to $v$ and a linear order to $E(v)$ for each $v\in V(G)$, where $E(v)$ are the edges adjacent to $v$. The output is the partition function of $\Omega$,

$$\text{Z}(I)=\hol_\Omega=\sum_{\sigma:E(G)\to\{0,1\}}\prod_{v\in V(G)}f_v(\sigma|_{E(v)})$$

The bipartite $\hol$ problem $\hol(\mathcal{F}\mid\mathcal G)$ is a $\hol$ problem over the signature grid $(G,\pi)$, where $G=(U,V,E)$ is a bipartite graph, and $\pi$ assigns signatures from $\mathcal F$ to vertices in $U$ and signatures from $\mathcal{G}$ to vertices in $V$.
\label{def:holant}
\end{definition}

The framework of $\hol$ is capable of expressing counting weighted graph homomorphisms (\#$\textsf{GH}$) and counting constraint satisfaction problems ($\#\csp$). Consequently, $\hol$ is widely regarded as one of the most general and significant frameworks in counting complexity.

Significant progress has been made in the study of \#$\textsf{GH}$ \cite{lovasz1967operations,hell1990complexity,dyer2000complexity,bulatov2005complexity,dyer2007counting,goldberg2010complexity,cai2013graph,cai2019decidable}, $\#\csp$ \cite{bulatov2005complexity,bulatov2007towards,dyer2009complexity,bulatov2009complexity,cai2009holant,dyer2010complexity,guo2011complexity,bulatov2012complexity,bulatov2013complexity,dyer2013effective,cai2014complexity,cai2016nonnegative,cai2017godel}, $\#\eo$ \cite{cai2018complexity,cai2020beyond,shao2024eulerian}, and $\hol$ \cite{cai2009holant,cai2011computational,cai2011holant*,cai2013vanishing,backens2017holant+,cai2018realholantc,backens2021full,lin2018holantnonnegative,cai2020holantoddarity,shao2020realholant}. Moreover, the computational complexity of \#$\textsf{GH}$ and $\#\csp$ has been fully characterized by dichotomy theorems \cite{cai2013graph,cai2014complexity}. In contrast, the complexity classification for $\hol$ remains unresolved.

 There have been several constructive attempts on $\hol$. A signature is said to be symmetric if all its values for inputs with the same number of 1's are identical. A complete dichotomy is established when all signatures are symmetric \cite{cai2013vanishing}. When signatures are not restricted to be symmetric, a complete dichotomy has been proven for nonnegative-weighted signatures \cite{lin2018holantnonnegative} and real-weighted signatures \cite{shao2020realholant}. Additionally, dichotomies exist for several special forms of complex-weighted $\hol$, such as $\hol^*$ \cite{cai2011holant*}, $\hol^+$ \cite{backens2017holant+}, and $\hol^c$ \cite{backens2021full}, with some given auxiliary signatures.

Recently, counting weighted Eulerian orientation problems ($\#\eo$) have also attracted researchers' attention, as real-weighted $\hol$ can express $\#\eo$ problems defined by signatures with the ARS property \cite{cai2020beyond} under a holographic transformation by $Z=\frac{1}{\sqrt{2}}\left(\begin{matrix}
1 &  1\\
\mathfrak{i} & -\mathfrak{i} 
\end{matrix}\right)$, while complex-weighted $\hol$ can express $\#\eo$ problems under the same transformation. The complexity classification for complex-weighted $\#\eo$ problems remains open, and therefore this article seeks a more generalized characterization of $\#\eo$ complexity.

\subsection{Counting weighted Eulerian orientation}
We first introduce the concept of $\#\eo$. We use $\hw{<}$ to denote the set of strings with fewer 1's than 0's. For example, $00110\in \hw{<}$ since the number of 1's (which is 2) is strictly less than the number of 0's (which is 3). We similarly define $\hw{=}$, $\hw{\leq}$, $\hw{>}$, and $\hw{\geq}$. 
The \textit{support} of a signature $f$, denoted $\su(f)$, is the set of all inputs for which $f$ is non-zero.  If $\su(f)\subseteq\hw{=}$, we call $f$ an $\eo$ signature. 

For any Eulerian graph $G$, let $\eo(G)$ denote all Eulerian orientations of $G$. For a given orientation in $\eo(G)$, we assign $0$ to the head and $1$ to the tail of each edge. Therefore, a Eulerian orientation corresponds to an assignment to both ends of every edge, where for each vertex, the number of adjacent ends assigned $0$ equals those assigned $1$.

\begin{definition}[{$\#\eo$}]
A $\#\eo$ problem $\#\eo(\mathcal{F})$ parameterized by a signature set $\mathcal{F}$ of $\eo$ signatures is as the following: The input is an $\eo$-signature grid $\Omega(G,\pi)$ over $\mathcal{F}$, denoted by $I$. Here, $G$ is a Eulerian graph and $\pi:V\to \mathcal{F}\times S(E(v))$ assigns a signature $f_v\in\mathcal{F}$ of arity $|E(v)|$ to $v$ and a linear order to $E(v)$ for each $v\in V(G)$, where $E(v)$ are the edges adjacent to $v$. The output is the partition function of $\Omega$,

$$\text{Z}(I)=\#\eo_\Omega=\sum_{\sigma\in\eo(G)}\prod_{v\in V}f_v(\sigma|_{E(v)}).$$
\label{defeo}
\end{definition}

Several complexity results have been established for $\#\eo$. In \cite{cai2018complexity}, a complexity dichotomy was proved for a single complex-weighted quaternary signature, known as the six-vertex model dichotomy. This result was later extended to planar graphs in \cite{cai2021new}. Furthermore, \cite{cai2020beyond} established a complexity dichotomy for signatures satisfying the arrow reversal symmetry (ARS) property, a setting commonly assumed in physics.

There are two motivations for studying $\#\eo$ problems. First, specific forms of $\#\eo$ problems appear in many areas such as statistical physics and combinatorics. In statistical physics, both the ice-type model and the six-vertex model \cite{pauling1935structure} are special cases of $\#\eo$ problems. The latter model, in particular, has emerged as a prominent focus within statistical physics and corresponds exactly to $\#\eo$ defined on 4-regular graphs. In combinatorics, the resolution of the Alternating Sign Matrix conjecture \cite{bressoud1999proofs} and the evaluation of the Tutte polynomial at (3,3) \cite{las1988evaluation} both relate to $\#\eo$ problems defined on specific graphs with particular signatures.

Second, we conjecture that resolving $\#\eo$ problems is essential for establishing the complete dichotomy for complex-weighted $\hol$ problems. The significance of $\eo$-signatures was first recognized in \cite{cai2013vanishing} during the study of vanishing signatures. Furthermore, \cite{lin2018holantnonnegative} suggests that classifying $\#\eo$ problems may complement the tensor decomposition lemma for complex-weighted signatures. Most directly, \cite{shao2020realholant} demonstrates that research on corresponding $\#\eo$ problems (defined by signatures with ARS property), initially conducted in \cite{cai2020beyond}, constitutes a crucial component of the proof for the real-weighted $\hol$ dichotomy. Additionally, the eight-vertex model dichotomy presented in \cite{caifu2023eightvertex}, where the problem is defined by a single quaternary signature whose support confined to $\eoe\cup\{0000,1111\}$, builds upon the six-vertex model dichotomy in \cite{cai2018complexity}. We believe that some fundamental obstacles to establishing complete complexity classifications for complex-weighted $\hol$ problems lie hidden within complex-weighted $\#\eo$ problems.

\subsection{Our results}\label{section:resulteasy}

All results in this paper target and hold for complex-weighted $\#\eo$ by default. The study begins with analyzing low-arity signatures, a prevalent research focus in counting problems. This approach is justified because an $\eo$ signature $f$ with arity greater than 4 can generate various quaternary and binary signatures through gadget construction.
Our first result is stated below, with its detailed version presented in Theorem \ref{arity4setdichotomy}.

\begin{theorem}\label{thm:4+2}
    Suppose $\mathcal{F}$ is a finite set of $\eo$ signatures with arity less or equal than 4. Then $\#\eo(\mathcal{F})$ is either polynomial time computable or \#P-hard. The classification criterion is explicit.
\end{theorem}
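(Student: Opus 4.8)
The plan is to follow the standard template for Holant-type dichotomies: identify a short list of tractable families, prove membership in P for signatures inside those families, and prove \#P-hardness for everything else by reducing from a known hard problem (here, the six-vertex model dichotomy of \cite{cai2018complexity}, which is the $\mathcal{F}=\{f\}$ special case for a single quaternary signature). First I would set up the relevant gadget-construction machinery: since an $\eo$ signature has support in $\hw{=}$, gadgets built from $\eo$ signatures again realize $\eo$ signatures, and in particular one can ``pin'' edges and take connected-components of degree-2 and degree-4 vertices to extract binary and quaternary signatures from higher-arity ones and from the interaction of several signatures in $\mathcal{F}$. A key normalization step is that every binary $\eo$ signature is (up to scalar) of the form $\bi{a}{b}$, i.e. supported on $\{01,10\}$ with values $a,b$; composing two such along an edge multiplies the ``twist'' $(a,b)$ componentwise, so the binary signatures available form a multiplicative structure whose closure I would analyze first.

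Next I would classify the tractable cases. The expected tractable families for \#$\eo$ are: (i) the affine-type signatures (support is an affine subspace of $\mathbb{F}_2^r$ intersected with $\hw{=}$, with the usual quadratic-phase weights), giving a Gaussian-elimination / Holant$^*$-style algorithm; (ii) the product-type / matchgate-like signatures that become tractable after the holographic transformation by $Z$ linking \#$\eo$ to real- or complex-weighted \textsf{Holant}; and (iii) the degenerate and ``pairwise-opposite'' signatures that effectively reduce the instance to an easier one. For each quaternary signature I would use the concrete coordinates of $\eoe$ (the six strings of Hamming weight $2$ in length $4$) to write $f$ as a $6$-entry vector and read off which family it lies in; for binary signatures the analysis is immediate. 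The membership-in-P direction then amounts to checking each family is closed under the \#$\eo$ gadget operations and invoking the corresponding known algorithm.

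For the hardness direction, the strategy is interpolation plus reduction to \cite{cai2018complexity}. Given $\mathcal{F}$ containing a signature not in any tractable family, I would first argue that $\mathcal{F}$ can construct some quaternary signature $g$ that is six-vertex-model-hard by itself; then $\#\eo(\{g\})$ is \#P-hard by the six-vertex dichotomy and hence so is $\#\eo(\mathcal{F})$. The extraction of such a $g$ splits into cases: if $\mathcal{F}$ already contains a hard quaternary signature we are done; if $\mathcal{F}$ contains a ``bad'' binary signature we combine it with the quaternary ones (or use it to move a tractable-looking quaternary signature out of its family) via edge-gluing and polynomial interpolation on the eigenvalues of the associated transfer matrix; and if all binary signatures present are benign but some quaternary signature $f$ is non-tractable, we self-glue copies of $f$ (and use pinning) to reach a six-vertex-hard signature. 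I would organize these by the $\mathbb{F}_2$-affine structure of the supports: non-affine support is typically the most direct source of hardness, while affine support with non-affine weights is handled by the quadratic-form analysis.

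The main obstacle, as usual in Holant dichotomies, is the hardness direction in the ``borderline'' cases — signatures that are individually close to a tractable family (e.g. affine support but slightly wrong phases, or a quaternary signature whose restriction to $\eoe$ looks product-type) and binary signatures whose twist has modulus one, where naive interpolation degenerates because all transfer-matrix eigenvalues have equal absolute value. Handling these requires either a more careful anti-gadget / group-theoretic argument on the achievable twists, or an explicit reduction that bypasses interpolation by directly simulating a hard six-vertex instance. I expect the bulk of the technical work, and the place where the ``explicit classification criterion'' becomes intricate, to be precisely the bookkeeping that certifies no such borderline signature secretly lies in a tractable family and that every genuinely non-tractable one does construct a six-vertex-hard gadget.
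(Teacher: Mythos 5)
Your high-level template (list tractable families, give algorithms for them, reduce everything else to the six-vertex dichotomy of \cite{cai2018complexity}) matches the paper's, but there is a genuine gap at the crux of the theorem: the tractable families you list are not the right ones, and essentially all of the new content of the theorem lives there. The tractable cases are not ``affine / product-type / degenerate'': they are the four mixed classes $\mathscr{P}\cup(\mathscr{M}\otimes\Delta_1)$, $\mathscr{A}\cup(\mathscr{M}_{\mathscr{A}}\otimes\Delta_1)$ and their duals, where $\mathscr{M}\otimes\Delta_1$ consists of quaternary signatures supported on $\{1100,1010,1001\}$ (one variable forced to $1$) and $\mathscr{M}_{\mathscr{A}}$ is the subfamily whose nonzero values have pairwise quotients in $\{\pm1,\pm\mathfrak{i}\}$. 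These signatures have non-affine support, so they lie in none of your three families, yet an instance built from them together with $\mathscr{P}$ (resp.\ $\mathscr{A}$, under the extra $\mathscr{M}_{\mathscr{A}}$ condition) is tractable via a propagation argument: each such vertex emits a forced $\Delta_1$, which travels through $\neq_2$ and collapses a neighbouring signature, and after polynomially many elimination steps only $\mathscr{P}$ (resp.\ $\mathscr{A}$) signatures remain. Your algorithmic direction does not cover these sets, and your hardness direction would then be attempting to prove \#P-hardness for instances that are actually in P.

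Correspondingly, the hardness direction is not organized around extracting a six-vertex-hard quaternary signature from a single ``non-tractable'' $f$: by the six-vertex dichotomy one may immediately assume every individual signature already lies in $\mathscr{A}\cup\mathscr{P}\cup(\mathscr{M}\otimes\Delta_1)\cup(\widetilde{\mathscr{M}}\otimes\Delta_0)$, so the entire remaining work consists of \emph{no-mixing} lemmas: $\mathscr{M}\otimes\Delta_1$ cannot mix with $\widetilde{\mathscr{M}}\otimes\Delta_0$, $\mathscr{A}$ cannot mix with $\mathscr{P}$, and $\mathscr{A}-\mathscr{P}$ cannot mix with $(\mathscr{M}\otimes\Delta_1)-(\mathscr{M}_{\mathscr{A}}\otimes\Delta_1)$ (plus the dual). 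Each is proved by a short explicit gadget --- connecting two of the given signatures by $\neq_2$, or adding a self-loop --- whose result is a quaternary signature outside all four six-vertex tractable classes; no interpolation or transfer-matrix eigenvalue analysis is needed, so the ``borderline unimodular twist'' difficulty you anticipate does not arise (every binary $\eo$ signature is a generalized disequality and is always in $\mathscr{P}$; it is only ``bad'' relative to $\mathscr{A}$, which is handled by direct composition). You do gesture at the mixing phenomenon (``use it to move a tractable-looking quaternary signature out of its family''), but without the correct family list and the no-mixing organization the case analysis cannot be carried out.
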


To establish a complete complexity classification for $\#\eo$ problems, we build upon Theorem \ref{thm:4+2} as a foundation. We note that the tractable cases exhibit a more complex structure compared to those in the six-vertex model dichotomy.

Our second result concerns pure signatures. For a set $S$ of 01-strings with length $k$, the \textit{affine span} of $S$ is defined as the minimal affine subspace containing $S$. Given a signature $f$, we denote by $\mathrm{Span}(f)$ the affine span of its support $\su(f)$. Given the affine span of a signature $f$'s support, we call $f$ a pure-up (resp. pure-down) signature if it is contained in $\eog$ (resp. $\eol$). 

Our investigation originates from a quaternary signature $f$ with $\su(f)=\{1100,1010,1001\}$. Its affine span $\{1100,1010,1001,1111\}\subseteq\hw{\geq}$ motivates modifying $f$ to $f'$ by assigning a nonzero value at input 1111. Building on \cite{cai2013vanishing}'s result that $\hol(\neq_2\mid[0,0,a,b,c])\equiv_T\hol(\neq_2\mid[0,0,a,0,0])$ (since $\eosg$ strings do not contribute to the partition function), we establish $\#\eo(f')\equiv_T\#\eo(f)$. This support augmentation technique fundamentally motivates our study of pure signatures. Notably, Theorem~\ref{thm:pureeasy}'s tractable cases generalize those in Theorem~\ref{thm:4+2}.

\begin{theorem}\label{thm:pureeasy}
     Suppose $\mathcal{F}$ is a finite set of pure-up (pure-down) $\eo$ signatures. Then $\#\eo(\mathcal{F})$ is either polynomial time computable or \#P-hard. The classification criterion is explicit.
\end{theorem}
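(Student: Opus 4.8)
The plan is to reduce the pure-up case to a normal form where every signature's support actually lies in $\eoe$, and then invoke the previously available classification machinery (in particular Theorem~\ref{thm:4+2} together with the standard $\hol(\neq_2\mid\cdot)$ reductions after the $Z$-transformation). First I would set up the basic observation underlying the support-augmentation technique: for a pure-up signature $f$, the inputs in $\su(f)$ that lie strictly above the balanced level (i.e.\ in $\eosg$) can never be realized in an $\eo$-signature grid whose underlying graph is Eulerian, because a global counting argument forces the total number of $1$'s across all edge-endpoints to equal the total number of $0$'s; hence for every signature we may freely delete or add values on $\eosg\cap\mathrm{Span}(f)$ without changing the partition function of any grid. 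Formally this gives $\#\eo(\mathcal{F})\equiv_T\#\eo(\mathcal{F}')$ where $\mathcal{F}'$ is obtained by restricting each $f$ to $\eoe$; the pure-up hypothesis is exactly what guarantees $\su(f)\setminus\eoe\subseteq\eosg$, so nothing below the balanced level is lost.

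Next I would handle the resulting set $\mathcal{F}'$ of genuine $\eo$ signatures. Here the arities are unbounded, so Theorem~\ref{thm:4+2} does not apply directly; the step is to show that one can reduce to bounded arity. The tool is gadget construction: connecting a high-arity $\eo$ signature to appropriate binary ``wiring'' signatures (the disequality $\neq_2$ and its variants, which are themselves $\eo$ signatures) produces quaternary and binary signatures, and this process is closure-respecting for $\eo$ signatures. I would argue that the tractability criterion can be read off from the finite collection of binary and quaternary signatures so generated, so that $\#\eo(\mathcal{F}')$ is polynomial-time computable if and only if $\#\eo$ of this derived finite low-arity set is, and otherwise it is \#P-hard; the hardness direction is immediate since the derived signatures are $T$-reducible to $\mathcal{F}'$, and the tractability direction requires exhibiting an actual algorithm for $\mathcal{F}'$ under the good criterion. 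For the pure-down case, apply the symmetric argument, flipping the roles of $0$ and $1$ (equivalently, reverse every edge orientation); this is a formal symmetry of the whole $\#\eo$ framework, so the pure-down dichotomy follows from the pure-up one verbatim.

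The main obstacle I expect is the tractability (algorithmic) side after reduction to $\eoe$: it is not enough to know the low-arity derived signatures are tractable in isolation, one must give a uniform polynomial-time algorithm for the whole high-arity set $\mathcal{F}'$ when the criterion holds, and then verify that this algorithm's running time and correctness survive the support-augmentation back-translation to the original pure signatures. Concretely, the tractable classes for $\eo$ signatures include an affine-type class and a class reducible to counting perfect matchings / Pfaffian-style evaluation via the $Z$-transformation to $\hol$; I would need to check that pure-up signatures whose $\eoe$-restriction falls in one of these classes still define a tractable problem even with the extra $\eosg$ support present, which is where the ``$\eosg$ strings do not contribute'' lemma of \cite{cai2013vanishing} is doing the real work. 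A secondary, more bookkeeping-style obstacle is making the classification criterion genuinely \emph{explicit} for pure signatures: one must phrase it in terms of $\mathrm{Span}(f)$ and the $\eoe$-restriction rather than the raw support, and confirm it specializes correctly to the criterion of Theorem~\ref{thm:4+2} on the arity-$\le 4$ pure signatures, which is the compatibility claim asserted in the last sentence before the theorem statement.
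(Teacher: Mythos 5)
There is a genuine gap, and it starts with the definition. In this paper a pure-up signature is already an $\eo$ signature, i.e.\ $\su(f)\subseteq\eoe$ by hypothesis; the pure-up condition is the extra requirement that the \emph{affine span} $\mathrm{Span}(f)$ of that support lies in $\eog$. Your first step --- restricting each $f$ to $\eoe$ and invoking the ``$\eosg$ strings do not contribute'' observation --- is therefore vacuous: $\su(f)\setminus\eoe$ is empty to begin with, and the support-augmentation remark in the introduction is only the \emph{motivation} for the definition, not the content of the theorem. What the pure-up hypothesis actually buys, and what your proposal never extracts, is the structural dichotomy proved in the paper: if $\mathrm{Span}(f)\cap\hw{>}\neq\emptyset$ then some variable of $f$ is fixed to $1$ on all of $\mathrm{Span}(f)$ (a counting argument over the affine subspace), while if $\mathrm{Span}(f)\subseteq\eoe$ then $f$ is pairwise opposite ($\eom$). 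This is the engine of the whole proof.

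The second gap is the bounded-arity reduction. You propose to read the tractability criterion off the finite family of binary and quaternary signatures realizable from $\mathcal{F}'$ by gadgets, but you yourself note that this only gives the hardness direction; for tractability one needs a uniform algorithm for the unbounded-arity set, and no such algorithm is supplied. The paper does not reduce to low arity at all. Instead it runs a receiving--sending mechanism: each non-$\eom$ pure-up signature emits a forced $1$, which through the $\neq_2$ edges forces a $0$ (and hence, by purity preserved under pinning, another forced $1$) at its neighbour, and iterating this in $O(n^2)$ time collapses every signature in the instance to a restriction $f|_{\eom[P]}$ for some pairing $P$. The problem then becomes a $\#\csp$ instance via Lemma~\ref{thm:csp=eom}, and the explicit criterion is that all signatures are $\eom[\mathscr{A}]$ or all are $\eom[\mathscr{P}]$ (every restriction $f|_{\eom[P]}$ lies in $\mathscr{A}$, resp.\ $\mathscr{P}$); hardness when this fails comes from realizing $\eom$ signatures outside $\mathscr{A}$ and outside $\mathscr{P}$ by self-loops and applying the $\#\csp$ dichotomy. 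None of this structure is present in your proposal, so the classification criterion is never actually identified and the algorithmic half remains an acknowledged but unresolved obstacle. (A minor additional point: the tractable classes here are $\mathscr{A}$ and $\mathscr{P}$, not a matchgate/Pfaffian class.)
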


The detailed statement of Theorem~\ref{thm:pureeasy} appear in Theorem \ref{thm:puredichotomy}. To achieve tractability, a condition called the \textit{type condition} must be satisfied in Theorem~\ref{thm:pureeasy}. This condition also plays a crucial role in the subsequent algorithm. Additionally, we define a property called \textit{rebalancing}.
Intuitively, a signature $f$ is 0-rebalancing (or 1-rebalancing) if fixing some of its variables to 0 (respectively 1) naturally forces an equal number of variables to be fixed to 1 (respectively 0). When both the type condition and rebalancing condition are satisfied, a polynomial-time algorithm becomes applicable. We emphasize that this algorithm represents the most significant algorithmic contribution of this paper. Our third result is stated below, with a formal version in Theorem~\ref{thm:rebaalg}.
\begin{theorem}
    $\#\eo(\mathcal{F})$ is polynomial time computable, if every $f\in\mathcal{F}$ is \ba[0] (or \ba[1]), and satisfies the type condition.
    \label{thm:rebaeasy}
\end{theorem}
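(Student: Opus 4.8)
The plan is to show that, under the stated hypotheses, every $f\in\mathcal{F}$ can be replaced — without changing any $\#\eo$ partition function — by a signature lying in a class for which $\hol(\neq_2\mid\cdot)$ is polynomial-time solvable, and then invoke the corresponding algorithm. Recall that $\#\eo(\mathcal{F})$ coincides with the bipartite Holant problem $\hol(\neq_2\mid\mathcal{F})$: on each edge the two endpoints carry complementary bits (one head $0$, one tail $1$), which is exactly the constraint $\neq_2$. The $\#\eo$ framework is invariant under the global bit-flip, which exchanges $\eog$ with $\eol$ and turns a 0-rebalancing signature into a 1-rebalancing one, so it suffices to handle the case where every $f\in\mathcal{F}$ is 0-rebalancing; the 1-rebalancing case follows by conjugation.

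The first step is to extract the structural content of rebalancing. Unwinding the definition, a 0-rebalancing signature $f$ comes with a distinguished family of coordinates such that pinning them to $0$ deterministically forces a matching family of coordinates to $1$; ranging over the full pinnings this induces partitions $\su(f)$ into slices, each of which — crucially — is \emph{pure-up}, i.e.\ its affine span lies in $\eog$, even when $f$ itself is not pure (the situation of $f_{40}$). Under the type condition, each slice is the restriction to $\eoe$ of a signature in a tractable class for $\hol(\neq_2\mid\cdot)$ — essentially an affine signature — and reassembling these slices produces a single signature $\widehat{f}$ that (i) agrees with $f$ on $\eoe$, (ii) has support contained in $\eog$, and (iii) again lies in that tractable class. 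I would first prove the slicing decomposition and then verify (i)--(iii); establishing (iii), namely that the type condition is strong enough to make the reassembled $\widehat{f}$ tractable, is where I expect the real work, and the main obstacle, to lie.

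Granting the completion $f\mapsto\widehat{f}$, the algorithm is short. On an input $\eo$-signature grid $\Omega$ over $\mathcal{F}$, replace every vertex signature $f_v$ by $\widehat{f_v}$. Since each $\widehat{f_v}$ is supported in $\eog$, every contributing configuration has, at each vertex, at least as many $1$-ends as $0$-ends; but an $\eo$ grid has equally many $0$- and $1$-ends globally (one of each per edge), so summing these local inequalities forces each to be an equality. Hence only configurations that are locally balanced — i.e.\ in $\eoe$ — at every vertex survive, and on these $\widehat{f_v}=f_v$, so the partition function is unchanged. The new instance is a $\hol(\neq_2\mid\cdot)$ instance with all signatures in the tractable class, solvable in polynomial time by the known algorithm. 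This strictly extends Theorem~\ref{thm:pureeasy}, which is the special case in which $f$ is already pure-up; the new content is the treatment of non-pure rebalancing signatures such as $f_{40}$.

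Two further points need care. First, the balance argument above uses that \emph{every} signature of $\mathcal{F}$ is 0-rebalancing, so that every $\widehat{f_v}$ is pure-up; mixing 0- and 1-rebalancing signatures is not covered, matching the ``or'' in the hypothesis. Second, one must check that the type condition, originally formulated for pure signatures in Theorem~\ref{thm:puredichotomy}, transfers correctly to the slices of a rebalancing signature, so that the appeal to the pure-case machinery is legitimate. I would finish by carrying out the completion explicitly for $f_{40}$ to confirm it yields a tractable signature, and by noting that this construction has no analogue for $f_{56}$ — which is not rebalancing, so its support admits no such pure slicing — which is exactly why the complexity of $\#\eo(f_{56})$ is left open.
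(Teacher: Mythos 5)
There is a genuine gap, and it sits exactly where you predicted ``the real work'' would lie: the completion $f\mapsto\widehat f$ that your argument hinges on does not exist for the non-pure rebalancing signatures that are the whole point of the theorem. If $\widehat f$ is to lie in $\mathscr{A}$ or $\mathscr{P}$, its support must be an affine subspace; if moreover $\widehat f$ agrees with $f$ on $\eoe$, that subspace contains $\su(f)$ and hence all of $\mathrm{Span}(f)$. For $f_{40}$ every column of $H_2$ and of $H_4$ contains an even number of $1$'s, so the XOR of all five support rows is the all-zero string and $\mathbf{0}\in\mathrm{Span}(f_{40})$; the XOR of any three support rows has $22$ ones out of $40$. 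Hence $\mathrm{Span}(f_{40})$ is contained in neither $\eog$ nor $\eol$, and your requirements (i)--(iii) are jointly unsatisfiable, so the local-$\geq$/global-balance argument (which is precisely the support-augmentation trick underlying the pure-up case) can never be triggered. Your proof therefore collapses back to Theorem~\ref{thm:pureeasy} and the planned sanity check on $f_{40}$ would fail rather than confirm the construction. (The intermediate claim that the slices $f^{x=0,\psi(x)=1}$ are pure-up is also unsupported---Definition~\ref{def:reba} only guarantees they are again \ba[0]---but even granting it, purity of each slice says nothing about the span of their union.)

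The deeper point is that no instance-independent replacement of $f$ can work here: which restriction of a rebalancing signature is ``active'' depends on how the instance closes up around it. The paper's proof is correspondingly instance-dependent. It shows that \ba[0] is preserved under tensoring and $\neq_2$ self-loops (Lemmas~\ref{lem:ba-tensor-closed}--\ref{lem:bagad}), deduces that the gadget obtained by deleting any vertex $s$ from the instance is \ba[1] on $E(s)$ (Lemma~\ref{outside-1-rebalancing}), and then superposes the first-level mapping $\psi$ of $f_s$ (``$x=0$ forces $\psi(x)=1$'') with the first-level mapping $\theta$ of the outside signature (``$y=1$ forces $\theta(y)=0$'') into a digraph in which every vertex has out-degree one. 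Any directed cycle there forces a pair of variables of $f_s$ to take opposite values in every contributing assignment; iterating produces a pairing $P_s$, and $f_s$ may be replaced by $f_s|_{\eom[P_s]}$, which the type condition (quantified over \emph{all} pairings precisely because $P_s$ varies with the instance) places in $\mathscr{A}$ or $\mathscr{P}$. This ``passive receiving--sending'' step is the ingredient missing from your proposal.
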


We note that very recently, and independently of our work, \cite{shao2024eulerian} proposed a polynomial-time algorithm for 01-weighted $\#\eo$ parameterized by $\mathcal{F}$ consisting of $\delta_1$-affine (resp. $\delta_0$-affine) signatures and affine signatures, termed the chain reaction algorithm. Moreover, \cite{shao2024eulerian} establishes a novel connection between the base level of $\delta_1$-affine and $\delta_0$-affine signatures (the $\delta_1$-affine and $\delta_0$-affine kernels) and the Hadamard code, while proving \#P-hardness when $\delta_1$-affine and $\delta_0$-affine signatures are mixed. Although these results do not constitute a complete dichotomy and primarily apply to 0-1 weighted signatures, they capture some of the significant components of our dichotomy and prove highly valuable, as demonstrated in this work.

Furthermore, building upon this article and utilizing our dichotomy for pure signatures \cite{meng2025fpnp}, a complete dichotomy for complex-weighted $\#\eo$ has been established. However, this dichotomy is an $\text{FP}^\text{NP}$ versus \#P dichotomy, indicating that $\#\eo$ problems are either in $\text{FP}^\text{NP}$ or \#P-hard. While this dichotomy provides valuable insights and partially clarifies the complexity classification, it does not yield new polynomial-time algorithms for $\#\eo$ and consequently does not include our algorithm for rebalancing signatures.

\subsection{Our methods}
As presented in Section~\ref{section:resulteasy}, this article contains three parts. In each part, our main objective is to characterize different classes of signature sets. Indeed, the principal difficulty lies in providing descriptions of tractable cases that are sufficiently precise to capture all tractable scenarios. Following this characterization, we separately prove both tractability and \#P-hardness results.

In each part, we perform detailed characterizations of signature sets, including comprehensive classifications and rigorous examinations of their properties. Specifically, we focus on three key aspects: the closure property, the properties of the affine span, and signature reducibility.

We further summarize that each tractable case in this work must satisfy two requirements: the support requirement and the type requirement. The support requirement specifies that each signature's support must possess a particular form. This specific form enables a receiving-sending mechanism that is necessary for implementing the corresponding algorithm. The type requirement states that after applying the receiving-sending mechanism to an instance, all resulting signatures must form a tractable case in $\#\csp$. A detailed explanation appears in Section~\ref{resultm}.

For the hardness results, we prove \#P-hardness based on the dichotomies of both the six-vertex model and $\#\csp$, primarily using the gadget construction method introduced in Section~\ref{section:GC}.

\subsection{Organization}

In Section \ref{section:preli}, we introduce preliminaries needed in our proof. We present the detailed version of our main theorem in Section \ref{resultm}. In Sections \ref{section: arity4set}, \ref{section:allup} and \ref{balancing}, we give the proofs of Theorem \ref{thm:4+2}, \ref{thm:pureeasy} and \ref{thm:rebaeasy} respectively. We conclude our result in Section \ref{sec:ccls}. 

\section{Preliminaries}\label{section:preli}

\subsection{Definitions and notations}

For a $01$-string $\alpha$, we use $\overline\alpha$ to denote the \textit{dual string} of $\alpha$, which is obtained by flipping every bit of $\alpha$. For $s\in \{0,1\}$, we use $\#_s(\alpha)$ to denote the number of $s$'s in $\alpha$. Under this notation we have:

$$\eoe=\{\alpha \mid \#_1(\alpha)=\#_0(\alpha)\}, \quad \eog=\{\alpha \mid \#_1(\alpha)\geq \#_0(\alpha)\}.$$

A signature $f$ is called a \textit{$\eoe$ signature} (precisely an $\eo$ signature) if $\su(f) \subseteq \eoe$. The term ``$\eo$ signature'' was originally introduced in \cite{cai2020beyond} as shorthand for \textit{Eulerian Orientation signature}.

If $\su(f) \subseteq \eog$, we call $f$ an \textit{$\eog$ signature}. Other similar notations are defined analogously by replacing ``$\geq$'' with ``$\leq$'', ``$<$'', or ``$>$'' in both names and defining formulae.

We use $[f_0, f_1, \ldots, f_r]$ to denote a \textit{symmetric signature} $f$ of arity $r$, where $f_i$ gives the evaluation for all input strings with Hamming weight $i$. 

By exchanging the values of the symbols in the Boolean domain, we obtain the \textit{dual signature} $\widetilde f=[\widetilde f_0, \widetilde f_1, \ldots, \widetilde f_r]=[f_r, f_{r-1}, \ldots, f_0]$. A signature $f$ is called \textit{self-dual} if $f=\widetilde f$. The concepts of "dual" and "self-dual" can be naturally generalized to asymmetric signatures, sets of signatures, Boolean domain counting problems, and tractable classes in dichotomy theorems.

We use $[\varphi]$ to denote the truth value of a logical statement $\varphi$. It takes values from $\{\text{True}, \text{False}\}$, or equivalently the isomorphic integer set $\{1,0\}$.
For example, $[x \neq y]$ equals True (or 1) when $x\neq y$, and False (or 0) otherwise. Thus both $[x\neq y]$ and $\neq_2(x,y)$ represent the value of the binary function $[0,1,0]$ evaluated at $(x,y)$.

Several sets of frequently used signatures are defined as follows. $\Delta_0$ represents the unary signature $[1,0]$, with its dual $\Delta_1=[0,1]$. The binary disequality signature $[0,1,0]$ is denoted by $\neq_2$ and serves as an example of self-dual signatures. We use $\mathcal{EQ}$ to denote the set of equality signatures where $\mathcal{EQ}=\{=_1,=_2,\ldots,=_r,\ldots\}$, with $=_r$ denoting an arity-$r$ signature $[1,0,\ldots,0,1]$. In other words, $=_r$ evaluates to 1 when the input is all 0s or all 1s, and 0 otherwise.

A \textit{disequality signature} of arity $2d$, denoted by $\neq_{2d}$, evaluates to 1 when $x_1=x_2=\cdots=x_d\neq x_{d+1}=x_{d+2}=\cdots=x_{2d}$, and 0 otherwise. We define $\mathcal{DEQ}=\{\neq_2,\neq_4,\ldots,\neq_{2n},\ldots\}$ as the set of all disequality signatures. Moreover, $\mathcal{DEQ}$ is closed under variable permutation. That is, for any $\pi\in S_{2d}$, a signature $f$ that evaluates to 1 when $x_{\pi(1)}=x_{\pi(2)}=\cdots=x_{\pi(d)}\neq x_{\pi(d+1)}=x_{\pi(d+2)}=\cdots=x_{\pi(2d)}$ (and 0 otherwise) is also considered a disequality signature.

An alternative definition states that a signature $f$ is a disequality signature if: (1) it is an $\eo$ signature of arity $2d$, (2) there exists $\alpha\in\eoe$ such that $\su(f)\subseteq\{\alpha,\overline\alpha\}$, and (3) $f(\alpha)=f(\overline\alpha)=1$. The equivalence of these definitions can be readily verified.
Similarly, a signature $f$ is called a \textit{generalized disequality signature} (denoted by $\neq_{2d}^{a,b}$) if the signature satisfies (1), (2) and (3') $f(\alpha)=a$, $f(\overline\alpha)=b$.

We use $\otimes$ to denote tensor multiplication. When no ambiguity arises, we sometimes use $f$ to denote the signature set $\{f\}$. We use $\leq_T$ and $\equiv_T$ to respectively denote polynomial-time Turing reductions and equivalences. 
\subsection{Relationships between counting problems}\label{section:relation}

 We say a framework $A$ is more ``general'' than $B$, denoted as $B\preceq A$, if each problem in $B$ can be transformed into a problem in $A$ in polynomial time. The relationship between the counting frameworks are concluded in the following lemma.
\begin{lemma}
    $$\#\textsf{GH}\preceq \#\csp\preceq \#\eo\preceq \hol$$
    \label{lem:relation}
\end{lemma}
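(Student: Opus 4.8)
The plan is to establish the chain of reductions from left to right, each inclusion being a local simulation argument in which a problem instance of the less general framework is rewritten as a signature grid (or CSP instance) of the more general one with the same partition function.

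\textbf{$\#\textsf{GH}\preceq\#\csp$.} Given an instance $G=(V,E)$ of $\#\textsf{GH}(f)$ for a binary signature $f$, I would build a $\#\csp$ instance whose variables are the vertices $V$ and which contains, for each directed edge $(u,v)\in E$, one clause applying $f$ to the ordered pair $(x_u,x_v)$. The product over clauses is exactly $\prod_{(u,v)\in E}f(\sigma(u),\sigma(v))$, and summing over all assignments $\sigma:V\to\{0,1\}$ reproduces $Z_G$. Parallel edges and self-loops cause no difficulty since $\#\csp$ permits repeated clauses and repeated variables in a clause. This step is essentially bookkeeping.

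\textbf{$\#\csp\preceq\#\eo$.} This is the step I expect to be the main obstacle, because $\#\eo$ only allows $\eo$ signatures (support in $\hw{=}$) on Eulerian graphs, whereas a general $\#\csp$ instance has no such balance constraint. The standard device is the disequality trick: encode each $\#\csp$ variable $x_i$ not by a single edge value but by a bundle of edges all forced to carry the same bit, using the disequality signatures $\neq_{2d}\in\mathcal{DEQ}$ as ``wiring'' nodes. Concretely, for a $\#\csp$ instance with variables $x_1,\dots,x_n$ and clauses $(f,x_{i_1},\dots,x_{i_k})$, I would, for each occurrence of a variable, allocate an edge-end, collect all edge-ends belonging to one variable at a single vertex labelled by an appropriate disequality (or chain of disequalities) so that a consistent global orientation corresponds exactly to a choice of the bit for that variable together with its complement on the ``other side'', and place each clause signature $f$ (which is already an $\eo$ signature when $f\in\mathcal F$ — here one uses that $\#\eo$ is parameterized by $\eo$ signatures, so the claim is about the framework containing $\#\csp$ restricted to such $\mathcal F$, or one first symmetrizes/pads $f$ with fresh balanced dummy edges) on the corresponding edge-ends. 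One must check that the resulting graph is Eulerian (each node has even degree by construction, since disequality arity is even and $\eo$-signature arity is even) and that Eulerian orientations of it biject, weight-preservingly, with assignments to $x_1,\dots,x_n$. The subtlety is making the variable-consistency gadget correct when a variable appears in several clauses and possibly with multiplicity inside one clause; this is handled by a tree (or path) of $\neq_2$'s, exploiting that $\mathcal{DEQ}$ is closed under variable permutation as noted in the preliminaries.

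\textbf{$\#\eo\preceq\hol$.} Finally, a $\#\eo$ instance is already a signature grid $\Omega(G,\pi)$ in the sense of Definition~\ref{def:holant}: the only difference is that the $\hol$ partition function sums over \emph{all} edge assignments $\sigma:E(G)\to\{0,1\}$ rather than only over Eulerian orientations. But since every $f_v$ is an $\eo$ signature, any assignment that is not an Eulerian orientation makes some factor $f_v(\sigma|_{E(v)})$ vanish, so the extra terms contribute $0$ and $\text{Z}(I)=\hol_\Omega$ verbatim. Hence the identity map on instances is the required reduction. Assembling the three inclusions by transitivity of $\preceq$ gives $\#\textsf{GH}\preceq\#\csp\preceq\#\eo\preceq\hol$, proving the lemma.
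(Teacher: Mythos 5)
Your overall strategy matches the paper's, which assembles the chain directly from its cited lemmas: $\#\textsf{GH}(f)\equiv_T\#\csp(\{f\})$, $\#\csp(\mathcal F)\equiv_T\#\eo(\pi(\mathcal F))$ (Lemma~\ref{thm:csp=eom}), and $\#\eo(\mathcal F)\equiv_T\hol(\neq_2\mid\mathcal F)$ (Lemma~\ref{lem:eo=hol}). Your first and third steps are exactly right; in particular the observation that $\eo$ signatures annihilate every non-Eulerian edge assignment, so the identity map on instances realizes $\#\eo\preceq\hol$, is the standard argument.

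The one soft spot is the middle step, precisely where you hedge. Of your two alternatives for handling a non-$\eo$ clause signature $f$, the first (reading the claim as restricted to $\mathcal F$ already consisting of $\eo$ signatures) would not prove the lemma, since $\#\csp\preceq\#\eo$ is asserted for arbitrary $\#\csp$. The second (``pad $f$ with fresh balanced dummy edges'') is the right idea but, as literally stated, fails: padding $=_3$ (support $\{000,111\}$) with dummy variables that are merely balanced among themselves still leaves the two support strings with different Hamming weights, so the result is not an $\eo$ signature. The padding must tie each dummy to the \emph{complement of a specific original variable}: replace $f(x_1,\ldots,x_d)$ by $\pi_f(x_1,y_1,\ldots,x_d,y_d)=f(x_1,\ldots,x_d)\cdot[x_1\neq y_1]\cdots[x_d\neq y_d]$, i.e.\ the map $\pi$ already defined in the preliminaries. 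Every support string of $\pi_f$ then has exactly half its bits equal to $1$, so $\pi_f$ is an $\eo$ signature of even arity, and the partner ports $y_j$ plug into your disequality-based variable gadgets, whose two sides carry the two polarities of the variable and absorb the polarity flip contributed by each edge's implicit $\neq_2$. With that substitution your construction becomes exactly the reduction $\#\csp(\mathcal F)\leq_T\#\eo(\pi(\mathcal F))$ that the paper invokes.
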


Furthermore, $\hol$ constitutes a strictly more expressive framework than $\#\textsf{GH}$, as evidenced by the fact that while counting perfect matchings cannot be represented as a $\#\textsf{GH}$ problem \cite{cai2022perfect}, it can be formulated as a $\hol$ problem.

Lemma \ref{lem:relation} follows directly from Lemmas \ref{lem:ghcsp}--\ref{thm:csp=eom}.
\begin{lemma}
    $\#\textsf{GH}(f)\equiv_T\#\csp(\{f\}).$
    \label{lem:ghcsp}
\end{lemma}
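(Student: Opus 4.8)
\textbf{Proof proposal for Lemma~\ref{lem:ghcsp}.}

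The plan is to establish the two polynomial-time Turing reductions that together give the claimed equivalence $\#\textsf{GH}(f)\equiv_T\#\csp(\{f\})$ for a binary signature $f$. Both directions are in fact direct translations of the respective definitions (Definition~\ref{def:GH} and Definition~\ref{def:CSP}), so no clever gadgetry should be needed; the work lies in checking that the obvious mapping really preserves the partition function and is computable in polynomial time.

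For the direction $\#\textsf{GH}(f)\leq_T\#\csp(\{f\})$, given an input directed graph $G=(V,E)$ to $\#\textsf{GH}(f)$, I would build a $\#\csp(\{f\})$-instance whose variable set is $V$ itself and whose clause set contains, for each directed edge $(u,v)\in E$, the clause $(f,u,v)$ (keeping the orientation of the edge as the order of the variable sequence, which matters since $f$ need not be symmetric). A vertex assignment $\sigma:V\to\{0,1\}$ in the $\#\textsf{GH}$ problem is literally the same object as a variable assignment in the constructed $\#\csp$ instance, and under this identification
$$\prod_{(u,v)\in E} f(\sigma(u),\sigma(v))=\prod_{(f,u,v)\in C}f(\sigma(u),\sigma(v)),$$
so the two partition functions agree term by term; summing over assignments gives $Z_G$ equal to the $\#\csp$ output. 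The construction is clearly linear-time, so this is even a many-one reduction.

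For the converse $\#\csp(\{f\})\leq_T\#\textsf{GH}(f)$, given a $\#\csp(\{f\})$-instance with variables $\{x_1,\dots,x_n\}$ and clauses $C$ (each clause being some $(f,x_i,x_j)$ since $\operatorname{arity}(f)=2$), I would form the directed multigraph $G=(V,E)$ with $V=\{x_1,\dots,x_n\}$ and one directed edge $(x_i,x_j)$ for each clause $(f,x_i,x_j)\in C$; parallel edges and self-loops are permitted (which is exactly why the lemma's footnote convention that graphs are multigraphs is being used here), so clauses in which a variable repeats or clauses that repeat are handled without issue. Again the assignment sets coincide and the products match clause-by-clause/edge-by-edge, so the $\#\csp$ output equals $Z_G$. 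Combining the two reductions yields the Turing equivalence.

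I do not expect a genuine obstacle here — the statement is essentially a definitional unfolding. The only points requiring care, and the ones I would be explicit about, are: (i) the order of the two variables in each clause must be preserved as the orientation of the corresponding edge, since $f$ is not assumed symmetric; (ii) one must invoke the multigraph convention so that repeated clauses become parallel edges and clauses with a repeated variable become self-loops, with $f(\sigma(u),\sigma(u))$ evaluated accordingly; and (iii) both directions are polynomial-time (indeed linear-time many-one) reductions, so a single oracle call suffices in each direction.
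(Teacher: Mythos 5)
Your proof is correct and is exactly the definitional unfolding the paper relies on — the paper in fact states this lemma without proof, treating it as immediate from Definitions~\ref{def:GH} and~\ref{def:CSP}. Your explicit attention to edge orientation versus clause variable order and to the multigraph convention for repeated clauses and repeated variables covers the only points that need checking.
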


\begin{lemma}[\cite{cai2020beyond}]
    $\#\eo(\mathcal{F})\equiv_T\hol(\neq_2\mid\mathcal{F})$.
    \label{lem:eo=hol}
\end{lemma}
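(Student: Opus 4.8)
The plan is to prove the equivalence by exhibiting two polynomial-time reductions, each realized by a local substitution of gadgets, and to observe that the corresponding partition functions are literally equal, so in fact both reductions are of the simplest (instance-to-instance) kind. The key observation is that a vertex carrying $\neq_2$ has arity $2$, and forcing its two incident edges to receive opposite bits is exactly the same as choosing an \emph{orientation} of an undirected edge (reading a tail as $1$ and a head as $0$, matching the convention fixed before Definition~\ref{defeo}); moreover an $\eo$ signature $f_v$ contributes a nonzero factor only when the bits it sees are balanced, which is precisely the Eulerian condition at $v$.

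For $\#\eo(\mathcal{F})\leq_T\hol(\neq_2\mid\mathcal{F})$, given an $\eo$-signature grid $\Omega(G,\pi)$ I would build a bipartite grid $\Omega'(G',\pi')$ by subdividing every edge $e=(u,v)$ of $G$ with a new degree-$2$ vertex $w_e$ assigned $\neq_2$; the original vertices keep their signatures from $\mathcal{F}$, and the natural bijection between $E(v)$ in $G$ and the edges of $G'$ at $v$ transports the linear order $\pi(v)$ (the order at each $w_e$ is irrelevant since $\neq_2$ is symmetric). A configuration of $G'$ assigns a bit to each half-edge of $G$; the $\neq_2$-vertices restrict these configurations to encodings of orientations of $G$, and the product $\prod_v f_v$ then vanishes unless the orientation is Eulerian, so the remaining terms match term-by-term with the sum in Definition~\ref{defeo}. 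Hence $\hol_{\Omega'}=\#\eo_\Omega$, and $\Omega'$ is clearly constructible in polynomial time.

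For the reverse reduction, given a bipartite grid for $\hol(\neq_2\mid\mathcal{F})$ with $\neq_2$ on the left side $U$ and signatures of $\mathcal{F}$ on the right side $V$, every $U$-vertex has degree $2$; I would contract each such vertex $w$, with neighbours $v_1,v_2\in V$ (possibly $v_1=v_2$), into a single edge $\{v_1,v_2\}$, obtaining a multigraph $G$ on vertex set $V$ with a signature grid inherited from the $V$-side (again transporting linear orders along the obvious bijection). If some vertex of $G$ has odd degree, then its signature has odd arity, hence is the all-zero $\eo$ signature, so $\hol_\Omega=0$ and I simply output $0$; otherwise $G$ has all even degrees, is therefore a legitimate $\#\eo(\mathcal{F})$ instance, and the correspondence above gives $\#\eo_\Omega=\hol_\Omega$. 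Self-loops and parallel edges (arising from $U$-vertices with coincident neighbours, or from several $U$-vertices joining the same pair) and arity-$0$ signatures cause no difficulty, since multigraphs are permitted and the latter contribute only constant factors. The only points needing care are thus purely bookkeeping — faithfully transporting the edge orderings under the gadget substitutions, and recognizing and discarding the non-Eulerian (all-zero) instances in the backward direction — so there is no genuine mathematical obstacle.
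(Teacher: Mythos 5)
Your proof is correct and is exactly the standard argument behind this lemma: the paper itself does not prove it but only cites \cite{cai2020beyond}, and the edge-subdivision/vertex-contraction correspondence you describe (with $\neq_2$ encoding an orientation of each edge and the $\eo$ signatures enforcing the Eulerian condition) is the intended one, implicitly used again in Section~\ref{section:GC} when $\mathcal{F}$-gates for $\#\eo$ are defined by inserting $\neq_2$ vertices on internal edges. Your handling of the degenerate cases in the reverse direction (odd-degree vertices forcing the all-zero odd-arity $\eo$ signature, self-loops, parallel edges, arity-$0$ constants) is sound and complete.
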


\begin{lemma}[\cite{shao2020realholant}]
    $\hol(\neq_2\mid\mathcal{F})\equiv_T \hol(Z^{-1}\mathcal{F})$
\end{lemma}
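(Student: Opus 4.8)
The plan is to prove the statement via the theory of holographic transformations (Valiant's Holant theorem). Recall that $Z=\frac{1}{\sqrt{2}}\begin{pmatrix}1&1\\\mathfrak{i}&-\mathfrak{i}\end{pmatrix}$ is invertible, since $\det Z=-\mathfrak{i}\neq 0$. The core of the argument is the observation that $Z$ converts the binary disequality $\neq_2$ into the binary equality $=_2$: writing $\neq_2$ and $=_2$ as the $2\times 2$ matrices $\begin{pmatrix}0&1\\1&0\end{pmatrix}$ and $\begin{pmatrix}1&0\\0&1\end{pmatrix}$, a direct computation gives $Z\begin{pmatrix}0&1\\1&0\end{pmatrix}Z^{\mathsf{T}}=\begin{pmatrix}1&0\\0&1\end{pmatrix}$, where the $\frac{1}{\sqrt{2}}$ normalization is exactly what makes the overall scalar equal to $1$. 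Equivalently, applying $Z$ to both edges of the signature $\neq_2$ yields $=_2$.

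Next I would invoke the Holant theorem: for any invertible $2\times 2$ matrix $T$, one has $\hol(\mathcal{A}\mid\mathcal{B})\equiv_T\hol(\mathcal{A}T\mid T^{-1}\mathcal{B})$, where $\mathcal{A}T$ (resp. $T^{-1}\mathcal{B}$) denotes the result of applying $T$ (resp. $T^{-1}$) along every edge of every signature of $\mathcal{A}$ (resp. $\mathcal{B}$). Taking $\mathcal{A}=\{\neq_2\}$, $\mathcal{B}=\mathcal{F}$, and $T=Z$, the identity above gives $\hol(\neq_2\mid\mathcal{F})\equiv_T\hol(=_2\mid Z^{-1}\mathcal{F})$.

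It then remains to eliminate the equality on the left side, i.e.\ to show $\hol(=_2\mid\mathcal{G})\equiv_T\hol(\mathcal{G})$ for any signature set $\mathcal{G}$ (here $\mathcal{G}=Z^{-1}\mathcal{F}$). For one direction, in an instance of $\hol(=_2\mid\mathcal{G})$ every left vertex has arity $2$ and carries $=_2$, which forces its two incident edges to take the same value; contracting each such vertex glues its two edges into a single edge between its two neighbours and produces a general (non-bipartite) signature grid over $\mathcal{G}$ with the same partition function, so $\hol(=_2\mid\mathcal{G})\leq_T\hol(\mathcal{G})$. Conversely, given any signature grid over $\mathcal{G}$, subdividing each edge by a fresh degree-$2$ vertex labelled $=_2$ yields an equivalent bipartite instance of $\hol(=_2\mid\mathcal{G})$, so $\hol(\mathcal{G})\leq_T\hol(=_2\mid\mathcal{G})$. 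Combining the three displayed equivalences gives $\hol(\neq_2\mid\mathcal{F})\equiv_T\hol(Z^{-1}\mathcal{F})$.

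I expect the main obstacle to be purely bookkeeping: fixing a consistent convention for how $T$ acts on a signature (so that the transpose is placed correctly and the identity $Z(\neq_2)Z^{\mathsf{T}}={=_2}$ comes out with $Z$ rather than $Z^{-1}$), and verifying that the merge/subdivision step behaves correctly for multigraphs, in particular in the presence of self-loops and parallel edges. Once the conventions are pinned down, each step is routine.
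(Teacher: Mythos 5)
The paper does not prove this lemma at all --- it cites \cite{shao2020realholant} and explicitly defers to the standard theory of holographic transformations in \cite{cai2017complexity} --- and your argument is precisely that standard proof: the identity $Z M_{\neq_2} Z^{\mathsf T}=I$ (equivalently $M_{\neq_2}=Z^{-1}(Z^{-1})^{\mathsf T}$, so splitting each $\neq_2$ edge weight and absorbing one factor of $Z^{-1}$ into each adjacent $\mathcal{F}$-signature turns the edges into $=_2$ and $\mathcal{F}$ into $Z^{-1}\mathcal{F}$), Valiant's Holant theorem, and the contraction/subdivision equivalence $\hol(=_2\mid\mathcal{G})\equiv_T\hol(\mathcal{G})$. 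The argument is correct, and the transpose-placement issue you flag is real but harmless: your computed identity is exactly the one that yields $Z^{-1}\mathcal{F}$ (rather than $Z\mathcal{F}$) once the row/column convention is fixed.
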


\begin{lemma}[{\cite[Theorem 6.1]{cai2020beyond}}]  \label{thm:csp=eom}
    $\#\csp(\mathcal{F})\equiv_T\#\eo(\pi(\mathcal{F})).$
\end{lemma}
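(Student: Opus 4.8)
The plan is to prove Lemma~\ref{thm:csp=eom} ($\#\csp(\mathcal{F})\equiv_T\#\eo(\pi(\mathcal{F}))$) by exhibiting the map $\pi$ explicitly and then checking both directions of the Turing reduction by a local gadget argument. For a signature $f$ of arity $k$, define $\pi(f)$ to be the arity-$2k$ signature obtained by replacing each variable $x_i$ with a pair $(x_i, x_i')$ and setting $\pi(f)(x_1,x_1',\dots,x_k,x_k') = f(x_1,\dots,x_k)$ whenever $x_i' = \overline{x_i}$ for every $i$, and $0$ otherwise. Because each block contributes exactly one $0$ and one $1$, $\pi(f)$ has its support inside $\eoe$, so it is a legitimate $\eo$ signature; and $\pi(\mathcal{F}) = \{\pi(f) : f\in\mathcal{F}\}$ is the target signature set. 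The intuition is that a $\#\csp$ variable that is read $d$ times becomes, on the $\#\eo$ side, a vertex (or a short path of $\neq_2$'s) whose Eulerian-orientation constraint propagates a single bit to all $d$ occurrences, with the ``primed'' halves carrying the complementary bit so that the disequality bookkeeping is automatic.

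First I would handle the direction $\#\csp(\mathcal{F})\leq_T\#\eo(\pi(\mathcal{F}))$. Given a $\#\csp(\mathcal{F})$ instance with variables $x_1,\dots,x_n$ and clauses $C$, build an $\eo$-signature grid: for each clause $(f, x_{i_1},\dots,x_{i_k})$ place a vertex with signature $\pi(f)$; its $2k$ edges are grouped into $k$ pairs, one per occurrence. For each variable $x_j$ occurring $d_j$ times, we must connect its $d_j$ occurrence-pairs so that all of them receive the same bit for the ``unprimed'' coordinate. This is exactly what a tree/cycle of $\neq_2$ edges does: using $\neq_2 = [0,1,0]$ one can wire the $d_j$ primed edges and $d_j$ unprimed edges together so the unique consistent Eulerian orientation forces all unprimed copies equal to some $b\in\{0,1\}$ and all primed copies equal to $\overline b$. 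Summing over Eulerian orientations of the whole grid then reproduces $\sum_{x_1,\dots,x_n}\prod_{\text{clauses}} f(\cdots)$ term by term, possibly up to a fixed power of $2$ coming from internal $\neq_2$-only components that can be divided out. Since $\neq_2\in\pi(\mathcal{F})$ is not guaranteed, I would instead realize the wiring using $\pi(=_{d_j})$ (note $=_{d_j}\in\mathcal{EQ}$ but may not be in $\mathcal{F}$); the cleaner route, and the one I expect the authors take, is to observe $\#\eo(\pi(\mathcal{F})) \equiv_T \hol(\neq_2\mid\pi(\mathcal{F}))$ by Lemma~\ref{lem:eo=hol}, so $\neq_2$ is freely available on the right-hand side of the bipartition and the wiring is legitimate.

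For the reverse direction $\#\eo(\pi(\mathcal{F}))\leq_T\#\csp(\mathcal{F})$, take an $\eo$-signature grid over $\pi(\mathcal{F})$. Using Lemma~\ref{lem:eo=hol} again, view it as a $\hol(\neq_2\mid\pi(\mathcal{F}))$ instance. Each $\pi(f)$-vertex forces, on its support, the primed half of every block to be the complement of the unprimed half; the interposed $\neq_2$'s glue an unprimed edge of one block to a primed edge of another (or an unprimed to an unprimed, etc.). Trace these forced equalities: the edges partition into equivalence classes, each class taking a single free bit (its ``dual'' class taking the complement). Introduce one $\#\csp$ variable per free class; each $\pi(f)$-vertex then becomes a clause applying $f$ to the appropriate tuple of these variables (with a flip where a block was attached via its primed side, which is harmless since $f$'s argument list is just relabeled). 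One must also check consistency: if the gluing ever forces a class equal to its own dual, the whole grid evaluates to $0$, which is detectable in polynomial time; otherwise the partition function of the grid equals the $\#\csp$ value up to an explicit factor of $2^{(\#\text{classes}) - n}$ absorbed in the bookkeeping. Carrying out this tracing carefully — making sure every edge is accounted for, that the equivalence classes and their duals are well-defined, and that the flip-parity is consistent around cycles — is the part that requires genuine care.

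The main obstacle is precisely this consistency/parity analysis in the reverse direction: one is effectively solving a $2$-coloring (affine-over-$\mathbb{F}_2$) system determined by the $\neq_2$ gadgets and the block-complementation constraints of the $\pi(f)$'s, and must argue that its solution structure matches $\#\csp$ assignments exactly, with no loss and with a uniformly computable normalizing constant. A secondary subtlety is making sure the construction respects arities (a degree-$d$ variable and a degree-$2d$ vertex on the $\eo$ side must match up) and that self-loops and parallel edges — explicitly allowed here — are handled, since a variable appearing twice in the same clause or a $\pi(f)$ with two blocks wired directly together are exactly these degenerate cases; these should be dealt with by the same gadget, but they are the natural spots for an off-by-one error. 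I would cite or reuse the Eulerian-orientation-to-disequality correspondence of Lemma~\ref{lem:eo=hol} to keep the combinatorics on the $\hol(\neq_2\mid\cdot)$ side, where it is cleanest, rather than arguing directly about Eulerian orientations of multigraphs.
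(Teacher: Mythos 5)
Your forward direction ($\#\csp(\mathcal{F})\leq_T\#\eo(\pi(\mathcal{F}))$) is essentially the intended argument: place a $\pi(f)$ vertex per clause and wire the $d$ occurrence-pairs of each variable into a cycle alternating primed/unprimed ends, so the disequalities on edges propagate a single bit; this is exactly the observation that $g\in\tau(\pi(g))$, and it works inside the $\#\eo$ grid itself (no appeal to Lemma~\ref{lem:eo=hol} is even needed, since every Eulerian edge already acts as a $\neq_2$).

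The reverse direction, however, has a genuine gap at the point you dismiss as ``harmless.'' When a block of $\pi(f)$ is attached to its equivalence class via its \emph{primed} side, the resulting clause applies $f$ with that argument \emph{negated}, not merely relabeled: the same class may be read unprimed by another block, so you cannot globally choose the dual class as the representative. What your tracing actually produces is an instance of $\#\csp(\tau(\pi(\mathcal{F})))$, i.e.\ $\#\csp$ over $\mathcal{F}$ together with all variable-negated variants of its signatures --- equivalently, $\#\csp(\mathcal{F})$ with the binary disequality freely available. Since $\neq_2$ is not realizable from an arbitrary $\mathcal{F}$ by gadgets, collapsing back to plain $\#\csp(\mathcal{F})$ requires the separate (known but nontrivial) equivalence between $\#\csp$ and $\#\csp$ with free binary disequality; this is precisely the second ingredient the paper invokes from \cite{cai2020beyond} alongside the $\pi$/$\tau$ correspondence, and your proposal neither supplies it nor cites it. With that lemma added, the rest of your consistency/parity bookkeeping (odd complementation cycles forcing the value $0$, one CSP variable per class) is sound.
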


We now clarify the notation used in the preceding lemmas, which will also appear throughout this article. Here, $Z^{-1}=\frac{1}{\sqrt{2}}\left(\begin{matrix}
1 & -\mathfrak{i} \\
1 & \mathfrak{i}
\end{matrix}\right)$, and $Z^{-1}\mathcal{F}$ denotes the signature set derived from $\mathcal{F}$ via the holographic transformation defined by $Z^{-1}$ \cite{valiant2008holographic,cai2007valiant}. 

While we omit both the formal definition of holographic transformations and the proof of this lemma since they are not central to our discussion, we strongly recommend readers consult \cite{cai2017complexity}[Section 1.3.2] for deeper understanding, as holographic transformations constitute a powerful technique in counting complexity theory.

Let $f$ be an $\eo$ signature of arity $2d$ with variable set $\text{Var}(f)=\{x_1, x_2, \cdots, x_{2d}\}$. For any pairing $P=\{\{x_{i_1},x_{i_2}\},\{x_{i_3},x_{i_4}\},\cdots,\{x_{i_{2d-1}},x_{i_{2d}}\}\}$ of $\text{Var}(f)$, we define $\eom[P]$ as the subset of $\{0,1\}^{2d}$ corresponding to $P$:

$$\eom[P]=\{\alpha \mid \alpha \in \{0,1\}^{2d}, \alpha_{i_1} \neq \alpha_{i_2}, \cdots, \alpha_{i_{2d-1}} \neq \alpha_{i_{2d}}\},$$
equivalently expressed as $\eom[P]=\su([x_{i_1}\neq x_{i_2}] \cdot [x_{i_3}\neq x_{i_4}] \cdots [x_{i_{2d-1}}\neq x_{i_{2d}}])$.

When $\su(f) \subseteq \eom[P]$, we call $f$ an \textit{$\eom[P]$ signature} relative to $P$. If such a pairing $P$ exists, we call $f$ a \textit{pairwise opposite signature} (identical to Definition 5.5 in \cite{cai2020beyond}), or simply an \textit{$\eom$ signature}.

For an $\eom[P]$ signature $f$ with $\eom[P]=\su([x_{i_1}\neq x_{i_2}] \cdot [x_{i_3}\neq x_{i_4}] \cdots [x_{i_{2d-1}}\neq x_{i_{2d}}])$, we define $\tau_f$ (or $\tau(f)$) as the set of arity-$d$ signatures satisfying:

$$ \tau_f = \{g\mid g(x_{j_1},x_{j_3},...,x_{j_{2d-1}})=f(x_{i_1},1-x_{i_1}, x_{i_3},1-x_{i_3},\ldots,x_{i_{2d-1}},1-x_{i_{2d-1}}),$$
$$x_{j_1}\in\{x_{i_1},x_{i_2}\},\cdots,x_{j_{2d-1}}\in\{x_{i_{2d-1}},x_{i_{2d}}\}\}.$$

Since $P$ allows swapping variables within pairs, we have $|\tau_f|\leq 2^d$.

The inverse mapping $\pi$ is defined for any arity-$d$ signature $g$ as:
$$\pi_g(x_1,y_1,x_2,y_2,\ldots, x_d, y_d)= g(x_1,x_2,\ldots,x_d) \cdot [x_1 \neq y_1] \cdots [x_d \neq y_d],$$
where $\pi_g$ (or $\pi(g)$) is clearly an $\eom[P]$ signature with $P=\{\{x_1,y_1\},\{x_2,y_2\},\cdots,\{x_d,y_d\}\}$. Extending this to signature sets, we define $\pi(\mathcal{F})=\{\pi(f)\mid f\in\mathcal{F}\}$. 

The set $\tau(\pi(g))$ contains $g$ and all signatures obtainable from $g$ by variable flips using binary disequality. This establishes an equivalence between $\#\eo$ and $\#\csp$ with free binary disequality. As established in \cite{cai2020beyond}, Lemma \ref{thm:csp=eom} comes from this observation and the complexity equivalence between $\#\csp$ and $\#\csp$ with free binary disequality.

\subsection{Gadget construction and signature matrix}\label{section:GC}

This subsection introduces two key concepts: gadget construction and signature matrices. Gadget construction serves as a fundamental reduction technique in complexity classification, while signature matrices offer an intuitive representation of signatures and reveal connections between gadget construction and matrix multiplication.

Given a signature set $\mathcal{F}$, an $\mathcal{F}$-gate resembles a signature grid $\Omega(G,\pi)$ from Definition~\ref{def:holant}, but with $G=(V,E,D)$ where $V$ represents vertices, $E$ denotes internal edges (each connecting two vertices) and $D$ denotes dangling edges (each connecting one vertex).
An $\mathcal{F}$-gate essentially defines a signature whose variables correspond to edges in $D$. For $|E|=n$ and $|D|=m$, with edges in $E$ representing variables $\{x_1,...,x_n\}$ and edges in $D$ representing $\{y_1,...,y_m\}$, the $\mathcal{F}$-gate's signature $f$ satisfies:

$$f(y_1,...,y_m)=\sum_{\sigma:E\rightarrow\{0,1\}}\prod_{v\in V}f_v(\hat{\sigma}|_{E(v)\cup D(v)})$$

where $(y_1,...,y_n)\in\{0,1\}^n$ assigns values to dangling edge variables, $\hat{\sigma}$ extends 
$\sigma$ with these assignments and $f_v$ denotes the signature at vertex $v$.

We say $f$ is realizable from $\mathcal{F}$ via gadget construction when $f$ is an $\mathcal{F}$-gate's signature. 
Crucially, when $f$ is realizable from $\mathcal{F}$, \cite[Lemma 1.3]{cai2017complexity} establishes:

$$\hol(\mathcal{F})\equiv_T\hol(\mathcal{F}\cup\{f\})$$

For $\#\eo$ problems, $\mathcal{F}$-gates have a modified definition. Since Lemma~\ref{lem:eo=hol} shows $\#\eo(\mathcal{F})\equiv_T\hol(\neq_2\mid\mathcal{F})$, we add $\neq_2$ vertices to internal edges
and treat the gadget as a standard $\hol$ gate.
Throughout this paper, $\mathcal{F}$-gates in $\#\eo$ contexts follow this definition unless stated otherwise.

Next, we introduce the matrix form of signatures, which provides an orderly way to enumerate all possible values of a signature. 
Let $f$ be an arbitrary signature mapping from $\{0,1\}^r$ to $\mathbb{C}$. First we fix an ordering of the variables as $x_1,x_2,\ldots,x_r$. The matrix form of $f$ (or signature matrix of $f$ \cite{cai2017complexity}) with parameter $l$ is a $2^l\times 2^{r-l}$ matrix for some integer $0\leq l\leq r$, where the values of $x_1,x_2,\ldots,x_l$ determine the row index and the values of $x_{l+1},x_{l+2},\ldots,x_r$ determine the column index. 
This matrix is denoted by $M(f)_{x_1x_2\cdots x_l,x_{l+1}x_{l+2}\cdots x_r}$. For convenience, we sometimes omit $(f)$ and write simply $M_{x_1\cdots x_l,x_{l+1}\cdots x_r}$, or use the abbreviated form $M_f$ when no ambiguity arises. For signatures with even arity, we typically choose $l=\frac{r}{2}$.

\begin{example}[Signature matrix]
    If $f$ is a binary signature and $f=(f_{00},f_{01},f_{10},f_{11})$, then $M_f=M_{x_1,x_2}=\left(\begin{matrix}
f_{00} & f_{01} \\
f_{10} & f_{11}
\end{matrix}\right)$.

If $g$ is a quaternary signature, then

$$M_g=M_{x_1x_2,x_3x_4}=\left(\begin{matrix}
g_{0000} & g_{0001} & g_{0010} & g_{0011}\\
g_{0100} & g_{0101} & g_{0110} & g_{0111}\\
g_{1000} & g_{1001} & g_{1010} & g_{1011}\\
g_{1100} & g_{1101} & g_{1110} & g_{1111}
\end{matrix}\right).$$

\end{example}

We now establish the connection between matrix multiplication and gadget construction. Consider two $\mathcal{F}$-gates $f$ and $g$ with arities $n$ and $m$ respectively, where $\text{Var}(f)=\{x_1,x_2,\ldots,x_n\}$ and $\text{Var}(g)=\{y_1,y_2,\ldots,y_m\}$. By connecting a subset of their dangling edges - specifically pairing $\{x_{n-l+1},\ldots,x_n\}$ with $\{y_1,y_2,\ldots,y_l\}$ for some positive integer $l$, we construct a new $\mathcal{F}$-gate $h$.

The resulting signature $h$ has variables $\text{Var}(h)=\{x_1,\ldots,x_{n-l},y_{l+1},\ldots,y_m\}$. Through direct comparison of gadget computation and matrix multiplication, we obtain:

$$M_h=M_{x_1\cdots x_{n-l},y_{l+1}\cdots y_m}=M_{x_1\cdots x_{n-l},x_{n-l+1}\cdots x_n}\cdot M_{y_1\cdots y_l,y_{l+1}\cdots y_m}=M_f\cdot M_g.$$

For $\#\eo$ problems, the only modification is that edges now represent $\neq_2$ instead of $=_2$, yielding:

$$M_h=M_f\cdot \neq_2^{\otimes l}\cdot M_g.$$

This relationship will be frequently employed in subsequent discussions without further explanation.

We highlight two special forms of gadget construction: self-loop and pinning. In $\#\eo$ problems, adding a self-loop to signature $f$ involves selecting two variables $x_1$ and $x_2$ and connecting them with an internal edge, yielding new signature $f^{x_1\neq x_2}$. For $\text{Var}(f)=\{x_1,x_2,\ldots,x_n\}$, since the internal edge represents $\neq_2$, we obtain:

$$f^{x_1\neq x_2}(x_3,\ldots,x_n)=f(0,1,x_3,\ldots,x_n)+f(1,0,x_3,\ldots,x_n).$$

This operation can be generalized using $\bi{a}{b}$. Instead of directly connecting $x_1$ and $x_2$, we introduce a new vertex assigned $\bi{a}{b}$ and connect $x_1$ and $x_2$ to its dangling edges.
The two possible connection orders yield either:
$f'(x_3,\ldots,x_n)=af(0,1,x_3,\ldots,x_n)+bf(1,0,x_3,\ldots,x_n)$
or
$f'(x_3,\ldots,x_n)=bf(0,1,x_3,\ldots,x_n)+af(1,0,x_3,\ldots,x_n)$.

Pinning represents the second key operation. While standard $\hol$ and $\#\csp$ problems use $\Delta_0=(1,0)$ or $\Delta_1=(0,1)$ to fix variables, $\#\eo$ problems employ $\bi{1}{0}$ self-loops. For $x,y\in\text{Var}(f)$, pinning simultaneously fixes $x=1$ and $y=0$, preserving the $\eo$ property. The resulting signature is denoted $f^{x=1,y=0}$. Throughout this paper, "pinning signature" refers unambiguously to $\bi{1}{0}$.

\subsection{Two known dichotomies for \#CSP and Six-vertex model} \label{sixvertexchapter}

In this section, we introduce the dichotomies for $\#\csp$ and six-vertex model. We start from defining tractable classes $\mathscr{A}$ and $\mathscr{P}$. 
Let $X$ be a $(d+1)$-dimensional column vector $X=(x_1,x_2,\cdots,x_d,1)^T$ over $\mathbb{F}_2$. Suppose $A$ is a matrix over $\mathbb{F}_2$. The function $\chi_{AX}$ takes value 1 when $AX=\mathbf{0}$, otherwise 0. In other words, $\chi_{AX}$ describe an affine relation on variables $x_1,x_2,\cdots,x_d$.

\begin{definition}[\cite{cai2014complexity}]
    We denote by $\mathscr{A}$ the set of signatures which have the form $\lambda\cdot\chi_{AX}\cdot\mathfrak{i}^{L_1(X)+L_2(X)+\cdots+L_n(X)}$, where $\mathfrak{i}=\sqrt{-1}$, $\lambda\in\mathbb{C}$, $n\in\mathbb{Z}_+$, each $L_j$ is a 0-1 indicator function of the form $\langle \alpha_j,X \rangle$, where $\alpha_j$ is a $(d+1)$-dimensional vector over $\mathbb{F}_2$, and the dot product $\langle \cdot,\cdot \rangle$ is computed over $\mathbb{F}_2$.
\end{definition}

\begin{definition}[\cite{cai2014complexity}]
    We denote by $\mathscr{P}$ the set of all signatures which can be expressed a product of unary signatures, binary equality signatures ($=_2$) and binary disequality signatures ($\neq_2$) (on not necessarily disjoint subsets of variables). 
\end{definition}

As in \cite[chapter 3]{cai2017complexity}, $\mathscr{A}$ and $\mathscr{P}$ are closed under gadget construction. That is, if $\mathcal{F}\subseteq\mathscr{A}$ or $\mathcal{F}\subseteq\mathscr{P}$, then all $\mathcal{F}$-gates are respectively in $\mathscr{A}$ or $\mathscr{P}$. Now we state the dichotomy for $\#\csp$ from \cite[Theorem 3.1]{cai2014complexity}.

\begin{theorem}[\cite{cai2014complexity}]\label{thm:CSPdichotomy}
Suppose $\mathcal{F}$ is a finite set of signatures mapping Boolean inputs to complex numbers. If $\mathcal{F}\subseteq\mathscr{A}$ or $\mathcal{F}\subseteq\mathscr{P}$, then $\#\csp(\mathcal{F})$ is computable in polynomial time. Otherwise, $\#\csp(\mathcal{F})$ is \#P-hard.
\end{theorem}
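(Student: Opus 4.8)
The plan is to treat tractability and hardness by entirely different means, and I will sketch both.

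\emph{Tractability.} Suppose first $\mathcal{F}\subseteq\mathscr{P}$. Since every $f\in\mathscr{P}$ is a product of unary, $=_2$ and $\neq_2$ factors on subsets of its variables, in any instance of $\#\csp(\mathcal{F})$ each clause contributes only unary weights and binary (dis)equality constraints; collapsing the variables linked by $=_2$ or $\neq_2$ with a union--find structure that also tracks a bit-flip parity, the instance decomposes into independent components, each either infeasible or contributing an explicit weight, so the partition function is a polynomial-time-computable product. Next suppose $\mathcal{F}\subseteq\mathscr{A}$. Multiplying the clause signatures --- legitimate because $\mathscr{A}$ is closed under gadget construction --- the partition function takes the form $\sum_{x\in\mathbb{F}_2^n}\chi_{AX}\cdot\mathfrak{i}^{Q(x)}$, where $AX=\mathbf{0}$ describes an affine system in $x$ and $Q$ is a function that, modulo $4$, is a quadratic polynomial over $\mathbb{F}_2$ assembled from the accumulated indicators $L_j$. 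I would solve the affine system by Gaussian elimination, substitute a parametrisation of its solution space into $Q$, diagonalise the resulting quadratic form, and thereby reduce the sum to a product of one-dimensional Gauss sums, each available in closed form; the value is then computable in polynomial time.

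\emph{Hardness.} Assume $\mathcal{F}\not\subseteq\mathscr{A}$ and $\mathcal{F}\not\subseteq\mathscr{P}$. Because $\#\csp(\mathcal{F})\equiv_T\hol(\mathcal{F}\cup\mathcal{EQ})$ and $\mathcal{EQ}$ supplies equalities of every arity, I may freely use gadget construction. The first step is a pinning lemma: using the equality signatures together with $\mathcal{F}$ one simulates, up to a polynomial-time Turing reduction, the pinning signatures $\Delta_0$ and $\Delta_1$, proved by interpolating over the finitely many pinned sub-instances. With pinning in hand, a mixing argument reduces matters to the hardness of $\#\csp(\{f\})$ for a single $f\notin\mathscr{A}\cup\mathscr{P}$ realizable from $\mathcal{F}$: if instead each member of $\mathcal{F}$ individually lay in $\mathscr{A}\cup\mathscr{P}$ while the set lay in neither class, one forms from an incompatible pair a binary or ternary witness outside $\mathscr{A}\cup\mathscr{P}$. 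Finally I would reduce the arity of $f$: via self-loops, pinnings and gadget construction one realizes from $f$ a non-degenerate signature of arity at most three still outside $\mathscr{A}\cup\mathscr{P}$, and the dichotomy for such low-arity complex signatures is settled directly --- degenerate signatures and $\{=_2,\neq_2\}$-type binaries fall into $\mathscr{P}$, while every remaining binary or ternary signature is shown \#P-hard by interpolating a gadget family obtained by iterating $f$ and reducing from a known \#P-hard partition function such as counting independent sets.

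\emph{Main obstacle.} The hardness direction is the heavy part, and its low-arity base case is the crux: the interpolation must, starting from an essentially arbitrary complex-weighted signature outside $\mathscr{A}\cup\mathscr{P}$, produce a gadget family whose signature matrices have pairwise distinct eigenvalues --- so that Vandermonde interpolation applies --- while avoiding both the degenerate configurations and the configurations that collapse back into $\mathscr{A}\cup\mathscr{P}$. Carrying out this case analysis, which amounts to verifying that no third tractable class hides among complex weights, is where essentially all the difficulty lies; the tractability side, by contrast, is linear algebra together with Gauss-sum bookkeeping.
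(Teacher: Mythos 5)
This statement is not proved in the paper at all: it is Theorem~3.1 of \cite{cai2014complexity}, imported as a black box, so there is no in-paper argument to compare yours against. What you have written is an attempt to reprove the full complex-weighted Boolean $\#\csp$ dichotomy from scratch. Your tractability half is essentially the standard argument and is fine in outline: union--find with flip parity for $\mathscr{P}$, and evaluation of a quadratic Gauss sum over $\mathbb{F}_2$ (with the exponent read modulo $4$) for $\mathscr{A}$, after restricting to the affine solution space.

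The hardness half, however, has genuine gaps, and they sit exactly where the cited paper spends most of its effort. First, your ``mixing argument'' asserts that from $f\in\mathscr{A}-\mathscr{P}$ and $g\in\mathscr{P}-\mathscr{A}$ one can form a binary or ternary witness outside $\mathscr{A}\cup\mathscr{P}$; this is not automatic, and proving hardness for such mixed pairs is itself a nontrivial component of the dichotomy (compare the no-mixing lemmas this very paper must prove separately in Section~\ref{subsection:4aritysethardness} even in the restricted $\#\eo$ setting). Second, your arity-reduction claim --- that self-loops and pinnings always yield a non-degenerate signature of arity at most three still outside $\mathscr{A}\cup\mathscr{P}$ --- can fail: projections and self-loops of a signature outside $\mathscr{A}\cup\mathscr{P}$ may all land back inside those classes, which is why the actual proof runs a careful induction on arity rather than collapsing to a low-arity base case in one step. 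Third, the base case you defer to --- interpolation with pairwise distinct eigenvalues, reducing from counting independent sets --- does not cover the complex-weighted situation: eigenvalue ratios that are roots of unity defeat Vandermonde interpolation, and those degenerate configurations are precisely where $\mathscr{A}$ lives; ruling out any further hidden tractable class there is the substantive content of the theorem, not a routine case check. As you concede in your ``main obstacle'' paragraph, essentially all the difficulty is concentrated in the part you have not carried out, so the proposal should be regarded as a roadmap rather than a proof; for the purposes of this paper, citing \cite{cai2014complexity} is the correct move.
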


For \eo\ signatures belonging to $\mathscr{A}$ or $\mathcal{P}$, they satisfy the following properties. 
\begin{lemma}[{\cite[Lemma 5.7]{cai2020beyond}}]  \label{lem: eo+affine= eom}
If an $\eo$ signature $f$ has affine support, then $f$ is a pairwise opposite signature ($\eom$ signature). 
\end{lemma}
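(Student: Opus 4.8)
The plan is to show that if $f$ is an $\eo$ signature whose support $\su(f)$ spans an affine subspace $V\subseteq\{0,1\}^{2d}$, then $V$ itself is contained in some $\eom[P]$, i.e.\ there is a pairing $P=\{\{x_{i_1},x_{i_2}\},\dots,\{x_{i_{2d-1}},x_{i_{2d}}\}\}$ of the variables such that every $\alpha\in V$ satisfies $\alpha_{i_{2j-1}}\neq\alpha_{i_{2j}}$ for all $j$. Since $\su(f)\subseteq V$, this immediately gives $\su(f)\subseteq\eom[P]$, which is exactly the claim that $f$ is a pairwise opposite signature. So the whole thing reduces to a statement about affine subspaces of $\mathbb{F}_2^{2d}$ all of whose points have Hamming weight exactly $d$.

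First I would set up coordinates: write the affine span as $V=\alpha^{0}+W$ where $\alpha^0\in\su(f)$ is a fixed base point and $W\leq\mathbb{F}_2^{2d}$ is a linear subspace (the direction space of $V$), so that $W$ is spanned by differences $\beta\oplus\alpha^0$ for $\beta\in\su(f)$. The key observation is the weight constraint: for every $w\in W$, both $\alpha^0$ and $\alpha^0\oplus w$ lie in $V$ hence have weight $d$, so $\#_1(\alpha^0\oplus w)=\#_1(\alpha^0)$. Splitting the coordinates of $w$ into those where $\alpha^0$ is $0$ versus $1$, flipping a $0$-coordinate adds $1$ to the weight and flipping a $1$-coordinate subtracts $1$; hence every $w\in W$ must have exactly as many support-coordinates inside $\su(\alpha^0)$ as outside it. In particular $W$ is contained in the set of even-weight vectors that are "balanced" with respect to the partition of coordinates induced by $\alpha^0$. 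I would then argue, by induction on $\dim W$ or by building the pairing greedily, that such a $W$ admits a perfect matching $P$ on the $2d$ coordinates with the property that for every $w\in W$ and every pair $\{x_{i_{2j-1}},x_{i_{2j}}\}\in P$ the two coordinates $w_{i_{2j-1}},w_{i_{2j}}$ are \emph{equal}; combined with $\alpha^0_{i_{2j-1}}\neq\alpha^0_{i_{2j}}$ this yields $\alpha_{i_{2j-1}}\neq\alpha_{i_{2j}}$ for every $\alpha=\alpha^0\oplus w\in V$.

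To produce the pairing I would reason on the structure of $W$ as a binary matroid / inspect its cocircuits: two coordinates $a,b$ can be "forced together" precisely when $w_a=w_b$ for all $w\in W$, which is the statement that $e_a\oplus e_b$ lies in the dual space $W^{\perp}$. So the task is to show the coordinates $\{1,\dots,2d\}$ can be partitioned into $d$ pairs each of which is a codeword of $W^{\perp}$ of weight $2$, consistently with $\alpha^0$. I would leverage the balance condition on $W$ to show $W^{\perp}$ is large enough: the balance condition says $W\subseteq\langle\,\alpha^0\oplus\overline{\alpha^0}\text{-type vectors}\,\rangle^{\perp}$-flavoured constraints, and dually gives enough weight-two vectors in $W^{\perp}$; an extremal/counting argument (or a direct induction peeling off one pair at a time, using that after deleting a forced pair the residual space still consists of balanced even-weight vectors on $2d-2$ coordinates) then finishes it. This pairing step — verifying that the weight-$d$ condition on all of $V$ forces enough weight-two dual codewords to cover all coordinates — is the part I expect to be the main obstacle; it is really a small lemma in coding theory about self-balancing affine codes, and the cleanest route is probably the induction, using Lemma~\ref{thm:csp=eom}'s viewpoint only as motivation and doing the combinatorics by hand.

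Finally I would assemble the pieces: fix $P$ from the pairing step, observe $\su(f)\subseteq V=\alpha^0+W\subseteq\eom[P]$, and conclude that $f$ is an $\eom[P]$ signature, hence a pairwise opposite ($\eom$) signature, which is the assertion of Lemma~\ref{lem: eo+affine= eom}. (A sanity check on small cases: $d=1$ is trivial since $\eoe=\{01,10\}$ already equals $\eom[P]$ for the only pairing; for $d=2$ the affine flats of weight-$2$ vectors in $\mathbb{F}_2^4$ are exactly lines and planes of the form $\{\,\text{one variable} \neq \text{another}\,\}$, matching $\mathcal{DEQ}$ as defined earlier in the excerpt.)
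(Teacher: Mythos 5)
Your reduction is the right one, and the statement you reduce to is true: writing the affine support as $V=\alpha^0+W$, the pairs of a valid pairing $P$ are exactly the pairs $\{a,b\}$ with $e_a\oplus e_b\in W^{\perp}$ and $\alpha^0_a\neq\alpha^0_b$, so the lemma is equivalent to the claim that the $2d$ coordinates admit a perfect matching into such pairs. The problem is that this claim — which you yourself flag as ``the part I expect to be the main obstacle'' — is never actually proved. You offer two routes, and neither is carried out: the ``extremal/counting argument'' is not specified beyond the (correct but merely restated) observation that every $w\in W$ has equally many $1$'s inside and outside $\su(\alpha^0)$; and the induction ``peeling off one pair at a time'' presupposes that at least one forced pair exists, which is precisely the content of the claim (exhibiting even a single pair $\{a,b\}$ with $e_a\oplus e_b\in W^\perp$ and $\alpha^0_a\neq\alpha^0_b$ is where all the work lies — the ``residual space stays balanced'' part is the easy half). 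As written, the proof has a genuine hole at its only nontrivial step.

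The gap is closable, and here is one clean way. Let $i\sim j$ iff $e_i\oplus e_j\in W^{\perp}$; this is an equivalence relation on coordinates, and for a class $C$ let $n_C^0,n_C^1$ count the coordinates of $C$ where $\alpha^0$ is $0$, resp.\ $1$. Since every $\alpha\in V$ has weight exactly $d$,
\begin{equation*}
0=\sum_{\alpha\in V}\Bigl(\sum_{i=1}^{2d}(-1)^{\alpha_i}\Bigr)^{2}
=\sum_{i,j}(-1)^{\alpha^0_i+\alpha^0_j}\sum_{w\in W}(-1)^{w_i+w_j}
=|W|\sum_{C}\bigl(n_C^0-n_C^1\bigr)^{2},
\end{equation*}
because $\sum_{w\in W}(-1)^{\langle e_i\oplus e_j,\,w\rangle}$ equals $|W|$ when $i\sim j$ and $0$ otherwise. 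Hence $n_C^0=n_C^1$ for every class, and pairing each $0$-coordinate of $\alpha^0$ with a $1$-coordinate of $\alpha^0$ inside its own class yields the desired $P$: for such a pair, $w_a=w_b$ for all $w\in W$ and $\alpha^0_a\neq\alpha^0_b$, so $\alpha_a\neq\alpha_b$ on all of $V$. This is a second-moment version of the averaging argument the paper uses for Lemma~\ref{lem:alluphasdelta1}; with it inserted, your outline becomes a complete proof. (Note also the paper itself does not reprove this lemma but cites \cite{cai2020beyond}, so you are expected to supply this combinatorial core rather than defer it.)
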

\begin{lemma}[\cite{cai2020beyond}]
    $f\in\mathscr{A}$ (resp. $\mathscr{P}$) if and only if $\tau(f)\subseteq  \mathscr{A}$ (resp. $\mathscr{P}$).
\end{lemma}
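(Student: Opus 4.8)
Since $\tau(f)$ is defined only when $f$ is a pairwise opposite ($\eom$) signature, I would work under that hypothesis (it is automatic for an $\eo$ signature lying in $\mathscr{A}\cup\mathscr{P}$ by Lemma~\ref{lem: eo+affine= eom}), and dispose of the degenerate all-zero case at once. The plan is to show that $f$ and each $g\in\tau_f$ are \emph{mutually realizable} from one another by gadget construction using only auxiliary signatures that lie in $\mathscr{A}\cap\mathscr{P}$; the lemma then follows from the closure of $\mathscr{A}$ and $\mathscr{P}$ under gadget construction recalled before Theorem~\ref{thm:CSPdichotomy}. The only auxiliary signatures needed are $\Delta_0,\Delta_1,=_2,\neq_2$, all of which visibly lie in $\mathscr{A}\cap\mathscr{P}$, together with the ternary ``copy-and-negate'' signature $h(z,u,v):={=_2}(z,u)\cdot{\neq_2}(z,v)$, which lies in $\mathscr{P}$ because it is literally a product of an equality and a disequality, and in $\mathscr{A}$ because its support $\{(0,0,1),(1,1,0)\}$ is the affine relation $\{z+u=0,\ z+v=1\}$ with all nonzero values equal to $1$ (so the $\mathfrak{i}$-exponent is vacuous).

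For the forward direction, fix $g\in\tau_f$ with pairing $P=\{\{x_{i_1},x_{i_2}\},\dots,\{x_{i_{2d-1}},x_{i_{2d}}\}\}$. After a permutation of $\text{Var}(f)$ — which preserves $\mathscr{A}$ and $\mathscr{P}$ — I may assume the pairs are $(x_1,x_2),\dots,(x_{2d-1},x_{2d})$ and, on the chosen representative, $g(z_1,\dots,z_d)=f(z_1,1-z_1,\dots,z_d,1-z_d)$; the other representatives of $\tau_f$ differ only by within-pair swaps, handled by exchanging the roles of $u,v$ in $h$. I would then build the gate over $\{f\}\cup\{h_1,\dots,h_d\}$ that connects, for each $k$, the edges $u,v$ of a fresh copy $h_k$ of $h$ to the edges $x_{2k-1},x_{2k}$ of $f$, leaving the edge $z$ of $h_k$ dangling; summing over the internal edges gives precisely $g$. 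Since $f\in\mathscr{A}$ (resp. $\mathscr{P}$) and every $h_k\in\mathscr{A}\cap\mathscr{P}$, closure gives $g\in\mathscr{A}$ (resp. $\mathscr{P}$), and hence $\tau(f)\subseteq\mathscr{A}$ (resp. $\mathscr{P}$).

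For the converse, pick any $g\in\tau_f$. By the relation between $\pi$ and $\tau$ recorded in Section~\ref{section:relation} ($g\in\tau(\pi(g))$ and $\pi$ inverts $\tau$ up to within-pair swaps), $f$ coincides, up to a permutation of variables, with $\pi_g(x_1,y_1,\dots,x_d,y_d)=g(x_1,\dots,x_d)\cdot\prod_{k=1}^{d}{\neq_2}(x_k,y_k)$. I would realize $\pi_g$ from $g$ by the gate over $\{g\}\cup\{p_1,\dots,p_d\}$, where $p_k$ is the copy of $h$ read as $p_k(x_k,t,y_k)={=_2}(t,x_k)\cdot{\neq_2}(x_k,y_k)$: connect the edge $t$ of $p_k$ to the $k$-th edge of $g$, leaving $x_k,y_k$ dangling; summing over the internal edges produces $\pi_g$. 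Closure of $\mathscr{A}$ (resp. $\mathscr{P}$) under gadget construction, together with $g\in\mathscr{A}$ (resp. $\mathscr{P}$) and each $p_k\in\mathscr{A}\cap\mathscr{P}$, then gives $f\in\mathscr{A}$ (resp. $\mathscr{P}$).

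The proof has no genuine obstacle; the work is entirely bookkeeping. The two points requiring care are: realizing the substitution $x_{i_{2k}}\mapsto 1-x_{i_{2k-1}}$ and its inverse as bona fide gadget constructions rather than mere syntactic substitutions — which is exactly the purpose of the ternary signature $h$ — and absorbing all reorderings and within-pair swaps into variable permutations, using that $\mathscr{A}$ and $\mathscr{P}$ are closed under permuting and (via $\neq_2$) flipping variables, so that it suffices to treat one representative of $\tau_f$. The single place where the defining form of $\mathscr{A}$ is invoked is the membership $h\in\mathscr{A}$, where the relevant feature is precisely that $h$ is a $0/1$-valued affine indicator.
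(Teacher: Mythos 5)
The paper states this lemma only as a citation to \cite{cai2020beyond} and supplies no proof of its own, so there is no internal argument to compare against. Your proof is correct and is the natural one: realizing each $g\in\tau(f)$ from $f$, and conversely $f=\pi_g$ from $g$, by gates whose auxiliary signatures lie in $\mathscr{A}\cap\mathscr{P}$, then invoking the closure of $\mathscr{A}$ and $\mathscr{P}$ under gadget construction; the only point of substance is the verification that $h(z,u,v)={=_2}(z,u)\cdot{\neq_2}(z,v)$ belongs to both classes, which you carry out correctly.
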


Now we introduce some definitions for the dichotomy for six-vertex model. $\mathscr{M}$ is the set of all ternary signatures whose supports only contain strings of Hamming weight exactly 1. In other words, a quaternary \eo\ signature $f\in \mathscr{M} \otimes \Delta_1 $, if and only if $\su(f) \subseteq \{1\} \times \{100,010,001\}=\{1100,1010,1001\}$. $\widetilde{\mathscr{M}}$ is defined similarly, except that the Hamming weight of each support string is exactly 2. The dichotomy for six-vertex model is as follows. 

\begin{theorem}[\cite{cai2018complexity}] \label{thm:sixvertexdichotomyMform}
    Let f be a quaternary $\eo$ signature, then $\hol(\neq_2\mid f)$ is \#P-hard except for the following cases:

a. $f\in\mathscr{P}$ (equivalently, $f \in \eom$ and $\tau(f) \subseteq \mathscr{P}$);

b. $f\in\mathscr{A}$ (equivalently, $f \in \eom$ and $\tau(f) \subseteq \mathscr{A}$);

c. $f\in \mathscr{M}\otimes  \Delta_1 $;

d. $f\in  \widetilde{ \mathscr{M}} \otimes  \Delta_0$;

in which cases $\hol(\neq_2\mid f)$ is computable in polynomial time.
\end{theorem}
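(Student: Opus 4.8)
The plan is to reduce the classification to two regimes according to whether $f$ is a pairwise opposite ($\eom$) signature, since exactly in that regime the problem collapses to $\#\csp$. Note first that $\su(f)\subseteq\eoe$ consists of the six Hamming-weight-$2$ strings of length $4$, and the three pairings of the four variables give the only three $4$-element classes $\eom[P]$ among them. Suppose $f\in\eom$ relative to a pairing $P$. Then the $\#\eo$--$\#\csp$ correspondence underlying Lemma~\ref{thm:csp=eom} (together with Lemma~\ref{lem:eo=hol}) gives $\hol(\neq_2\mid f)=\#\eo(f)\equiv_T\#\csp(\tau(f))$ with a free binary disequality $\neq_2$; since $\neq_2\in\mathscr{A}\cap\mathscr{P}$, adjoining a free $\neq_2$ does not change which side of the $\#\csp$ dichotomy one is on, so by Theorem~\ref{thm:CSPdichotomy} the problem is polynomial-time computable exactly when $\tau(f)\subseteq\mathscr{A}$ or $\tau(f)\subseteq\mathscr{P}$, and is \#P-hard otherwise. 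Using the equivalences $f\in\mathscr{A}\Leftrightarrow\tau(f)\subseteq\mathscr{A}$ and $f\in\mathscr{P}\Leftrightarrow\tau(f)\subseteq\mathscr{P}$ recorded above, this is precisely cases (a) and (b).

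Next suppose $f\notin\eom$, so $\su(f)$ is contained in none of the classes $\eom[P]$. By Lemma~\ref{lem: eo+affine= eom} the support of $f$ is then not affine, so $f\notin\mathscr{A}$; and since a $\mathscr{P}$ signature with $\eo$ support is necessarily in $\eom$, also $f\notin\mathscr{P}$. Hence only the two remaining tractable families can occur. A short enumeration over subsets of the six weight-$2$ strings shows that $\su(f)$ is either one of the two ``star'' triples $\{1100,1010,1001\}$ and $\{0011,0101,0110\}$ (the non-pairwise-opposite triples all of whose strings agree in one coordinate), or it contains a subset from which \#P-hardness can be forced. In the former case $f$ is a unary $\Delta_1$ (resp.\ $\Delta_0$) tensored with a ternary weight-$1$ signature $g$, i.e.\ $f\in\mathscr{M}\otimes\Delta_1$ (resp.\ $f\in\widetilde{\mathscr{M}}\otimes\Delta_0$). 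Tractability then follows from a direct algorithm: in an instance of $\hol(\neq_2\mid f)$ the coordinate pinned to $1$ at each $f$-vertex propagates deterministically through the $\neq_2$ edges, and after this propagation one is left with a disjoint union of components consisting of $g$-vertices joined by $\neq_2$-edges in which each $g$-vertex must send along exactly one incident edge; this ``functional orientation'' count factorizes over connected components, each of which is a simple count, and so is computable in polynomial time, yielding cases (c) and (d).

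It remains to prove \#P-hardness in all other cases: the six ``generic'' non-pairwise-opposite triples, and every support of size at least $4$ that is not exactly some $\eom[P]$. For each such $f$ we proceed by gadget construction from $\{\neq_2,f\}$: using self-loops $f^{x\neq y}$, generalized self-loops $\bi{a}{b}$, and the ``series'' composition of two copies of $f$ (whose signature matrix is $M_f\cdot\neq_2^{\otimes 2}\cdot M_f$), together with an interpolation argument on the eigen-structure of the resulting $4\times 4$ signature matrices, we realize a target signature for which $\hol(\neq_2\mid\cdot)$ is already known to be \#P-hard (for instance a plain six-vertex signature at a hard parameter point, or a low-arity signature violating the $\#\csp$ criterion of Theorem~\ref{thm:CSPdichotomy}); the reduction then transfers the hardness back to $\hol(\neq_2\mid f)$.

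The main obstacle is this last, hardness, half of the non-pairwise-opposite case: one must produce, uniformly over all ``bad'' supports and all admissible nonzero weights, a gadget that yields a provably hard target, which forces a careful sub-case analysis and, within the interpolation steps, separate handling of the parameter values at which the relevant transfer matrices are singular or have repeated eigenvalues. By contrast, cases (c)/(d) are routine once the forced-bit propagation is set up, and the $\eom$ case reduces immediately to the known $\#\csp$ dichotomy.
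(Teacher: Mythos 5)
This statement is imported verbatim from \cite{cai2018complexity}; the paper does not prove it, so there is no internal proof to compare against --- the expected justification here is the citation. Your case structure is nonetheless the right one, and the easy halves check out: the split on whether $f\in\eom$, the observation that the eight non-$\eom$ triples among the six weight-$2$ strings are exactly the two ``star'' patterns (up to variable permutation), the reduction of the $\eom$ case to Theorem~\ref{thm:CSPdichotomy} via Lemma~\ref{thm:csp=eom}, and the propagation algorithm for cases (c)/(d) (though the picture there is not a ``functional orientation'' count: after each forced $1$ propagates through $\neq_2$ and lands in the $\mathscr{M}$-part of a neighbouring vertex, that vertex collapses to a generalized binary disequality, so the residual instance lies in $\mathscr{P}$ and is evaluated as such --- this is exactly how the paper argues tractability in Theorem~\ref{arity4setdichotomy}a).

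The genuine gap is the entire hardness half for non-$\eom$ signatures, which is where essentially all of the content of the six-vertex dichotomy lives. You must show \#P-hardness for every signature whose support is one of the $12$ non-$\eom$ $4$-subsets, the $6$ $5$-subsets, or the full $6$-subset, with \emph{arbitrary} nonzero complex weights, and your treatment of this is a single sentence announcing that self-loops, series compositions, and ``an interpolation argument on the eigen-structure'' will realize some unspecified known-hard target. No gadget is exhibited, no target problem is identified, no interpolation lemma is stated, and the singular/repeated-eigenvalue parameter values you flag are never handled. Note also that ``contains a subset from which \#P-hardness can be forced'' is not by itself an argument: a support of size $4$ consisting of a star triple plus one extra string contains a tractable sub-support, so the hardness must come from the interaction of the weights, not from restriction to a subset. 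As written, the proposal is a correct outline of the tractable side together with a plan for the hard side; it does not constitute a proof of the theorem, whose hardness analysis occupies the bulk of \cite{cai2018complexity}.
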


\subsection{Insights from tractable cases}

Starting from this section, this article presents a number of insights pertaining to the tractable cases within each dichotomy. The objective is to generalize these tractable cases at a high level of abstraction, although some statements may be informal. It is our hope that these generalized characteristics, derived from the tractable cases, will prove a valuable point of reference for future research.

In the dichotomy of $\#\csp$, both  $\mathscr{A}$ and $\mathscr{P}$  are self-dual signature sets.  
All signatures in $\mathscr{A}$ and $\mathscr{P}$ have affine supports, so having affine support is a necessary condition for tractability here, and we consider this condition as a support requirement. 
Informally speaking, there are two ways to reach tractability based on the support requirement, namely $\mathscr{A}$ type and $\mathscr{P}$ type requirements.  
The two steps explanation for the tractability part in Theorem \ref{thm:CSPdichotomy}, will be used as an important template in explaining the following results. 
We hope to exhibit an informal evolution process from the known tractable cases to the new tractable cases in this way.

In the dichotomy for six-vertex model, the support requirement is an union of two parts. The first part is that the support must be affine. Regardless of variable permutations, it is exactly an affine subspace of $Q=\{0101,1001,0110,1010\}$. This corresponds to the tractable cases a and b in Theorem \ref{thm:sixvertexdichotomyMform}.

The second part is that the support must be an subset of $Q_1=\{1100,1010,1001\}$ or its dual, $Q_0=\{0011,0101,0110\}$.
This requirement itself is enough for tractability.
Noticing that if each signature $f$ in the instance satisfies that $\su(f)=Q_1$, then one of its variables is fixed to 1, say $x_1$. 
Transmitted by $\neq_2$, as well as the general binary disequality signatures in the instance, another edge $x_2$ of a signature $g$ will be fixed to 0. 
 Then the valid part of $\su(g)$, the set of strings that can appear in an assignment with non-zero evaluation, is of size at most two and meets the affine support requirement. We denote this process as the \textit{receiving-sending mechanism}. This mechanism directly corresponds to the chain reaction algorithm in \cite{shao2024eulerian}.
This process may repeat many rounds, until each signature collapses to $\eom$. 
At each round, an opposite pair of variables is somehow fixed by this mechanism. And the process continues until the rest signatures in the whole instance are in $\eom$. Hence, this more general support requirement, can force a signature in an instance to work exactly like some $\eom$ signature. 
The analysis of the dual case $Q_0$ is similar.

\section{New notations and our results}\label{resultm}
 
 There are three parts of results in this paper, which are introduced in Subsection \ref{2+4}, \ref{purewater}, \ref{rererebl} respectively. In Subsection \ref{4056}, we introduce two strange signatures, which serve as a motivation for some of our study. Finally in Subsection \ref{relations} we introduce the relations between these results.

\subsection{A set of binary or quaternary signatures}\label{2+4}

The first part of our results generalizes six-vertex model to $\#\eo$ problems defined by a set of $\eo$ signatures of arity no more than 4.  We define $$\mathscr{M}_{\mathscr{A}}=\{f\in\mathscr{M}\mid\text{the quotient of any two non-zero values of } f \text{ belongs to } \{\pm1,\pm\mathfrak{i}\}\}.$$ $\mathscr{M}_{\mathscr{A}} \otimes \Delta_1$ can be seen as $\mathscr{M} \otimes \Delta_1$ with the $\mathscr{A}$ type requirement. 
Similarly, we define $$\widetilde{\mathscr{M}_{\mathscr{A}}}=\{f\in\widetilde{\mathscr{M}}\mid\text{the quotient of } \text{any two } \text{non-zero values } \text{of }f \text{ belongs to } \{\pm1,\pm\mathfrak{i}\}\}.$$
The detailed version of Theorem \ref{thm:4+2} is as follows.

\begin{theorem}\label{arity4setdichotomy}
Suppose $\mathcal{F}$ is a set of \eo\ signatures with arity less than or equal to 4. Then $\hol(\neq_2\mid\mathcal{F})$ is \#P-hard unless one of the following conditions holds:

a. $\mathcal{F}\subseteq \mathscr{P}\cup(\mathscr{M}\otimes\Delta_1)$;

b. $\mathcal{F}\subseteq \mathscr{A}\cup(\mathscr{M}_{\mathscr{A}}\otimes\Delta_1)$;

c. $\mathcal{F}\subseteq \mathscr{P}\cup(\widetilde{\mathscr{M}}\otimes\Delta_0)$;

d. $\mathcal{F}\subseteq \mathscr{A}\cup(\widetilde{\mathscr{M}_{\mathscr{A}}}\otimes\Delta_0)$,

in which cases the problem can be computed in polynomial time.
\end{theorem}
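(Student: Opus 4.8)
The plan is to prove the dichotomy in two directions. For tractability, I would verify each of the four cases a--d separately, leveraging the closure properties of $\mathscr{A}$ and $\mathscr{P}$ and the receiving-sending mechanism sketched in Subsection~2.6. Given a signature grid over $\mathcal{F}$, I would first argue that the signatures in $\mathscr{M}\otimes\Delta_1$ (resp.\ $\mathscr{M}_{\mathscr{A}}\otimes\Delta_1$) each have exactly one variable pinned to $1$. Transmitted through $\neq_2$ and the other binary disequality signatures present, this forces a matched variable of a neighbouring signature to $0$; iterating, every signature eventually collapses onto an $\eom$ signature whose valid support has affine span, and whose $\tau$-image lies in $\mathscr{P}$ (resp.\ $\mathscr{A}$) by Lemma~\ref{lem: eo+affine= eom} and the subsequent lemma. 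The remaining problem is then an instance of $\#\csp$ with free binary disequality over signatures all in $\mathscr{P}$ (resp.\ $\mathscr{A}$), hence polynomial-time computable by Theorem~\ref{thm:CSPdichotomy} together with Lemma~\ref{thm:csp=eom}. The dual cases c,d are handled symmetrically by exchanging $0$ and $1$.

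For hardness, I would proceed by contraposition: assume $\mathcal{F}$ satisfies none of a--d and construct, via gadgets, either a six-vertex signature to which Theorem~\ref{thm:sixvertexdichotomyMform} assigns \#P-hardness, or (after applying $\pi^{-1}=\tau$) a $\#\csp$ instance to which Theorem~\ref{thm:CSPdichotomy} assigns \#P-hardness. The argument splits on structure. If $\mathcal{F}$ contains a quaternary signature that is neither in $\mathscr{P}\cup\mathscr{A}$ nor (after permutation) in $\mathscr{M}\otimes\Delta_1\cup\widetilde{\mathscr{M}}\otimes\Delta_0$, the six-vertex dichotomy applies directly to that signature. The subtle situation is when every individual signature is tractable but the set is not --- e.g.\ $\mathcal{F}$ mixes a signature whose natural home is condition a with one whose natural home is condition d, or mixes an $\mathscr{A}$-type $\mathscr{M}$-signature with a $\mathscr{P}$-but-not-$\mathscr{A}$ affine signature. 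Here I would use self-loops, pinnings ($\neq_2^{1,0}$) and connections through $\neq_2$ to realize a binary or quaternary gadget witnessing failure of the $\#\csp$ criterion, or to realize simultaneously an $\mathscr{M}$-type and a $\widetilde{\mathscr{M}}$-type signature and then glue them into a hard six-vertex configuration.

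The key technical lemmas I expect to need are: (i) a classification of which binary $\eo$ signatures (these are $a\cdot\neq_2$ up to scaling, plus degenerate ones) lie in $\mathscr{P}\setminus\mathscr{A}$, controlling the weight quotients --- this is exactly why $\mathscr{M}_{\mathscr{A}}$ restricts quotients to $\{\pm1,\pm\mathfrak{i}\}$; (ii) stability: gadget constructions within $\mathscr{M}\otimes\Delta_1$ (with $\neq_2$ edges) stay in $\mathscr{M}\otimes\Delta_1$, and likewise for the $\mathscr{A}$-restricted and dual versions, so that no tractable set can secretly generate a hard signature; and (iii) a "mixing is hard" lemma showing that combining an up-type $\mathscr{M}$ signature with a down-type $\widetilde{\mathscr{M}}$ signature, or an $\mathscr{A}$-incompatible pair, yields \#P-hardness.

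The main obstacle will be step~(iii) together with the precise delineation of the tractable boundary: one must show the four listed cases are genuinely the complete list, meaning every way of stepping outside them --- including purely "set-theoretic" violations where each signature alone is fine --- admits a hardness gadget. This requires a careful case analysis of how $\mathscr{M}\otimes\Delta_1$ signatures interact with $\mathscr{P}$ versus $\mathscr{A}$ signatures under gadget composition, and in particular identifying the minimal obstructions (likely a small finite set of binary and quaternary witness signatures) whose realizability forces hardness. Verifying that the receiving-sending mechanism in the tractability direction always terminates with a genuinely $\#\csp$-tractable residual instance --- rather than stalling or producing a signature outside $\mathscr{P}$ or $\mathscr{A}$ --- is the corresponding subtlety on the algorithmic side, and is where the $\mathscr{A}$-quotient restriction is essential.
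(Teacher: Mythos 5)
Your proposal follows essentially the same route as the paper: tractability by propagating the $\Delta_1$ factors of $\mathscr{M}\otimes\Delta_1$ signatures through $\neq_2$ until the residual instance lies entirely in $\mathscr{P}$ (resp.\ $\mathscr{A}$), and hardness by falling back on the six-vertex dichotomy plus a small family of no-mixing lemmas ($\mathscr{A}$ vs.\ $\mathscr{P}$; $\mathscr{M}\otimes\Delta_1$ vs.\ $\widetilde{\mathscr{M}}\otimes\Delta_0$; and $\mathscr{A}\setminus\mathscr{P}$ vs.\ $(\mathscr{M}\otimes\Delta_1)\setminus(\mathscr{M}_{\mathscr{A}}\otimes\Delta_1)$) established by self-loop/pinning gadgets together with a support-size classification of the quaternary signatures in each tractable class. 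One caution: your illustrative mixing example of an $\mathscr{M}_{\mathscr{A}}\otimes\Delta_1$ signature with a $\mathscr{P}\setminus\mathscr{A}$ signature is in fact tractable (it falls under case a, since $\mathscr{M}_{\mathscr{A}}\otimes\Delta_1\subseteq\mathscr{M}\otimes\Delta_1$); the genuine obstruction is the reverse pairing, namely a signature in $(\mathscr{M}\otimes\Delta_1)\setminus(\mathscr{M}_{\mathscr{A}}\otimes\Delta_1)$ together with one in $\mathscr{A}\setminus\mathscr{P}$, so the case analysis you defer to step~(iii) must be organized around that boundary.
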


We summarize current notations as the following tree.  A specific example of the $\mathscr{A}$ type requirement is also presented. When a set of signatures can meet all requirements on a path from the root to the four nodes at the bottom, the corresponding $\#\eo$ problem is tractable; otherwise the problem is \#P-hard.

\begin{center}
\begin{tikzpicture}[
  minimum size=5mm,
  sibling distance=5cm,
  ]
\node[draw] {Quaternary or binary \eo\ signatures}
child {node[draw] {Meets $Q$ or $Q_1$ support req.}
child{node[draw]{Meets $\mathscr{P}$ type req.}}
child{node[draw][below=0.6]{Meets $\mathscr{A}$ type req. ($ \mathscr{A}\cup(\Delta_1 \otimes \mathscr{M}_{\mathscr{A}})$)} }
}
child {node[draw](1){Meets $Q$ or $Q_0$ support req.}
child{node[draw,xshift= 1 cm]{Meets $\mathscr{P}$ type req.}}
child{node[draw]{Meets $\mathscr{A}$ type req.}}
};
\end{tikzpicture}
\end{center}

\subsubsection{Insights from tractable cases}
This dichotomy exhibits two dual support requirements represented by $\{Q,Q_1\}$\footnote{Each signature's support must be a subset of either $Q$ or $Q_1$.} and $\{Q,Q_0\}$. Tractability requires satisfying either support requirement along with either an $\mathscr{A}$-type or $\mathscr{P}$-type requirement.

A key distinction between Theorems~\ref{thm:sixvertexdichotomyMform} and~\ref{arity4setdichotomy} lies in the emergence of $\mathscr{M}_{\mathscr{A}}$ in the latter. It can be verified that:
\begin{itemize}
    \item For $Q$ signatures, $\mathscr{A}$-type and $\mathscr{P}$-type requirements produce the first two tractable classes in Theorem~\ref{thm:sixvertexdichotomyMform};
    \item $Q_1$ signatures under $\mathscr{A}$-type requirements automatically satisfy $\mathscr{P}$-type conditions, explaining the third tractable case;
    \item The dual $Q_0$ case similarly generates the fourth tractable case.
\end{itemize}

Theorem~\ref{arity4setdichotomy} demonstrates the crucial divergence between $\mathscr{A}$-type and $\mathscr{P}$-type requirements when handling mixed signature sets.

\subsubsection{Proof outline of hardness result}

We say two signature sets $A,B$ mix (in $\mathcal{F}$) if there exists $f,g\in \mathcal{F}$ such that $f\in A-B$ and $g\in B-A$.
We first prove 3 no-mixing lemmas, mainly through signature classification on support size and gadget construction, showing the following cases are \#P-hard:
\begin{enumerate}
    \item $\mathscr{M}\otimes\Delta_1$ and $\widetilde{\mathscr{M}}\otimes\Delta_0$ mix;
    \item $\mathscr{A}$ and $\mathscr{P}$ mix;
    \item $\mathscr{A}-\mathscr{P}$ and $(\mathscr{M}\otimes\Delta_1)-(\mathscr{M_A}\otimes\Delta_1)$ mix;
    \item $\mathscr{A}-\mathscr{P}$ and $(\widetilde{\mathscr{M}}\otimes\Delta_0)-(\widetilde{\mathscr{M_A}}\otimes\Delta_0)$ mix.
\end{enumerate}

Suppose $\mathcal{F}$ is a set of binary and quaternary $\eo$ signatures. Suppose the problem is not \#P-hard. By Theorem \ref{thm:sixvertexdichotomyMform}, we can assume that $\mathcal{F}\subseteq\mathscr{A}\cup\mathscr{P}\cup(\mathscr{M}\otimes\Delta_1)\cup(\widetilde{\mathscr{M}}\otimes\Delta_0)$.

 As $\mathscr{M}\otimes\Delta_1$ and $\widetilde{\mathscr{M}}\otimes\Delta_0$ do not mix, either $\mathcal{F} \subseteq \mathscr{A}\cup\mathscr{P}\cup(\mathscr{M}\otimes\Delta_1)$ or $\mathcal{F} \subseteq \mathscr{A}\cup\mathscr{P}\cup(\widetilde{\mathscr{M}}\otimes\Delta_0)$. Due to the dual property, it is sufficient to focus on the former case.

 As $\mathscr{A}$ and $\mathscr{P}$ do not mix, either $\mathcal{F} \subseteq \mathscr{P}\cup(\mathscr{M}\otimes\Delta_1)$ or $\mathcal{F} \subseteq \mathscr{A}\cup(\mathscr{M}\otimes\Delta_1)$. The former case is exactly tractable case a in Theorem~\ref{arity4setdichotomy}. We focus on the latter case.

 As $\mathscr{A}-\mathscr{P}$ and $(\mathscr{M}\otimes\Delta_1)-(\mathscr{M_A}\otimes\Delta_1)$ do not mix, either $\mathcal{F} \subseteq \mathscr{A}\cup(\mathscr{M_A}\otimes\Delta_1)$ or $\mathcal{F} \subseteq (\mathscr{A}\cap \mathscr{P})\cup(\mathscr{M}\otimes\Delta_1)$. The former case is exactly tractable case b in Theorem~\ref{arity4setdichotomy}, and the latter case is subsumed by  tractable case a.

\subsection{Pure signatures}\label{purewater}

In this section we focus on pure signatures, which can be seen as a generalization of the quaternary signatures in $\mathscr{M}\otimes\Delta_1$ and $\widetilde{\mathscr{M}}\otimes\Delta_0$.

\begin{definition}\label{def:pure}
    An $\eo$ signature $f$ is pure-up, if $\text{Span}(f)\subseteq\eog$. A signature set $\mathcal{F}$ is  pure-up, if each signature in it is pure-up. Similarly, An $\eo$ signature $f$ is pure-down, if $\text{Span}(f)\subseteq\eol$. A signature set $\mathcal{F}$ is pure-down, if each signature in it is pure-down.

    These two categories are collectively termed \textit{pure signatures}.
\end{definition}
We summarize current notations as the following tree.

\begin{center}
\begin{tikzpicture}[
  minimum size=5mm,
  sibling distance=4cm,
  ]
\node[draw](1){Even arity signatures}
child {node[draw] (5)[xshift=-2cm]{$\eol$}}
child {node[draw](2) {$\eo$}
child {node[draw](7) {pure-down}  child{node[draw](3)[below of =2,yshift=-1cm] {$\eom$}}}
child {node[draw](4){pure-up} }
}
child {node[draw](8)[xshift=2cm] {$\eog$}}
;
\draw (3) edge  (4);
\draw[dashed, ->,thick] (7) edge node[sloped,above]{span contained in}   (5);
\draw[dashed, ->,thick] (4) edge  node[sloped,above]{span contained in}  (8);
\end{tikzpicture}
\end{center}

 In the following, we present the dichotomy of pure signatures and the corresponding definitions. It is worth noticing that in this section, pure-up signatures and pure-down signatures never mix in $\mathcal{F}$.

\begin{definition}
  Suppose $f$ is an arity $2d$ $\eo$ signature and $S\subseteq \eoe$. $f|_{S}$ is the restriction of $f$ to $S$, which means when $\alpha\in S$, $f|_{S}(\alpha)=f(\alpha)$, otherwise $f|_{S}(\alpha)=0$.
  
  If for any pairing $P$ of Var$(f)$, $f|_{\eom[P]}\in\mathscr{A}$, then we say that $f$ is $\eom[\mathscr{A}]$.
  
  Similarly, if for any pairing $P$ of Var$(f)$, $f|_{\eom[P]} \in \mathscr{P}$, then $f$ is $\eom[\mathscr{P}]$.
  \label{def:eoaeop}
\end{definition}
\begin{theorem}[The dichotomy for pure-up $\eo$ signatures]
    Suppose $\mathcal{F}$ is a set of pure-up $\eo$ signatures. Then $\#\eo(\mathcal{F})$ is \#P-hard unless all signatures in $\mathcal{F}$ are $\eom[\mathscr{A}]$ or all of them are $\eom[\mathscr{P}]$, in which cases the problem can be computed in polynomial time.
    \label{thm:puredichotomy}
\end{theorem}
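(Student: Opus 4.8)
\textbf{Proof proposal for Theorem~\ref{thm:puredichotomy}.}

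The plan is to split the statement into a tractability part and a hardness part, and to handle only the pure-up case (the pure-down case being dual). For the \emph{tractability} direction, suppose every $f\in\mathcal{F}$ is $\eom[\mathscr{A}]$ (the $\eom[\mathscr{P}]$ case is analogous, using $\mathscr{P}$ in place of $\mathscr{A}$). The key observation is that because $\mathcal{F}$ is pure-up, $\su(f)\subseteq\eoe$ while $\mathrm{Span}(f)\subseteq\eog$; the only way an affine subspace can meet $\eoe$ at all yet lie in $\eog$ is that the ``excess'' slack is concentrated at a few coordinates, and this structure is exactly what lets a receiving–sending mechanism operate. Concretely, in any $\#\eo$ instance $\Omega$ over $\mathcal{F}$, I would argue that there is always some vertex whose signature has a variable forced to the value $1$ by the affine span (a ``pin''); via the free $\neq_2$ edges this forces an opposite variable of a neighbouring signature to $0$. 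Iterating this over the whole instance, each signature $f$ collapses to its restriction $f|_{\eom[P]}$ for the pairing $P$ determined by which variable pairs got fixed opposite, and by hypothesis each such restriction lies in $\mathscr{A}$. At that point the instance is (via Lemma~\ref{thm:csp=eom} / the $\tau$–$\pi$ correspondence of Section~\ref{section:relation}) a $\#\csp(\mathcal{G})$ instance with $\mathcal{G}\subseteq\mathscr{A}$, which is polynomial-time computable by Theorem~\ref{thm:CSPdichotomy}. I would formalize the ``forcing'' by tracking the affine span of the global assignment space and showing it strictly shrinks each round, so the process terminates in polynomially many rounds, each doable in polynomial time; this is the ``support requirement $+$ type requirement'' template advertised in Section~\ref{resultm}.

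For the \emph{hardness} direction, assume $\mathcal{F}$ is pure-up but is neither entirely $\eom[\mathscr{A}]$ nor entirely $\eom[\mathscr{P}]$. There are two sub-cases. First, some single $f\in\mathcal{F}$ is itself neither $\eom[\mathscr{A}]$ nor $\eom[\mathscr{P}]$: then for some pairing $P$, $f|_{\eom[P]}\notin\mathscr{A}\cup\mathscr{P}$, and I would realize a quaternary (or binary) $\eo$ signature from $f$ by pinning the remaining coordinates using $\bi{1}{0}$ self-loops so as to expose that bad restriction, then invoke the six-vertex dichotomy Theorem~\ref{thm:sixvertexdichotomyMform} (its cases (a),(b) are exactly $\mathscr{P}$, $\mathscr{A}$) together with Theorem~\ref{arity4setdichotomy} to conclude \#P-hardness. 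Second, every individual $f$ is $\eom[\mathscr{A}]$ or $\eom[\mathscr{P}]$, but the set mixes: there are $f\in\mathcal{F}$ that is $\eom[\mathscr{A}]$ but not $\eom[\mathscr{P}]$ and $g\in\mathcal{F}$ that is $\eom[\mathscr{P}]$ but not $\eom[\mathscr{A}]$. Here I would prove a ``no-mixing'' lemma in the spirit of the three no-mixing lemmas of Section~\ref{2+4}: pin $f$ and $g$ down to quaternary/binary $\eo$ signatures $f'\in\mathscr{A}-\mathscr{P}$ and $g'\in\mathscr{P}-\mathscr{A}$ while preserving the offending restrictions, and then cite (the binary/quaternary no-mixing result behind) Theorem~\ref{arity4setdichotomy}, namely that $\mathscr{A}$ and $\mathscr{P}$ mixing is \#P-hard, to finish. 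The pure-down case follows by taking duals throughout, since $\mathscr{A}$, $\mathscr{P}$, $\eom[\mathscr{A}]$, $\eom[\mathscr{P}]$ are all self-dual notions.

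The main obstacle I anticipate is the tractability direction — specifically, showing that the receiving–sending mechanism \emph{always fires}: i.e., that in every non-degenerate $\#\eo$ instance over pure-up $\eom[\mathscr{A}]$ (or $\eom[\mathscr{P}]$) signatures, the global affine span of feasible assignments actually forces at least one opposite pair of variables at each stage, so the recursion genuinely progresses rather than stalling on signatures whose support is already ``spread out''. This requires a careful structural lemma about how $\su(f)\subseteq\eoe$ together with $\mathrm{Span}(f)\subseteq\eog$ constrains the affine equations defining $\mathrm{Span}(f)$ — essentially that one of these equations pins a single coordinate to $1$, or a small combination does — and then a propagation argument across the incidence structure of the instance. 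A secondary subtlety is bookkeeping the pinning gadgets in the hardness proofs so that pinning down the extra coordinates of a high-arity signature to isolate a bad $4$-variable restriction does not accidentally destroy the $\eo$ property or collapse the restriction; the $\bi{1}{0}$ self-loop operation from Section~\ref{section:GC} is designed for exactly this, but one must check it yields precisely $f|_{\eom[P]}$ up to the relevant pairing. Once these two points are in hand, the rest is assembling the no-mixing lemmas and appealing to Theorems~\ref{thm:CSPdichotomy}, \ref{thm:sixvertexdichotomyMform} and~\ref{arity4setdichotomy}.
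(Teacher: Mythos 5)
Your tractability argument is essentially the paper's: the receiving--sending mechanism collapses each signature to $f|_{\eom[P]}$ and the instance becomes a $\#\csp$ instance over $\mathscr{A}$ (or $\mathscr{P}$). The structural lemma you flag as the main obstacle is exactly the paper's key step, and it does hold: if an affine subspace $L\subseteq\eog$ meets $\hw{>}$, then some coordinate is fixed to $1$ on all of $L$. The proof is a short counting argument -- each coordinate of an affine space is either balanced (half $0$, half $1$) across $L$ or constant, and since the total number of $1$'s over all strings of $L$ strictly exceeds the number of $0$'s, there must be strictly more fixed-$1$ coordinates than fixed-$0$ coordinates. You also need (and do not mention) that pure-up signatures are closed under the $\bi{0}{1}$ self-loop, so that after fixing an opposite pair the residual signature is again pure-up and the recursion genuinely continues; without this the ``mechanism always fires'' claim fails at the second round. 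With those two lemmas your termination argument goes through.

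Your hardness argument, however, has a genuine gap. You propose to pin a high-arity bad signature down to arity at most $4$ ``so as to expose that bad restriction'' and then invoke the six-vertex / arity-$4$ dichotomies. But non-membership in $\mathscr{A}$ (or $\mathscr{P}$) at high arity need not be witnessed by any low-arity pinning: for instance, a phase of the form $(-1)^{x_1x_2x_3}$ is not in $\mathscr{A}$, yet every restriction obtained by fixing a variable is a quadratic phase and hence lies in $\mathscr{A}$. So after pinning you may only ever see signatures inside $\mathscr{A}\cup\mathscr{P}$, and the reduction to Theorem~\ref{arity4setdichotomy} stalls. The paper avoids this entirely: it adds self-loops only on the \emph{forced} variable pairs to realize, from each $f$, a full-arity $\eom$ signature $f'$ with $f|_{\eom[P]}=f'\otimes\Delta_0^{\otimes(2d-2k)}\otimes\Delta_1^{\otimes(2d-2k)}$, so that $f$ fails $\eom[\mathscr{A}]$ (resp.\ $\eom[\mathscr{P}]$) iff some realizable $f'\notin\mathscr{A}$ (resp.\ $\notin\mathscr{P}$); it then passes through Lemma~\ref{thm:csp=eom} to a $\#\csp$ instance of \emph{arbitrary} arity and cites the full $\#\csp$ dichotomy (Theorem~\ref{thm:CSPdichotomy}), whose hardness criterion is precisely ``not contained in $\mathscr{A}$ and not contained in $\mathscr{P}$'' -- no arity reduction and no separate no-mixing lemma are needed. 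You should replace your arity-$4$ reduction with this direct appeal to the $\#\csp$ dichotomy.
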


\subsubsection{Insights from tractable cases}
We primarily analyze pure-up signatures, noting that pure-down signatures admit dual analysis through symmetric arguments.
Key properties of pure-up signatures reveal that each either belongs to $\eom$, or has at least one variable fixed to 1.
These properties enable application of the active receiving-sending mechanism. Pure signatures thus induce a generalized support requirement with two dual cases (intersecting at the self-dual $\eom$ requirement), where tractability emerges through additional $\eom[\mathscr{A}]$ type or $\eom[\mathscr{P}]$ type requirements.

\subsubsection{Proof outline of hardness result}

For a pure signature $f$of arity $2d$, it can be proved that for an arbitrary pairing $P$ of $\text{Var}(f)$, $f|_{\eom[P]}=f'\otimes\Delta_0^{2d-2k}\otimes\Delta_1^{2d-2k}$ where $f'$ is an $\eom$ signature of arity $2k$. Such $f'$ can be realized by adding $2d-2k$ self-loops, and $f'\in \mathscr{A}$ (or $\mathscr{P}$) if and only if $f|_{\eom[P]}\in \mathscr{A}$ (or $\mathscr{P}$).

    Consequently, if $\mathcal{F}$ is not a subset of $\eom[\mathscr{A}]$ or $\eom[\mathscr{P}]$, then there exist $\eom$ signatures $f'\notin \mathscr{A}$ and $g'\notin \mathscr{P}$ can be realized.  By Lemma \ref{thm:csp=eom}, $\#\eo(\{f',g'\})$ is equivalent to a $\#\csp(\{f'',g''\})$ problem, where $f''\notin \mathscr{A}$ and $g''\notin \mathscr{P}$. By Theorem \ref{thm:CSPdichotomy} $\#\csp(\{f'',g''\})$ is \#P-hard, and consequently $\#\eo(\mathcal{F})$ is \#P-hard.
\subsection{Rebalancing signatures}\label{rererebl}
In this section, we define a property  named $rebalancing$ for $\eo$ signatures.

\begin{definition}[Rebalancing]
An $\eo$ signature $f$ of arity $2d$, is called \ba[0](\ba[1] respectively), when the following  recursive conditions are met.
\begin{itemize}
    \item $d=0$: No restriction.
    \item $d\ge 1$: For any variable $x$ in $X=\text{Var}(f)$, there exists a variable $y=\psi(x)$ different from $x$, such that for any $\alpha\in\{0,1\}^X$, if $\alpha_x=\alpha_y=0$($\alpha_x=\alpha_y=1$ respectively) then $f(\alpha)=0$. Besides, the arity $2d-2$ signature $f^{x=0,y=1}$ is 0-rebalancing($f^{x=1,y=0}$ is \ba[1] respectively).
\end{itemize}
For completeness we view all nontrivial signatures of arity 0, which is a non-zero constant, as \ba[0](\ba[1]) signatures. For the $d\ge 1$ case, $\psi$ is said to be the first level mapping of $f$.
\label{def:reba}
\end{definition}

Tractability can also be obtained based on \ba[0]/\ba[1] property.

\begin{theorem}
If $\mathcal{F}$ is composed of \ba[0](\ba[1]) $\eom[\mathscr{A}]$ signatures, or $\mathcal{F}$ is composed of \ba[0](\ba[1]) $\eom[\mathscr{P}]$ signatures, then $\#\eo(\mathcal{F})$ is polynomial-time computable.
\label{thm:rebaalg}
\end{theorem}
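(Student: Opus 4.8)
The plan is to realize $\#\eo(\mathcal{F})$ as an instance of a counting problem whose tractability is already known — namely $\#\csp$ with a free supply of binary disequality, equivalently $\#\eo(\pi(\mathcal{G}))$ for some $\mathcal{G}$ that lands in $\mathscr{A}$ or $\mathscr{P}$ — by exploiting the rebalancing structure to eliminate all variables that are ``forced'' by the support shape. Concretely, given an instance $I$ of $\#\eo(\mathcal{F})$, I would process it round by round. In each round, pick any signature $f_v$ in the grid and any of its edges $x$; the first level mapping $\psi$ of $f_v$ identifies a partner edge $y=\psi(x)$ such that $\alpha_x=\alpha_y=0$ never contributes (in the $0$-rebalancing case). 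Because every internal edge of an $\eo$ grid carries a $\neq_2$, choosing to fix $x=1$ at one endpoint propagates a fixed value $0$ along that edge to the other endpoint; iterating the rebalancing partner relation across the whole instance, one obtains a consistent global partial assignment in which a collection of ``opposite pairs'' of edges is pinned. Definition~\ref{def:reba} guarantees that the residual signature at each vertex, $f_v^{x=1,y=0}$, is again $0$-rebalancing, and — crucially — that restricting to the paired support does not destroy membership in $\eom[\mathscr{A}]$ (resp. $\eom[\mathscr{P}]$), since the pinned restriction of an $\eom[\mathscr{A}]$ signature is of the form $f'\otimes\Delta_0^{\otimes k}\otimes\Delta_1^{\otimes k}$ with $f'\in\mathscr{A}$ (resp. $\mathscr{P}$), exactly as in the hardness outline of Section~\ref{purewater}.

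The key steps, in order, are: (1) formalize the receiving-sending mechanism as an algorithm that, starting from any vertex, uses the first level mappings $\psi$ together with the $\neq_2$ edges to build a maximal consistent set of pinned opposite pairs, and argue this terminates in polynomially many rounds because each round strictly decreases total arity; (2) prove a ``collapse invariant'': after exhaustively applying the mechanism, every remaining signature at a vertex lies in $\eom$ with its $\tau$-set contained in $\mathscr{A}$ (resp. $\mathscr{P}$) — this is where Definition~\ref{def:reba}'s recursive clause and the type condition ($\eom[\mathscr{A}]$/$\eom[\mathscr{P}]$) get used, via Lemma~\ref{lem: eo+affine= eom} and the fact that $f\in\mathscr{A}$ iff $\tau(f)\subseteq\mathscr{A}$; (3) observe that summing over the Eulerian orientations compatible with the pinned pairs factors as a product of (a) a trivially computable local factor and (b) a $\#\eo$ instance over $\eom$ signatures; (4) invoke Lemma~\ref{thm:csp=eom} to rewrite this residual $\#\eo$ instance as a $\#\csp(\mathcal{G})$ instance with $\mathcal{G}\subseteq\mathscr{A}$ (resp. $\mathcal{G}\subseteq\mathscr{P}$), and conclude by Theorem~\ref{thm:CSPdichotomy} that it is polynomial-time computable; (5) assemble the full algorithm, accounting for the fact that the set of consistent pinnings may not be unique — one sums over all global ways the mechanism can fire, but I expect these to be either forced or cleanly parameterized so the sum has polynomially many terms, each solvable by the above.

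The main obstacle I anticipate is step (5) together with the consistency of the propagation in step (1): when I fix $x=1$ at vertex $v$ and push $0$ across the edge to vertex $w$, the rebalancing partner of that edge at $w$ gets fixed to $1$, which pushes $0$ across yet another edge, and so on — I must show this chain reaction is well-defined (never demands that a single edge be simultaneously forced to $0$ and to $1$ by two different chains) and that the number of ``free choices'' of where to initiate chains is controlled. The rebalancing definition is tailored so that a contradictory forcing would correspond to a zero contribution (the offending signature evaluates to $0$ on that pattern), so inconsistencies simply prune the sum rather than breaking the algorithm; making this precise, and bounding the branching, is the delicate combinatorial heart of the proof. A secondary subtlety is that $\psi$ is ``a'' partner, not canonical, and different choices of which edge $x$ to start from at a vertex may pair different edges — I need the type condition to be robust under all these choices, which is precisely why $\eom[\mathscr{A}]$ and $\eom[\mathscr{P}]$ are defined by quantifying over \emph{all} pairings $P$ of $\mathrm{Var}(f)$ in Definition~\ref{def:eoaeop}. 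Once that robustness is in hand, the reduction to $\#\csp$ and the appeal to Theorem~\ref{thm:CSPdichotomy} are routine.
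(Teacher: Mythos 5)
There is a genuine gap, and it sits exactly where you flagged it. Your step (1) begins by \emph{choosing} to fix $x=1$ at some vertex and then propagating; but for a \ba[0] signature no variable is forced to any value (the definition only says $x$ and $\psi(x)$ cannot both be $0$), so this is a branching decision, not a deduction. Your step (5) then proposes to sum over ``all global ways the mechanism can fire'' and hopes the number of such pinnings is polynomial. Nothing in Definition~\ref{def:reba} supports that hope: the support of the whole instance can contain exponentially many Eulerian orientations, and the set of maximal consistent chain-initiations is not cleanly parameterized. This is precisely the obstruction that distinguishes the rebalancing case from the pure case (where Lemma~\ref{lem:alluphasdelta1} supplies a genuinely forced variable to start the \emph{active} mechanism), and your proposal does not overcome it.

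The paper's proof avoids branching entirely via two ingredients you did not identify. First, closure of \ba[0] under gadget construction (Lemmas~\ref{lem:ba-tensor-closed}--\ref{lem:bagad}) yields Lemma~\ref{outside-1-rebalancing}: for any vertex $s$ with edge set $X$, the signature $\hat{f}$ of the \emph{rest of the instance}, viewed on $X$, is \ba[1]. Second, the first-level mappings $\psi$ of $f_s$ and $\theta$ of $\hat{f}$ define a directed bipartite graph on the statements $[x_i=0]$ and $[x_i=1]$ in which every vertex has out-degree exactly one; hence there is a directed cycle, and following it gives the two \emph{unconditional} implications $x_i=0\to x_j=1$ (via $\psi$) and $x_j=1\to x_i=0$ (via $\theta$), i.e.\ $x_i\neq x_j$ holds on the entire support of the partition function. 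One records this disequality, restricts $f_s$ to $f_s|_{p\neq q}$, updates $\psi$ and $\theta$ (as in Lemma~\ref{lem:ba-jumper-closed}), and iterates until $f_s$ is replaced by $f_s|_{\eom[P_s]}$ --- all without ever fixing a single variable or splitting the sum. Only then does your step (4) apply: the $\eom[\mathscr{A}]$/$\eom[\mathscr{P}]$ hypothesis (quantified over all pairings, as you correctly noted) guarantees the resulting set lies in $\mathscr{A}$ or $\mathscr{P}$, and Lemma~\ref{thm:csp=eom} plus Theorem~\ref{thm:CSPdichotomy} finish the argument. Your endgame is right, but without the outside-rebalancing lemma and the cycle argument the reduction to that endgame does not go through.
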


\begin{remark}
It can be verified that every $\eom$ signature satisfies both \ba[0] and \ba[1] conditions. All pure-up signatures are \ba[0] due to the following arguments. When a pure-up signature $f$ satisfies $\su(f)\cap\hw{>}\neq\emptyset$, it necessarily contains a $\Delta_1(x)$ factor. This factor induces a first-level mapping $\psi$ where all variables except $x$ map to $x$, while $\psi(x)$ can be any variable other than itself. This mapping process can be recursively applied until the signature $f$ is transformed into an $\eom$ signature, which remains \ba[0]. Through dual reasoning, all pure-down signatures satisfy \ba[1].
\end{remark}
\subsubsection{Insights from the algorithm}

This section extends the receiving-sending mechanism to rebalancing signatures. While the active receiving-sending mechanism for pure signatures requires initial $\Delta_1$ ($\Delta_0$) triggers, rebalancing signatures may lack such initial conditions. Nevertheless, by assuming fixed values for certain variables, the mechanism still functions through what we term the passive receiving-sending mechanism. Building on this modified mechanism, we develop a polynomial-time algorithm for $\#\eo$ problems defined by \ba[0] or \ba[1] signatures meeting $\mathscr{A}$-type or $\mathscr{P}$-type requirements.

Definition~\ref{def:reba} for $d\ge 1$ specifies two necessary conditions: the existence of mapping $\psi$ (first-level condition) and the recursive condition. Crucially, the first-level condition remains unconditional - the relationship between $x$ and $\psi(x)$ persists even when other variables are fixed. This property enables identification of loops (rather than $\Delta_0$/$\Delta_1$-initiated paths) in the passive mechanism, while the recursive condition ensures the mechanism's iterative applicability.

\subsection{Two strange signatures}\label{4056}

In this subsection we present the definition of $f_{40}$ and $f_{56}$, which are considered as two important examples in our study. Some of our results are motivated by these signatures. We first introduce some notations to describe signatures with large arity and a sparse support.

For matrices $A_{s\times p}$ and $B_{s\times q}$, we use $C_{s\times (p+q)}=[A\ B]$ to denote the matrix $C$  satisfying 
\begin{equation}
    C(i,j)=\begin{cases}
        A(i,j),& 1\le j\le p;\\
        B(i,j-p),& p+1\le j\le p+q.
    \end{cases}
\end{equation}
Informally speaking, $C$ is a concatenation of $A,B$.
We use $A^{\to k}$ to denote the matrix $[A\  A\ ...\ A]$, which is a concatenation of $k$ copies of matrix $A$. We also define the following matrices.

$$H_0=\begin{pmatrix}
    0\\
     0\\
      0\\
       0\\
        0\\
\end{pmatrix}, 
H_2=\begin{pmatrix}
    1&1&1&1&0&0&0&0&0&0\\
    1&0&0&0&1&1&1&0&0&0\\
    0&1&0&0&1&0&0&1&1&0\\
    0&0&1&0&0&1&0&1&0&1\\
    0&0&0&1&0&0&1&0&1&1\\
\end{pmatrix},
H_4=\begin{pmatrix}
    0&1&1&1&1\\
    1&0&1&1&1\\
    1&1&0&1&1\\
    1&1&1&0&1\\
    1&1&1&1&0\\
\end{pmatrix}$$

For a signature $f$ of arity $k$, we use the support matrix $R_{s\times k}$ to describe its support. A string $\alpha\in \{0,1\}^k$ is a support string of $f$ if and only if $\alpha$ equals a row in $R_{s\times k}$. $f_{40}$ has the support matrix $R_{5\times40}=[H_2^{\to 3}\  H_4^{\to 2}]$, 
$f_{56}$ has the support matrix $R_{5\times56}=[H_0\ H_2^{\to 4}\ H_4^{\to 3}]$ and both of them are 0-1 weighted. Though the supports of them are sparse, they are not able to be captured by a number of existing algorithms and hardness results.

\subsection{Relations between results}\label{relations}

The first two parts collectively establish complete dichotomies for $\#\eo$ problems, incorporating both tractability and hardness results: the first for binary and quaternary signature sets, the second for pure-up/pure-down signature sets. The hardness results from the first part form the foundational basis for the hardness classification in complex-weighted $\#\eo$, while those from the second part are equally pivotal for establishing the full $\#\eo$ dichotomy.

The third part focuses on defining the rebalancing property and presenting a polynomial-time algorithm for 0-rebalancing/1-rebalancing signatures with type requirements. Notably, this algorithm subsumes both previous algorithms through its innovative use of support requirements to emulate the $\eom$ property's effects. The case of $f_{40}$ (defined in Section~\ref{4056}) exemplifies this generalization - although not a pure signature, its satisfaction of \ba[0] demonstrates our algorithm's strictly broader applicability compared to the pure signatures' algorithm.

We observe that although the results in \cite{shao2024eulerian} are restricted to 0-1 weighted $\#\eo$ problems, they nevertheless capture essential support requirements for signatures. Specifically, the pure property inherently implies the $\delta$-affine property, and their chain reaction algorithm implements what we term the active receiving-sending mechanism.

The complex-weighted $\#\eo$ dichotomy in \cite{meng2025fpnp} adopts the notation $\forall k \uparrow$ for pure-up signatures and $\forall k \downarrow$ for pure-down signatures, where our Theorem~\ref{thm:puredichotomy} directly resolves "Case 3a" and "Case 4a". Notably, the signatures $f_{40}$ and $f_{56}$ introduced in our work, whose non-trivial nature is evident from our analysis, provide concrete examples for "Case 3c".
\section{Generalization of the six-vertex model to a set of signatures} \label{section: arity4set}

In this section, we prove Theorem \ref{arity4setdichotomy}. Firstly, we introduce some special notations used in this section.

We use $[0,(a,b,c),0,0]$ to denote a ternary signature $f$ satisfying $\su(f)=\{001,010,100\}$ and 
$$f(001)=a,f(010)=b,f(100)=c.$$ 
Similarly, $[0,0,(a,b,c),0]$ denotes a signature $f$ satisfying $\su(f)=\{011,101,110\}$ and 
$$f(011)=a,f(101)=b,f(110)=c.$$

We denote all signatures with the same size of support by adding a subscript to the original signature set. For example, $\mathscr{A}_1=\{f\in\mathscr{A}\mid|\su(f)|=1\}$. We can notice that since all signatures in $\mathscr{P}$ and $\mathscr{A}$ have affine supports, $\mathscr{A}_3$ and $\mathscr{P}_3$ are both empty.

In Section \ref{Characterizations}, we give characterizations of signatures in the tractable cases of Theorem \ref{thm:sixvertexdichotomyMform}. In Sections \ref{subsection:4aritysetalgorithm} and \ref{subsection:4aritysethardness}, we prove the tractability and hardness result in Theorem \ref{arity4setdichotomy} respectively.

\subsection{Characterizations of specific $\eo$ signatures}\label{Characterizations}

In this section, we build a detailed description of quaternary $\eo$ signatures in $\mathscr{P}$, $\mathscr{A}$, $\mathscr{M}\otimes\Delta_1$ and $\widetilde{\mathscr{M}}\otimes\Delta_0$.
All signatures in $\mathscr{A}$ and $\mathscr{P}$ have affine support. By Lemma \ref{lem: eo+affine= eom}, all $\eo$ signatures with affine supports are in $\eom$. For a signature $f$, we use $\text{Span}(f)$ to denote the affine span of $\su(f)$. We state a slightly stronger lemma in the following, which can be proved by the same approach in \cite[Lemma 5.7]{cai2020beyond} as well.

\begin{lemma}\cite{cai2020beyond} \label{lemmaaffineeosupport}
Suppose $f$ is an $\eo$ signature with arity $2d$. If $\text{Span}(f)$ is a linear affine subspace of $\{0,1\}^{2d}$, then $f$ is in $\eom$.

\end{lemma}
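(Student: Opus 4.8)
\textbf{Proof plan for Lemma~\ref{lemmaaffineeosupport}.} The statement asserts that an $\eo$ signature $f$ of arity $2d$ whose affine span $\text{Span}(f)$ is an affine subspace of $\{0,1\}^{2d}$ must be an $\eom$ signature, i.e.\ a pairwise opposite signature. The plan is to show that the \emph{linear} space obtained by translating $\text{Span}(f)$ to the origin carries enough structure to exhibit an explicit pairing $P$ of $\text{Var}(f)$ for which $\su(f)\subseteq\eom[P]$; this is essentially the content of \cite[Lemma 5.7]{cai2020beyond}, and the claim here is only a mild strengthening (Span instead of "support is itself affine"), so I expect the same argument to go through verbatim.

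First I would fix any $\beta\in\su(f)$ and set $V=\{\alpha\oplus\beta:\alpha\in\text{Span}(f)\}$, which by hypothesis is a linear subspace of $\mathbb{F}_2^{2d}$ containing $\su(f)\oplus\beta$. The key point is a parity/weight constraint: since every $\alpha\in\su(f)$ lies in $\eoe$ (exactly $d$ ones), every vector $\gamma=\alpha\oplus\beta$ in the spanning set — and hence, by linearity, every vector in $V$ — satisfies a balance condition. Concretely, writing $\gamma=\alpha\oplus\beta$ with $\alpha,\beta\in\eoe$, the positions where $\gamma$ is $1$ split into those where $\beta$ flips a $1$ to a $0$ and those where it flips a $0$ to a $1$, and these two sets have equal size. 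I would then argue that this forces $V$ to be spanned by vectors supported on "opposite pairs," and more importantly that the \emph{whole} coordinate set decomposes into $d$ pairs $\{i,j\}$ such that on each pair every element of $\su(f)$ takes opposite values: pick coordinates greedily, using that if coordinate $i$ is not constant on $\su(f)$ then some $\gamma\in V$ has $\gamma_i=1$, and the balance condition produces a partner coordinate $j$ with $\gamma_j=1$; linear-algebraic bookkeeping (reducing $V$ to a basis in a suitable echelon form, or induction on $d$ by restricting to the subspace where one chosen pair is fixed opposite) then shows the pairing is global and consistent.

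The main obstacle is the bookkeeping that upgrades the "balance holds for every vector of $V$" observation into a single consistent pairing $P$ of all $2d$ coordinates. Naively, different basis vectors of $V$ could suggest incompatible pairings; the argument must use that $\su(f)$ itself lies in $\eoe$ (not merely that differences are balanced) to pin down a canonical partition — for instance, by showing the relation "$i$ and $j$ are always opposite on $\su(f)$" is forced to be an equivalence-like perfect matching because $f$ has exactly $d$ ones on every support string. An induction on $d$ seems cleanest: after identifying one opposite pair $\{i,j\}$, restrict $f$ by $x_i\neq x_j$ (equivalently pass to $f^{x_i\neq x_j}$ up to the harmless $\eom$ factor), observe the restricted signature still has affine span and is an $\eo$ signature of arity $2d-2$, and apply the inductive hypothesis; the base case $d=0$ is trivial.

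Since the excerpt already grants Lemma~\ref{lem: eo+affine= eom} and explicitly states that this lemma "can be proved by the same approach in \cite[Lemma 5.7]{cai2020beyond}," I would, in the actual write-up, either cite that proof directly and indicate the one place where "support affine" is replaced by "$\text{Span}(f)$ affine" — namely, the only use of affineness is that differences of support strings generate a linear space, which is equally true here — or reproduce the short induction above. I do not expect any genuinely new difficulty beyond that bookkeeping, so the proof should be short.
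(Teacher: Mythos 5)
There is a genuine gap, and it sits exactly where the content of the lemma lives. You claim that because each generator $\gamma=\alpha\oplus\beta$ (with $\alpha\in\su(f)$) has its $1$'s split evenly between $\su(\beta)$ and its complement, ``by linearity'' every vector of $V$ does too. That balance condition is an equality of integer cardinalities, $|\su(\gamma)\cap\su(\beta)|=|\su(\gamma)\setminus\su(\beta)|$, and it is \emph{not} preserved under XOR. Concretely, take $\beta=1100$ and the generators $\gamma=0110$, $\gamma'=0101$ coming from $\alpha=1010$, $\alpha'=1001$: each satisfies the balance condition, but $\gamma\oplus\gamma'=0011$ has no $1$'s in $\su(\beta)$ and two outside; correspondingly $\beta\oplus\gamma\oplus\gamma'=1111\notin\eoe$. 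This is precisely the signature with support $\{1100,1010,1001\}$, i.e.\ an element of $\mathscr{M}\otimes\Delta_1$, which is an $\eo$ signature that is \emph{not} in $\eom$ (no pairing is compatible with all three strings). So the statement your argument would actually establish --- that $\su(f)\subseteq\eoe$ together with the automatic fact that $\mathrm{Span}(f)$ is an affine subspace implies $f\in\eom$ --- is false, and your closing remark that ``the only use of affineness is that differences of support strings generate a linear space, which is equally true here'' misidentifies the hypothesis.

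The strengthened hypothesis, as the paper actually uses it (see the case analysis before Lemma~\ref{lem:alluptocsp}: ``all strings in $\mathrm{Span}(f)$ are in $\eoe$''), is that the affine span is contained in $\hw{=}$, i.e.\ that the XOR of any odd number of support strings is again balanced. That is exactly the statement you tried to get for free ``by linearity,'' since $\mathrm{wt}(\beta\oplus v)=d$ for all $v\in V$ is equivalent to your balance condition holding on all of $V$. Once you invoke the hypothesis there instead, the remainder of your plan (extracting a consistent perfect matching of coordinates $i\in\su(\beta)$ with $j\notin\su(\beta)$ such that $v_i=v_j$ on all of $V$, e.g.\ by matching equal coordinate functionals on $V$ or by your induction on $d$) is workable; but note that this matching step is the actual mathematical content, the paper itself supplies no proof beyond citing \cite[Lemma 5.7]{cai2020beyond}, and your write-up currently defers it as ``bookkeeping'' while resting on a false reduction.
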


Now we characterize the quaternary signatures in $\mathscr{P}$, $\mathscr{A}$, $\mathscr{M}\otimes\Delta_1$ and $\widetilde{\mathscr{M}}\otimes\Delta_0$.

\begin{lemma}\label{lem:arity4inP}
Suppose $f\in \mathscr{P}$ is a non-trivial $\eo$ signature of arity 4. Then one of the following statements holds up to variable reordering:

i. $f\in\mathscr{P}_1$, and $f=\lambda\Delta_0^{\otimes2}\otimes\Delta_1^{\otimes2},\lambda\neq0$. 

ii. $f\in\mathscr{P}_2$, and either (1) or (2) holds.

(1) $f=\Delta_0\otimes\Delta_1\otimes\neq_2^{a,b},ab\neq 0$. 

(2) $f=(\neq_4^{a,b}), ab\neq0.$

iii. $f\in\mathscr{P}_4$, and $f=(\neq_2^{a,b}\otimes\neq_2^{c,d}), abcd\neq0$.
\end{lemma}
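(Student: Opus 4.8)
The plan is to classify non-trivial arity-4 $\eo$ signatures $f\in\mathscr{P}$ by the size of their support, using the fact (from the definition of $\mathscr{P}$) that $f$ is a product of unary signatures, binary equalities $=_2$, and binary disequalities $\neq_2$ on not-necessarily-disjoint subsets of the four variables $x_1,x_2,x_3,x_4$. The key preliminary observation is that $f$ being an $\eo$ signature of arity $4$ forces $\su(f)\subseteq\eoe$, i.e. every support string has exactly two $1$'s, so in particular no variable can be pinned by a unary factor $\Delta_0$ on one variable without a compensating $\Delta_1$ elsewhere, and $|\su(f)|\in\{1,2,4\}$ since affine spans over $\mathbb{F}_2$ have sizes that are powers of $2$ and $|\su(f)|\le\binom{4}{2}=6$. (Here $\mathscr{A}_3$ and $\mathscr{P}_3$ being empty, noted just before the lemma, is what rules out support size $3$.) So the case split is $f\in\mathscr{P}_1$, $f\in\mathscr{P}_2$, $f\in\mathscr{P}_4$.

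First, for $f\in\mathscr{P}_1$: a single support string $\alpha$ with $\#_1(\alpha)=2$ means, up to reordering, $\su(f)=\{0011\}$, and the only $\mathscr{P}$-type product realizing this (and non-trivial) is a product of unary factors, giving $f=\lambda\,\Delta_0\otimes\Delta_0\otimes\Delta_1\otimes\Delta_1$ with $\lambda\ne0$; this is statement (i). Second, for $f\in\mathscr{P}_2$: the two support strings span a $1$-dimensional affine line, and since both lie in $\eoe$ their coordinatewise behavior splits the four variables into those that are constant on $\su(f)$ and those that vary. If exactly two variables vary, they must vary oppositely (to keep the $1$-count at $2$), so $f$ restricted to those two is $\neq_2^{a,b}$ and the other two are pinned, one to $0$ and one to $1$ — this is case (ii)(1), $f=\Delta_0\otimes\Delta_1\otimes\neq_2^{a,b}$ with $ab\ne0$. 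If all four variables vary, then the two strings are $\alpha$ and $\overline\alpha$ with $\#_1(\alpha)=2$, and the $\mathscr{P}$-factorization is two copies of $\neq_2$ pairing up the variables consistently, giving $f=\neq_4^{a,b}$ with $ab\ne0$ — this is case (ii)(2). (One checks no other number of varying variables is compatible with $\eoe$ and with a $\mathscr{P}$-product.) Third, for $f\in\mathscr{P}_4$: the support is a $2$-dimensional affine subspace inside $\eoe$; since $\binom{4}{2}=6$ and we need exactly $4$ strings closed under the affine structure, the support must be (up to reordering) $\{0011,0110,1001,1100\}$ or a variable permutation — i.e. a product of two disjoint $\neq_2$'s — and the weights factor accordingly as $\neq_2^{a,b}\otimes\neq_2^{c,d}$ with $abcd\ne0$; this is (iii).

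The main obstacle I anticipate is the exhaustiveness of the $\mathscr{P}_2$ and $\mathscr{P}_4$ analyses: one must argue that no $\mathscr{P}$-product of unary, $=_2$, and $\neq_2$ factors on overlapping variable subsets can produce a support of size $2$ or $4$ other than the listed forms — for instance ruling out configurations where an $=_2$ factor is present but the resulting support still lies in $\eoe$, or where three variables "vary" in a way that is not realizable by $\neq_2$'s. The clean way to handle this is to note that an $=_2$ factor on two variables forces them equal on the support, which together with the $\eoe$ constraint and the other factors pins down the remaining coordinates; combined with the affine-span/size argument (support sizes are powers of $2$, bounded by $6$, hence in $\{1,2,4\}$) and Lemma~\ref{lemmaaffineeosupport} guaranteeing $f\in\eom$, one reduces to finitely many combinatorial patterns on $4$ variables, each of which is checked directly. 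Variable reordering absorbs all symmetry, so only the representatives in (i)–(iii) survive.
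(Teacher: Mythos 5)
Your proposal is correct and takes essentially the same route as the paper: a case split on $|\su(f)|\in\{1,2,4\}$ (justified because non-trivial $\mathscr{P}$ signatures have affine support of power-of-two size, bounded by the six weight-2 strings), followed by combining the $\eoe$ constraint with the affine/product structure to pin down each form. Your added detail on why the weights in cases (ii)(2) and (iii) factor as (tensor products of) generalized disequalities — namely that any $=_2$ or wrongly-placed $\neq_2$ factor would be incompatible with the identified support — just makes explicit what the paper leaves implicit in its appeal to reducibility.
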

\begin{proof}
When $|\su(f)|=1$, the claim i is obvious.

When $|\su(f)|=2$, there are two conditions: (1) the non-zero values are not on two complementary inputs; (2) the non-zero values are on two complementary inputs. These two conditions combined with Lemma \ref{lemmaaffineeosupport} show the specific form of $f$ as in ii.

When $|\su(f)|=4$, according to Lemma \ref{lemmaaffineeosupport}, $\su(f)$ must be two pair of complementary inputs. Since the constraints are already established, there should not be any other non-trivial binary relations between the input variables. Then $f$ is reducible, and can be described as the form in iii.
\end{proof}

\begin{lemma}\label{lem:arity4inA}
Suppose $f\in \mathscr{A}$ is a non-trivial $\eo$ signature of arity 4. Then one of the following statements holds up to variable reordering:

i. $f\in\mathscr{A}_1$, and $f=\lambda\Delta_0^{\otimes2}\otimes\Delta_1^{\otimes2},\lambda\neq0$. 

ii. $f\in\mathscr{A}_2$, and either (1) or (2) holds.

(1) $f=\Delta_0\otimes\Delta_1\otimes\neq_2^{a,b}, ab\neq0, a/b\in\{\pm1,\pm\mathfrak{i}\}$.

(2) $f=(\neq_4^{a,b}), ab\neq0, a/b\in\{\pm1,\pm\mathfrak{i}\}$.

iii. $f\in\mathscr{A}_4$, and there exists distinct $\alpha,\beta$ such that  $\su(f)=\{\alpha,\overline{\alpha},\beta,\overline{\beta}\}$. Furthermore, if $x,y\in\su(f)$, then $f(x)/f(y)\in\{\pm1,\pm\mathfrak{i}\}$. In addition, $f(\alpha)f(\overline{\alpha})=\pm f(\beta)f(\overline{\beta})$.

\end{lemma}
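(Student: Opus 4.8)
The plan is to mimic the case analysis of Lemma~\ref{lem:arity4inP} but now track the extra multiplicative constraints that membership in $\mathscr{A}$ imposes. Recall from Lemma~\ref{lemmaaffineeosupport} that any $\eo$ signature with affine support is an $\eom$ signature, and that $\mathscr{A}_3=\emptyset$, so for a non-trivial arity-4 $f\in\mathscr{A}$ we only need to treat $|\su(f)|\in\{1,2,4\}$; the support must be affine and contained in $\eoe=\{0011,0101,0110,1001,1010,1100\}$ up to reordering. The $\mathscr{A}$-membership will be exploited through the defining form $\lambda\cdot\chi_{AX}\cdot\mathfrak{i}^{\sum_j L_j(X)}$: the $\chi_{AX}$ part forces the support to be affine (already handled), while the $\mathfrak{i}^{\sum_j L_j}$ part forces every ratio of two non-zero values to lie in $\{\pm 1,\pm\mathfrak{i}\}$, since on the affine support each $L_j$ is an $\mathbb{F}_2$-affine function and the exponent changes by an element of $\mathbb{Z}/4$ between any two support points.

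First I would dispose of $|\su(f)|=1$: then $f=\lambda\Delta_0^{\otimes 2}\otimes\Delta_1^{\otimes 2}$ up to reordering exactly as in Lemma~\ref{lem:arity4inP}(i), with the single constraint $\lambda\neq 0$; no ratio condition is needed since there is only one non-zero value. Next, for $|\su(f)|=2$, affineness of a two-element set is automatic, and Lemma~\ref{lemmaaffineeosupport} plus the same split as before (two non-complementary inputs versus two complementary inputs) gives the two shapes $f=\Delta_0\otimes\Delta_1\otimes\neq_2^{a,b}$ and $f=\neq_4^{a,b}$. The new content is that $f\in\mathscr{A}$ now additionally forces $a/b\in\{\pm1,\pm\mathfrak{i}\}$; I would derive this directly from the $\mathfrak{i}^{L}$ representation, or, more cleanly, by invoking the characterization that a binary signature $[a,0,b]$ (equivalently $\neq_2^{a,b}$) lies in $\mathscr{A}$ iff $a/b\in\{\pm1,\pm\mathfrak{i}\}$ — this is a standard fact about $\mathscr{A}$ and can be checked on the $2\times 2$ matrix, and the tensor and permutation closure of $\mathscr{A}$ transfers it to the displayed forms.

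For $|\su(f)|=4$, affineness forces $\su(f)$ to be a $2$-dimensional affine subspace inside $\eoe$; such a subspace is closed under the translation sending $\alpha\mapsto\overline\alpha$ (since the all-ones string lies in the linear part whenever the affine subspace meets $\eoe$ in four points — this is precisely the structure used implicitly in Lemma~\ref{lem:arity4inP}(iii)), hence $\su(f)=\{\alpha,\overline\alpha,\beta,\overline\beta\}$ for distinct $\alpha,\beta$. The ratio condition $f(x)/f(y)\in\{\pm1,\pm\mathfrak{i}\}$ for all $x,y\in\su(f)$ again follows from the $\mathfrak{i}^{\sum_j L_j}$ factor being constant in modulus and a fourth root of unity on the support (so all four values have the same modulus $|\lambda|$ and differ by powers of $\mathfrak{i}$). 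The last assertion $f(\alpha)f(\overline\alpha)=\pm f(\beta)f(\overline\beta)$ is the genuinely $\mathscr{A}$-specific relation: writing each value as $\lambda\mathfrak{i}^{e(\gamma)}$ with $e(\gamma)=\sum_j L_j(\gamma)\bmod 4$, the claim becomes $e(\alpha)+e(\overline\alpha)\equiv e(\beta)+e(\overline\beta)\pmod 2$, which holds because $\sum_j\big(L_j(\alpha)+L_j(\overline\alpha)\big)=\sum_j\big(L_j(\beta)+L_j(\overline\beta)\big)$: each $L_j$ is affine, so $L_j(\gamma)+L_j(\overline\gamma)$ equals the linear coefficient vector of $L_j$ paired with the all-ones vector, independent of $\gamma$. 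Summing over $j$ gives the $\bmod\ 2$ equality, hence the product identity up to sign.

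The main obstacle I anticipate is the $|\su(f)|=4$ step, specifically pinning down that every $4$-element affine support inside $\eoe$ has the complement-closed form $\{\alpha,\overline\alpha,\beta,\overline\beta\}$ and that no \emph{further} $\mathbb{F}_2$-relation beyond the two defining parity equations is possible — this is where one must argue (as in Lemma~\ref{lem:arity4inP}) that the signature is ``reducible'' in the appropriate sense and that the freedom left is exactly the choice of the two pairs and their four values subject to the modulus/parity constraints. The $|\su(f)|\le 2$ cases and the ratio bookkeeping are routine given the closure properties of $\mathscr{A}$ recorded in Section~\ref{sixvertexchapter}; the product identity $f(\alpha)f(\overline\alpha)=\pm f(\beta)f(\overline\beta)$, while the most novel-looking, reduces cleanly to the ``sum of affine forms on a string and its complement is string-independent'' observation above.
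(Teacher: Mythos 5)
Your proposal is correct and follows essentially the same route as the paper: case analysis on $|\su(f)|\in\{1,2,4\}$ via Lemma~\ref{lemmaaffineeosupport}, with the ratio condition read off from the $\mathfrak{i}^{\sum_j L_j}$ factor and the product identity $f(\alpha)f(\overline\alpha)=\pm f(\beta)f(\overline\beta)$ obtained from the parity of the exponent difference. One small imprecision: the integer sums $\sum_j\bigl(L_j(\gamma)+L_j(\overline\gamma)\bigr)$ need not be literally equal for $\gamma=\alpha$ versus $\gamma=\beta$ (each term can be $0$ or $2$ when the linear part of $L_j$ pairs to $0$ with $\mathbf{1}$); they are only congruent mod $2$, which is exactly what the paper uses and is all your argument needs.
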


\begin{proof}
Case i and Case ii are similar to the proof of Lemma 2 with the help of the definition of $\mathscr{A}$ and Lemma \ref{lemmaaffineeosupport}.

If $f\in\mathscr{A}_4$, then by Lemma \ref{lemmaaffineeosupport} $\su(f)$ contains two pairs of complementary inputs, namely $(\alpha,\overline{\alpha})$ and $(\beta,\overline{\beta})$. Since $f=\lambda \cdot\chi_{AX} \cdot\mathfrak i^{L_1(X)+...+L_m(X)}$, in which each $L_j$ is a linear function on field $\mathbb{F}_2$ and $X=(x_1,x_2,x_3,x_4,1)$. We can see that for $f(\alpha)f(\overline{\alpha})$ and $f(\beta)f(\overline{\beta})$, since $\alpha\oplus\overline\alpha=\beta\oplus\overline\beta=\mathbf{1}$, the difference between each $L_j(\alpha)+L_j(\overline\alpha)$ and $L_j(\beta)+L_j(\overline\beta)$ is 0 or 2. Therefore, the difference in exponents is an even integer. That is, $f(\alpha)f(\overline{\alpha})=\mathfrak i^sf(\beta)f(\overline{\beta})$ with $s$ as an even number.
\end{proof}

\begin{lemma}\label{lem:arity4inM}
Suppose $f\in \mathscr{M}\otimes\Delta_1$ is a non-trivial $\eo$ signature of arity 4. Then one of the following statements holds up to variable reordering:

i. $f\in\mathscr{M}_1\otimes\Delta_1$, and $f=\lambda\Delta_0^{\otimes2}\otimes\Delta_1^{\otimes2},\lambda\neq0$.

ii. $f\in\mathscr{M}_2\otimes\Delta_1$, and $f=\Delta_0\otimes\Delta_1\otimes\neq_2^{a,b}, ab\neq0$.

iii. $f\in\mathscr{M}_3\otimes\Delta_1$.

Signatures in $\widetilde{\mathscr{M}}\otimes\Delta_0$ can be classified similarly.
\end{lemma}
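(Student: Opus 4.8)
The plan is to mirror the structure of the proofs of Lemma~\ref{lem:arity4inP} and Lemma~\ref{lem:arity4inA}: stratify a non-trivial $f\in\mathscr{M}\otimes\Delta_1$ of arity $4$ by the size of its support, and in each case read off the explicit form from the defining property of $\mathscr{M}\otimes\Delta_1$. Recall that $f\in\mathscr{M}\otimes\Delta_1$ means, up to variable reordering, $f=g\otimes\Delta_1$ with $g$ a ternary signature whose support consists only of Hamming-weight-$1$ strings, i.e. $\su(f)\subseteq\{1\}\times\{100,010,001\}=\{1100,1010,1001\}$ (with the distinguished ``$\Delta_1$'' coordinate being the one pinned to $1$). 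So $|\su(f)|\in\{1,2,3\}$, and $\mathscr{M}_k\otimes\Delta_1$ records the case $|\su(f)|=k$.

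First I would treat $|\su(f)|=1$: the unique support string is one of $1100,1010,1001$, so after reordering it is $1100$, giving $f=\lambda\,\Delta_0^{\otimes2}\otimes\Delta_1^{\otimes2}$ with $\lambda\neq 0$ — statement~i. (Note this is syntactically the same normal form as the size-$1$ cases of Lemmas~\ref{lem:arity4inP} and~\ref{lem:arity4inA}, which is the point of listing it.) Next, $|\su(f)|=2$: the two support strings are two of $\{1100,1010,1001\}$; these always agree in the ``$\Delta_1$'' coordinate (the first one, call it $x_1$, which is $1$ on all of $\su(f)$) and, since each weight-$1$ string on the remaining three coordinates puts the single $1$ in a different place, the two strings also share a coordinate that is $0$ throughout. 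So two of the four variables are pinned ($x_1=1$, and some $x_j=0$), and the remaining two variables $x_k,x_\ell$ carry exactly the relation ``weight $1$'', i.e. $x_k\neq x_\ell$; hence $f=\Delta_0\otimes\Delta_1\otimes\neq_2^{a,b}$ with $ab\neq0$ (the weights $a,b$ being the two nonzero values of $f$) — statement~ii. Finally, $|\su(f)|=3$ forces $\su(f)=\{1100,1010,1001\}$ exactly, which is precisely $f\in\mathscr{M}_3\otimes\Delta_1$ — statement~iii, with nothing further to prove. The last sentence, that $\widetilde{\mathscr{M}}\otimes\Delta_0$ signatures classify analogously, follows by applying the whole argument to the dual signature $\widetilde f$ (replace each support string by its bit-complement; weight-$1$ becomes weight-$2$, $\Delta_1$ becomes $\Delta_0$), which is an instance of the ``dual'' principle already invoked repeatedly in the paper.

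The argument here is almost entirely bookkeeping, so there is no serious obstacle; the only point requiring a little care is making the size-$2$ case genuinely canonical. Concretely, one must check that \emph{any} $2$-element subset of $\{1100,1010,1001\}$, after a suitable permutation of the four variables, becomes $\{1100,1010\}$ (equivalently $\su(\Delta_0\otimes\Delta_1\otimes\neq_2)$ after reordering), and that no further linear or non-linear relation among the variables can hide in a $2$-point support beyond the two pinnings and the single disequality — this is immediate since a $2$-point set on $4$ Boolean variables is determined by fixing the $3$ coordinates on which the two points agree and imposing a single disequality on the coordinate where they differ, but here they differ in exactly two coordinates and those two are forced to be ``opposite'', giving one disequality and two pinnings, consistent with $\Delta_0\otimes\Delta_1\otimes\neq_2^{a,b}$. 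I would also remark, for use later, that the normal forms in i and ii coincide verbatim with those appearing in Lemmas~\ref{lem:arity4inP}(i),(ii)(1) and~\ref{lem:arity4inA}(i),(ii)(1), which is exactly why these low-support cases will later merge across the $\mathscr{P}$, $\mathscr{A}$, and $\mathscr{M}\otimes\Delta_1$ classes in the no-mixing analysis.
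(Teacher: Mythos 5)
Your proposal is correct. The paper in fact states Lemma~\ref{lem:arity4inM} without any proof, evidently regarding it as routine once the support characterization $\su(f)\subseteq\{1100,1010,1001\}$ is in hand; your stratification by support size is exactly the argument pattern used in the paper's proofs of Lemmas~\ref{lem:arity4inP} and~\ref{lem:arity4inA}, and your case analysis (one string gives two pinned pairs, two strings give two pinnings plus one disequality on the two coordinates where they differ, three strings is the full set $\mathscr{M}_3\otimes\Delta_1$, and the $\widetilde{\mathscr{M}}\otimes\Delta_0$ case follows by duality) supplies precisely the omitted details with no gaps.
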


\begin{remark}\label{remarkA4-P4}
It's noticeable that $\mathscr{A}_1=\mathscr{P}_1$ and $\mathscr{A}_2\subseteq\mathscr{P}_2$, while $\mathscr{A}_4$ and $\mathscr{P}_4$ do not contain each other. If $f\in\mathscr{A}_4-\mathscr{P}_4$, then $f(\alpha)f(\overline{\alpha})=-f(\beta)f(\overline{\beta})$, where the symbols are the same as those used in the proof of Lemma \ref{lem:arity4inA}.
\end{remark}

\begin{remark}\label{remark:difference between M and M'}

It's easy to verify that $\mathscr{M}_1\otimes\Delta_1=\widetilde{\mathscr{M}_1}\otimes\Delta_0$ and $\mathscr{M}_2\otimes\Delta_1=\widetilde{\mathscr{M}_2}\otimes\Delta_0$. Besides, signatures in $\mathscr{M}_3$ and $\widetilde{\mathscr{M}_3}$ are irreducible, and consequently $\mathscr{M}_3\otimes\Delta_1\cap\widetilde{\mathscr{M}_3}\otimes\Delta_0=\emptyset$.
\end{remark}

\subsection{Proof of tractability} \label{subsection:4aritysetalgorithm}
First we will give the proof of the tractability part in Theorem \ref{arity4setdichotomy}.
\begin{proof}[Proof of tractability]
a. 
Suppose $I$ is an instance of $\hol(\neq_2\mid\mathcal{F})$ with $n$ vertices, and there are $k$ vertices assigned $f_1,...f_k\in\mathscr{M}\otimes\Delta_1$. As these signatures are reducible, we replace these signatures with $k$ signatures that belong to $\mathscr{M}$ and $k$ unary signatures $\Delta_1$ in $I$.

Suppose there is a vertex assigned $g\in\mathscr{P}$. 
 If $g$ is a binary \eo\ signature, its connection with $\Delta_1$ will produce either another $\Delta_1$ or cause the entire instance to evaluate to zero.
If $g$ is quaternary, according to Lemma \ref{lem:arity4inP}, it's easy to verify that connecting $g$ with $\Delta_1$ by $\neq_2$ has two possible results. The value of the instance becomes 0, or another $\Delta_1$ and a $\neq_2^{a,b}$ are generated ($ab\neq 0$ is not necessary here).

If $g$ is a signature in $\mathscr{M}$, connecting $g$ with a $\Delta_1$ will result in a $\neq_2^{a,b}\in \mathscr{P}$, or cause the value of the whole instance to be 0.

Therefore, regardless of the trivial cases in which the value of $I$ becomes 0, the number of $\Delta_1$ do not decrease unless it's connected to a signature in $\mathscr{M}$. So step by step, we can calculate the connection of all $\Delta_1$ with another function. Since the number of signatures is finite, the calculation will stop after finite steps. And all signatures in $\mathscr{M}$ must have been already connected by a $\Delta_1$. 

The whole calculation to eliminate $\Delta_1$ can be finished in polynomial time, since calculating one connection needs $O(1)$ time and the number of steps is smaller than $O(n)$. After the elimination, all the remaining signatures belong to $\mathscr{P}$ and consequently the obtained instance is polynomial-time computable.

b.
Similarly, given an instance $I$, we eliminate all $\Delta_1$, such that each  signature in $\mathscr{M}_\mathscr A$ is connected to a $\Delta_1$, which will also generate a generalized binary disequality in $\mathscr{A}$ according to the definition of $\mathscr{M}_\mathscr A$. In addition, gadget construction is closed under $\mathscr{A}$, so ultimately all signatures after the elimination are in $\mathscr{A}$. Again, the elimination can be done in polynomial time and the obtained instance is polynomial-time computable.

c.Similar to a.

d.Similar to b.
\end{proof}

\subsection{Proof of hardness}\label{subsection:4aritysethardness}
Firstly, we introduce several no-mixing lemmas.

\begin{lemma}[$\mathscr{A},\mathscr{P}$ no-mixing]\label{lem:mixAPhard}
Suppose $f$ and $g$ are both $\eo$ signatures with arity less than or equal to 4. Suppose $f\in\mathscr{A}-\mathscr{P}$ and $g\in\mathscr{P}-\mathscr{A}$, then $Holant(\neq_2\mid\{f,g\})$ is \#P-hard.
\end{lemma}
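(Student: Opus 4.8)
The plan is to invoke the $\#\csp$ dichotomy (Theorem~\ref{thm:CSPdichotomy}) via the correspondence between $\#\eo$ and $\#\csp$ with free binary disequality. Since $f \in \mathscr{A} - \mathscr{P}$ and $g \in \mathscr{P} - \mathscr{A}$, both $f$ and $g$ have affine support (being in $\mathscr{A} \cup \mathscr{P}$), hence by Lemma~\ref{lemmaaffineeosupport} both are $\eom$ signatures. The first step is therefore to reduce to the pairwise-opposite world: for each signature pick a pairing $P$ witnessing that it is an $\eom[P]$ signature, and pass to $\tau(f)$ and $\tau(g)$, which by the lemma stating $f \in \mathscr{A}$ (resp. $\mathscr{P}$) iff $\tau(f) \subseteq \mathscr{A}$ (resp. $\mathscr{P}$) gives signatures $f'' \in \tau(f)$ with $f'' \notin \mathscr{P}$ and $g'' \in \tau(g)$ with $g'' \notin \mathscr{A}$. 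Here one must be slightly careful: $\tau(f) \not\subseteq \mathscr{P}$ only guarantees \emph{some} member of $\tau(f)$ lies outside $\mathscr{P}$, so I would state explicitly that we select such members.

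The second step is to transfer the $\#\eo$ problem to a $\#\csp$ problem. By Lemma~\ref{thm:csp=eom} and the surrounding discussion, $\#\eo(\pi(\mathcal{G})) \equiv_T \#\csp(\mathcal{G})$, and more usefully the remark that $\#\eo$ with the signatures $f, g$ (which are $\eom$) is Turing-equivalent to $\#\csp$ with free binary disequality applied to a set containing $f''$ and $g''$. Concretely: since $f$ is $\eom[P]$, adding self-loops with $\neq_2$ realizes every element of $\tau(f)$ as an $\mathcal{F}$-gate in the $\#\eo$ sense, and conversely $\pi$ of any such element is realizable; thus $\hol(\neq_2 \mid \{f,g\})$ is interreducible with $\#\csp(\tau(f) \cup \tau(g) \cup \{\neq_2\})$. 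Restricting attention to the sub-instance using only $f''$, $g''$ (and the freely available $\neq_2$, which lies in both $\mathscr{A}$ and $\mathscr{P}$ so cannot rescue tractability) gives a lower bound: $\#\csp(\{f'', g'', \neq_2\}) \leq_T \hol(\neq_2 \mid \{f,g\})$.

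The third step is the dichotomy application: the set $\{f'', g'', \neq_2\}$ is not contained in $\mathscr{A}$ (because $g'' \notin \mathscr{A}$) and not contained in $\mathscr{P}$ (because $f'' \notin \mathscr{P}$), so by Theorem~\ref{thm:CSPdichotomy} the problem $\#\csp(\{f'', g'', \neq_2\})$ is $\#$P-hard, and hardness propagates up the reduction chain to $\hol(\neq_2 \mid \{f,g\})$.

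I expect the main obstacle to be the careful bookkeeping in the second step --- verifying that the $\tau$/$\pi$ correspondence lets us move between the $\#\eo$ instance built from $f$, $g$ and a genuine $\#\csp$ instance built from $f''$, $g''$ without accidentally introducing extra tractable structure or losing the non-membership witnesses. In particular one should confirm that the elements of $\tau(f)$ and $\tau(g)$ are realizable as gadgets over $\{f, g, \neq_2\}$ (via self-loops/pinning as described in Section~\ref{section:GC}) and that $\neq_2$ being freely available is harmless since $\neq_2 \in \mathscr{A} \cap \mathscr{P}$. The arity bound ``$\leq 4$'' plays essentially no role in the argument beyond keeping the case analysis finite; the real content is entirely the affine-support $\Rightarrow$ $\eom$ reduction plus the $\#\csp$ dichotomy. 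A short alternative, if one prefers to avoid the $\tau$ machinery, is to realize $f''$ and $g''$ directly as $\mathcal{F}$-gates by pinning opposite pairs of variables using $\neq_2^{1,0}$ self-loops, but this amounts to the same computation.
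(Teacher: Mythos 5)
Your proof is correct, but it takes a genuinely different route from the paper's. The paper first localizes $f$ to $\mathscr{A}_4-\mathscr{P}_4$ (using $\mathscr{A}_1\subseteq\mathscr{P}_1$ and $\mathscr{A}_2\subseteq\mathscr{P}_2$), fixes an explicit signature matrix for $f$, and then runs a case analysis on $g$ (binary, or one of the three quaternary forms from Lemma~\ref{lem:arity4inP}), in each case building a concrete gadget $h$ via self-loops and $\neq_2$-connections that lies outside all four tractable classes of Theorem~\ref{thm:sixvertexdichotomyMform}; hardness then comes from the six-vertex model dichotomy. You instead observe that membership in $\mathscr{A}\cup\mathscr{P}$ forces affine support, hence by Lemma~\ref{lemmaaffineeosupport} both $f$ and $g$ are $\eom$; you then pick witnesses $f''\in\tau(f)\setminus\mathscr{P}$ and $g''\in\tau(g)\setminus\mathscr{A}$ (legitimate by the lemma that $f\in\mathscr{A}$, resp.\ $\mathscr{P}$, iff $\tau(f)\subseteq\mathscr{A}$, resp.\ $\mathscr{P}$), transfer via Lemma~\ref{thm:csp=eom} and the identity $\pi(f'')=f$ up to variable permutation, and invoke the \#CSP dichotomy (Theorem~\ref{thm:CSPdichotomy}). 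Your bookkeeping concerns in the second step are exactly the right ones and they do check out: $\pi$ of any member of $\tau(f)$ is $f$ with some variables within pairs swapped, so the \#CSP-to-$\#\eo$ reduction lands on $\{f,g\}$ itself, and the freely available $\neq_2$ lies in $\mathscr{A}\cap\mathscr{P}$ so it cannot restore tractability. What each approach buys: yours is shorter, avoids all case analysis, never uses the arity bound (so it actually proves the no-mixing statement for $\eo$ signatures of arbitrary arity), and is precisely the strategy the paper itself deploys later in Lemma~\ref{lem:alluphardness}; the paper's gadget-based argument stays within the six-vertex framework used throughout Section~\ref{section: arity4set} and produces an explicit single hard quaternary signature realizable from $\{f,g\}$, which is reusable in the other no-mixing lemmas (e.g.\ Lemma~\ref{lem:mixAP and Mtensorhard} reduces to the binary case of Lemma~\ref{lem:mixAPhard}).
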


\begin{proof}
Since $\mathscr{A}_1\subseteq\mathscr{P}_1$ and $\mathscr{A}_2\subseteq\mathscr{P}_2$, $f\in\mathscr{A}_4-\mathscr{P}_4$. According to remark \ref{remarkA4-P4}, $\su(f)=\{\alpha,\overline{\alpha},\beta,\overline{\beta}\}$, and $f(\alpha)f(\overline{\alpha})=-f(\beta)f(\overline{\beta})$. Without loss of generality, we can assume that 

$$M(f)_{x_1x_2,x_3x_4}=\left(\begin{array}{cccc}
0 & 0 & 0 & 0\\
0 & 1 & a & 0\\
0 & b & -ab & 0\\
0 & 0 & 0 & 0
\end{array} 
\right),$$

where $a,b\in\{\pm1,\pm\mathfrak{i}\}$. $g$ is either binary or quaternary.

\begin{enumerate}

\item $g$ is a binary signature that satisfies $M(g)_{y_1,y_2}=M(\bi{c}{d})_{y_1,y_2}$, where $cd\neq 0$ and $c/d\notin\{\pm1,\pm \mathfrak{i}\}$.

We construct a gadget that connects $f$'s edge $x_1$ with $g$'s edge $y_1$ by $\neq_2$, then we get an $\eo$ signature $h$ satisfying

$$M(h)_{y_2x_2,x_3x_4}=\left(\begin{array}{cccc}
0 & 0 & 0 & 0\\
0 & d & ad & 0\\
0 & bc & -abc & 0\\
0 & 0 & 0 & 0
\end{array} 
\right).$$ 

It's easy to check that $h\notin\mathscr{A},\mathscr{P},\mathscr{M}\otimes\Delta_1,\widetilde{\mathscr{M}}\otimes\Delta_0$, consequently $\hol(\neq_2\mid h)$ is \#P-hard according to Theorem \ref{thm:sixvertexdichotomyMform}. In addition, $\hol(\neq_2\mid h)\leq_T\hol(\neq_2\mid\{f,g\})$, so the latter problem is also \#P-hard.

\item $g$ is a quaternary signature. Since it's in $\mathscr{P}-\mathscr{A}$, $|\su(g)|\geq 2$ according to Lemma \ref{lem:arity4inP} and \ref{lem:arity4inA}.
There are three possible forms of $g$. For clarity, We continue to use the category number in the description of Lemma \ref{lem:arity4inP}. 

ii.(1) $g=\Delta_0\otimes\Delta_1\otimes\neq_2^{a,b}$, $ab\neq0,a/b\notin\{\pm1,\pm \mathfrak{i}\}$. We add a self-loop using $\neq_2$ on two variables related to $\Delta_0$ and $\Delta_1$, then we get the signature $\neq_2^{a,b}$. This case then can be reduced to case 1.

ii.(2) Without loss of generality, suppose

$$M(g)_{x_1x_2,x_3x_4}=\left(\begin{array}{cccc}
0 & 0 & 0 & 0\\
0 & 0 & a & 0\\
0 & b & 0 & 0\\
0 & 0 & 0 & 0
\end{array} 
\right),$$

where $ab\neq0,a/b\notin\{\pm1,\pm\mathfrak{i}\}$. We add a self-loop on $x_1$ and $x_3$ to get $h=(\bi{b}{a})$, then it can also be reduced to case 1.

iii. Without loss of generality, suppose 

$$M(g)_{x_1x_2,x_3x_4}=\left(\begin{array}{cccc}
0 & 0 & 0 & 0\\
0 & 1 & a & 0\\
0 & b & ab & 0\\
0 & 0 & 0 & 0
\end{array} 
\right),$$

where $a\neq 0,\pm1,\pm \mathfrak{i}$. We add a self-loop on $x_1$ and $x_2$ to get $h=(\bi{1+b}{a+ab})$. If $b\neq -1$, then it can be reduced to case 1. If $b=-1$, we add a self-loop on $x_1$ and $x_3$ to get $h=(\bi{b}{a})$, then it can also be reduced to case 1.
\end{enumerate}

In summary, the problem $\hol(\neq_2\mid\{f,g\})$ is \#P-hard.
\end{proof}

\begin{lemma}[$\mathscr{M}\otimes\Delta_1,\widetilde{\mathscr{M}}\otimes\Delta_0$ no-mixing]\label{lem:mixMM'hard}
Suppose $f$ and $g$ are both quaternary $\eo$ signatures. Suppose $f\in\mathscr{M}\otimes\Delta_1-\widetilde{\mathscr{M}}\otimes\Delta_0$ and $g\in\widetilde{\mathscr{M}}\otimes\Delta_0-\mathscr{M}\otimes\Delta_1$, then $\hol(\neq_2\mid\{f,g\})$ is \#P-hard.
\end{lemma}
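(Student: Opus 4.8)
The goal is to show that mixing a signature $f\in(\mathscr{M}\otimes\Delta_1)-(\widetilde{\mathscr{M}}\otimes\Delta_0)$ with a signature $g\in(\widetilde{\mathscr{M}}\otimes\Delta_0)-(\mathscr{M}\otimes\Delta_1)$ yields a \#P-hard problem. The natural strategy, parallel to the proof of Lemma~\ref{lem:mixAPhard}, is to first reduce to a normal form for $f$ and $g$ by using Lemma~\ref{lem:arity4inM} and Remark~\ref{remark:difference between M and M'}. Since $\mathscr{M}_1\otimes\Delta_1=\widetilde{\mathscr{M}_1}\otimes\Delta_0$ and $\mathscr{M}_2\otimes\Delta_1=\widetilde{\mathscr{M}_2}\otimes\Delta_0$, the hypothesis $f\notin\widetilde{\mathscr{M}}\otimes\Delta_0$ forces $f\in\mathscr{M}_3\otimes\Delta_1$, and symmetrically $g\in\widetilde{\mathscr{M}_3}\otimes\Delta_0$. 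So we are really mixing a ``genuine'' $\mathscr{M}\otimes\Delta_1$ signature (support $\{1100,1010,1001\}$ up to reordering) with a genuine $\widetilde{\mathscr{M}}\otimes\Delta_0$ signature (support $\{0011,0101,0110\}$ up to reordering).

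\medskip

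First I would write $f$ with support matrix giving $f(1100)=a_1,f(1010)=a_2,f(1001)=a_3$ with all $a_i\ne 0$, and $g$ with $g(0011)=b_1,g(0101)=b_2,g(0110)=b_3$, all $b_i\ne 0$; here $x_1$ is the variable pinned to $1$ in $f$ and (after reordering) $y_1$ is the variable pinned to $0$ in $g$. The plan is to connect $f$ and $g$ through $\neq_2$ at selected dangling edges and build, via self-loops, a quaternary $\eo$ signature $h$ that escapes all four tractable classes of Theorem~\ref{thm:sixvertexdichotomyMform} — then $\hol(\neq_2\mid h)\le_T\hol(\neq_2\mid\{f,g\})$ finishes the proof. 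The cleanest route is probably: join an edge of $f$ to an edge of $g$ by $\neq_2$ to obtain a quaternary gate whose support is no longer confined to weight-one-tail or weight-one-head patterns, hence not in $\mathscr{M}\otimes\Delta_1$ or $\widetilde{\mathscr{M}}\otimes\Delta_0$; then check whether it lands in $\mathscr{A}$ or $\mathscr{P}$ and, if so, perturb the choice of edges or add a further self-loop. Because $\mathscr{M}_3$ and $\widetilde{\mathscr{M}_3}$ signatures are irreducible and have the ``wrong'' Hamming-weight structure relative to one another, a generic composition will have support of size larger than any affine subspace of dimension permitted, and a short case analysis on which of the $a_i,b_j$ coincide will handle the degenerate compositions.

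\medskip

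Concretely, one useful gadget: take $f$ (pin variable $x_1=1$) and $g$ (pin variable $y_1=0$), and connect $x_1$ of $f$ to $y_1$ of $g$ via $\neq_2$; since $x_1$ must be $1$ and $y_1$ must be $0$, the disequality is satisfied and the result is (up to weights) the tensor-style combination of the ternary parts $f'\in\mathscr{M}_3$ on $\{x_2,x_3,x_4\}$ and $g'\in\widetilde{\mathscr{M}_3}$ on $\{y_2,y_3,y_4\}$ — an arity-six signature whose support lies in weight-$3$ strings but is neither of $\mathscr{M}$-type nor $\widetilde{\mathscr{M}}$-type. Then contracting appropriate pairs by $\neq_2$ self-loops brings it to arity four; one identifies $h$ explicitly and feeds it to Theorem~\ref{thm:sixvertexdichotomyMform}. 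The main obstacle I anticipate is the bookkeeping of the degenerate sub-cases: depending on equalities among $\{a_1,a_2,a_3\}$ and $\{b_1,b_2,b_3\}$ (and on which edges of the irreducible ternary signatures get connected), the composed $h$ might accidentally satisfy one of the affine or product conditions, so one needs a finite but slightly fiddly enumeration — possibly choosing a different pair of edges to connect, or introducing one more $\bi{1}{\lambda}$ vertex to break the symmetry — to guarantee that some $h$ outside all four tractable classes is realizable. This is exactly the analogue of the ``case 1 vs.\ case iii, $b=-1$ vs.\ $b\ne -1$'' branching in the proof of Lemma~\ref{lem:mixAPhard}.
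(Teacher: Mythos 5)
Your proposal takes essentially the same route as the paper: reduce via Remark~\ref{remark:difference between M and M'} to $f\in\mathscr{M}_3\otimes\Delta_1$ and $g\in\widetilde{\mathscr{M}_3}\otimes\Delta_0$, connect the two pinned variables (where $\neq_2$ is automatically satisfied) together with one free variable of $f$ to one free variable of $g$, and feed the resulting quaternary $h$ to Theorem~\ref{thm:sixvertexdichotomyMform}. The only refinement worth noting is that the ``fiddly enumeration'' you anticipate never arises: for this cross-connection the composed $h$ always has exactly five nonzero entries (each a product of nonzero values of $f$ and $g$), so its support can be neither affine (whose size is a power of $2$) nor contained in the three-element sets $Q_1$ or $Q_0$, and hardness follows unconditionally.
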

\begin{proof}
According to remark \ref{remark:difference between M and M'}, $f\in\mathscr{M}_3\otimes\Delta_1$ and $g\in\widetilde{\mathscr{M}_3}\otimes\Delta_0$. After normalization, we suppose $f(x_1,x_2,x_3,x_4)=[0,(1,a,b),0,0](x_1,x_2,x_3)\otimes\Delta_1(x_4)$ and $g(y_1,y_2,y_3,y_4)=[0,0,(1,c,d),0](y_1,y_2,y_3)\otimes\Delta_0(y_4)$. We connect $x_1$ to $y_1$ and $x_4$ to $y_4$, then we get

$$M(h)_{x_2x_3,y_2y_3}=\left(\begin{array}{cccc}
0 & 0 & 0 & b\\
0 & c & d & 0\\
0 & ac & ad & 0\\
0 & 0 & 0 & 0
\end{array} 
\right).$$

By Theorem \ref{thm:sixvertexdichotomyMform}, $\hol(\neq_2\mid h)$ is \#P-hard.

Since $\hol(\neq_2\mid h)\leq_T\hol(\neq_2\mid\{f,g\})$, the latter problem is also \#P-hard.
\end{proof}

\begin{lemma}[$\mathscr{A},\mathscr{M}\otimes\Delta_1$ / $\mathscr{A},\widetilde{\mathscr{M}}\otimes\Delta_0$ no-mixing]\label{lem:mixAP and Mtensorhard}
Suppose $f$ and $g$ are both $\eo$ signatures with arity less than or equal to 4. Suppose $f\in\mathscr{A}-\mathscr{P}$ and $g\in\mathscr{M}\otimes\Delta_1-\mathscr{M}_\mathscr{A}\otimes\Delta_1$, then $\hol(\neq_2\mid\{f,g\})$ is \#P-hard.

Similarly, if $f\in\mathscr{A}-\mathscr{P}$ and $g\in\widetilde{\mathscr{M}}\otimes\Delta_0-\widetilde{\mathscr{M}_\mathscr{A}}\otimes\Delta_0$, then $\hol(\neq_2\mid\{f,g\})$ is \#P-hard.
\end{lemma}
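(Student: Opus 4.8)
The plan is to reduce to the six-vertex dichotomy (Theorem~\ref{thm:sixvertexdichotomyMform}) by using the non-pure ``bad'' signature $g$ together with $f$ to realize, via gadget construction, a quaternary signature outside all four tractable classes. The overall strategy mirrors the proof of Lemma~\ref{lem:mixAPhard}: since $f \in \mathscr{A}-\mathscr{P}$, we know $f \in \mathscr{A}_4 - \mathscr{P}_4$, so after a suitable variable reordering and normalization $M(f)_{x_1x_2,x_3x_4}$ is the matrix with support $\{\alpha,\overline\alpha,\beta,\overline\beta\}$ and $f(\alpha)f(\overline\alpha) = -f(\beta)f(\overline\beta)$, i.e.\ entries $1,a,b,-ab$ in the $2\times2$ central block with $a,b \in \{\pm1,\pm\mathfrak i\}$. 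The essential new point is to extract from $g$ a binary signature $\neq_2^{c,d}$ with $cd\neq 0$ and $c/d \notin \{\pm1,\pm\mathfrak i\}$, at which point we can invoke exactly case~1 of the proof of Lemma~\ref{lem:mixAPhard} (connect $x_1$ of $f$ to one edge of $\neq_2^{c,d}$, check the resulting $h$ lies outside $\mathscr{A},\mathscr{P},\mathscr{M}\otimes\Delta_1,\widetilde{\mathscr{M}}\otimes\Delta_0$ by Theorem~\ref{thm:sixvertexdichotomyMform}, and conclude \#P-hardness by the reduction $\hol(\neq_2\mid h)\leq_T\hol(\neq_2\mid\{f,g\})$).

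First I would use the classification of Lemma~\ref{lem:arity4inM}. Since $g \in \mathscr{M}\otimes\Delta_1 - \mathscr{M}_{\mathscr{A}}\otimes\Delta_1$, the quotient of some two non-zero values of $g$ is \emph{not} in $\{\pm1,\pm\mathfrak i\}$; in particular $g$ cannot be in $\mathscr{M}_1\otimes\Delta_1$ (which has a single support string) nor can it be the specific normalized form in $\mathscr{M}_2\otimes\Delta_1$ once we allow the weights to be arbitrary — so $g$ is $\mathscr{M}_2\otimes\Delta_1$ with a weight ratio outside $\{\pm1,\pm\mathfrak i\}$, or $g\in \mathscr{M}_3\otimes\Delta_1$ with such a bad ratio among its three weights. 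In the $\mathscr{M}_2$ case $g = \Delta_0\otimes\Delta_1\otimes\neq_2^{c,d}$ with $c/d\notin\{\pm1,\pm\mathfrak i\}$, and adding a $\neq_2$ self-loop on the two variables carrying $\Delta_0$ and $\Delta_1$ yields $\neq_2^{c,d}$ directly. In the $\mathscr{M}_3$ case, $g = [0,(p,q,r),0,0]\otimes\Delta_1(x_4)$ with two of $p,q,r$ having a bad ratio; contracting the $\Delta_1$-edge $x_4$ against one of the other three edges of $g$ (through $\neq_2$) collapses $g$ to a binary signature whose two non-zero values are two of $p,q,r$, i.e.\ a $\neq_2^{c,d}$ (up to swapping the roles of the two remaining variables) with $c/d$ equal to the bad ratio and hence outside $\{\pm1,\pm\mathfrak i\}$. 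Either way we have realized the desired $\neq_2^{c,d}$ from $g$ alone.

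Having produced $\neq_2^{c,d}$, I finish exactly as in case~1 of Lemma~\ref{lem:mixAPhard}: form $h$ by connecting $f$'s edge $x_1$ to an edge of $\neq_2^{c,d}$ via $\neq_2$, obtaining the quaternary $\eo$ signature with central block entries $d, ad, bc, -abc$; since $a,b\in\{\pm1,\pm\mathfrak i\}$ and $c/d\notin\{\pm1,\pm\mathfrak i\}$ one checks $h$ is not in $\mathscr{A}$ (the product-of-antidiagonal ratio condition from Lemma~\ref{lem:arity4inA}(iii) fails because of the bad $c/d$), not in $\mathscr{P}$, and not in $\mathscr{M}\otimes\Delta_1$ or $\widetilde{\mathscr{M}}\otimes\Delta_0$ (its support is the full ``$Q$-pattern'' of four strings, not contained in a weight-$1$-times or weight-$2$-times pattern). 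Theorem~\ref{thm:sixvertexdichotomyMform} then gives \#P-hardness of $\hol(\neq_2\mid h)$, and the gadget reduction propagates it to $\hol(\neq_2\mid\{f,g\})$. The dual statement (with $g\in\widetilde{\mathscr{M}}\otimes\Delta_0-\widetilde{\mathscr{M}_{\mathscr{A}}}\otimes\Delta_0$) follows verbatim by applying the self-dual symmetry of the whole setup — swap $\Delta_0\leftrightarrow\Delta_1$ and bars throughout.

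The main obstacle I anticipate is the $\mathscr{M}_3\otimes\Delta_1$ case: one must be careful that contracting the $\Delta_1$-edge against an edge of the ternary $\mathscr{M}$-part through $\neq_2$ actually leaves a \emph{non-degenerate} binary signature with the bad weight ratio still present, rather than accidentally zeroing out one of the two relevant support strings. This requires choosing \emph{which} of the three $\mathscr{M}$-edges to contract so that the two surviving weights are precisely the pair realizing the forbidden quotient — a small case analysis on which pair of $\{p,q,r\}$ is ``bad'', but it is the one genuinely new piece of work relative to Lemma~\ref{lem:mixAPhard}.
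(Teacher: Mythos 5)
Your proposal is correct and follows essentially the same route as the paper: realize a generalized binary disequality $\neq_2^{c,d}$ with $c/d\notin\{\pm1,\pm\mathfrak i\}$ (hence in $\mathscr{P}-\mathscr{A}$) by connecting the $\Delta_1$ factor of $g$ back to one of its other edges via $\neq_2$, and then invoke Lemma~\ref{lem:mixAPhard}. The paper compresses this into one line by realizing all three binary signatures $\neq_2^{a,b}$, $\neq_2^{b,c}$, $\neq_2^{c,a}$ and noting that, by the definition of $\mathscr{M}_{\mathscr{A}}$, one of them must lie in $\mathscr{P}-\mathscr{A}$; your separate treatment of the $\mathscr{M}_2$ and $\mathscr{M}_3$ cases is a harmless refinement of the same idea.
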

\begin{proof}

Suppose $g=[0,(a,b,c),0,0]\otimes\Delta_1$, then by connecting $\Delta_1$ to $g$ itself, it's possible to get three binary signatures $\bi{a}{b}$, $\bi{b}{c}$ and $\bi{c}{a}$. By the definition of $\mathscr{M}_\mathscr{A}$, one of these three signatures is in $\mathscr{P}-\mathscr{A}$, so with Lemma \ref{lem:mixAPhard} we have $\hol(\neq_2\mid\{f,g\})$ is \#P-hard.

The proof of the second part of the lemma is similar.
\end{proof}

With the no-mixing lemmas, we give the complete proof of the hardness for Theorem \ref{arity4setdichotomy}.

\begin{proof}[Proof of hardness]
Suppose $\mathcal{F}$ is a set of $\eo$ signatures with arity less than or equal to 4. By Theorem \ref{thm:sixvertexdichotomyMform}, if there exists $f\in\mathcal{F}$ such that $f\notin\mathscr{A}$, $\mathscr{P}$, $\mathscr{M}\otimes\Delta_1$, or $\widetilde{\mathscr{M}}\otimes\Delta_0$, then $\hol(\neq_2|\mathcal{F})$ is \#P-hard. So we assume that $\mathcal{F}\subseteq\mathscr{A}\cup\mathscr{P}\cup\mathscr{M}\otimes\Delta_1\cup\widetilde{\mathscr{M}}\otimes\Delta_0$.

We use $S_1,\widetilde{S_1},S_2,\widetilde{S_2}$ to denote $\mathscr{P}\cup\mathscr{M}\otimes\Delta_1$, $\mathscr{P}\cup\widetilde{\mathscr{M}}\otimes\Delta_0$, $\mathscr{A}\cup\mathscr{M}_{\mathscr{A}}\otimes\Delta_1$, $\mathscr{A}\cup\widetilde{\mathscr{M}_{\mathscr{A}}}\otimes\Delta_0$ respectively. With Remark \ref{remarkA4-P4} and \ref{remark:difference between M and M'}, we have that $S_1\cup S_2-\widetilde{S_1}\cup \widetilde{S_2}=\mathscr{M}_3\otimes\Delta_1-\widetilde{\mathscr{M}_3}\otimes\Delta_0$ and $\widetilde{S_1}\cup \widetilde{S_2}-S_1\cup S_2=\widetilde{\mathscr{M}_3}\otimes\Delta_0-\mathscr{M}_3\otimes\Delta_1$. If $\mathcal{F}\not\subseteq S_1\cup S_2$ and $\mathcal{F}\not\subseteq \widetilde{S_1}\cup \widetilde{S_2}$, then there exist a signature in $\mathscr{M}_3\otimes\Delta_1-\widetilde{\mathscr{M}_3}\otimes\Delta_0$ and a signature in $\widetilde{\mathscr{M}_3}\otimes\Delta_0-\mathscr{M}_3\otimes\Delta_1$. By Lemma \ref{lem:mixMM'hard}, $\hol(\neq_2|\mathcal{F})$ is \#P-hard.

Otherwise, $\mathcal{F}$ is the subset of $S_1\cup S_2$ or $\widetilde{S_1}\cup \widetilde{S_2}$. Without loss of generality, suppose $\mathcal{F}\subseteq S_1\cup S_2$. Notice that $S_1-S_2$ is equal to $(\mathscr{P}-\mathscr{A})\cup(\mathscr{M}\otimes\Delta_1-\mathscr{M}_\mathscr{A}\otimes\Delta_1)$ and $S_2-S_1=\mathscr{A}-\mathscr{P}$. If $\mathcal{F}$ contains $f$ and $g$ that belong to $S_1-S_2$ and $S_2-S_1$ respectively, then $g\in\mathscr{A}-\mathscr{P}$ and there are two cases of $f$.

\begin{enumerate}
\item $f\in\mathscr{P}-\mathscr{A}$: by Lemma \ref{lem:mixAPhard}, $\hol(\neq_2|\mathcal{F})$ is \#P-hard.

\item $f\in\mathscr{M}\otimes\Delta_1-\mathscr{M}_\mathscr{A}\otimes\Delta_1$:by Lemma \ref{lem:mixAP and Mtensorhard}, $\hol(\neq_2|\mathcal{F})$ is \#P-hard.
\end{enumerate}

In summary, $\hol(\neq_2|\mathcal{F})$ is \#P-hard unless one of $\mathcal{F}\subseteq S_1$,  $\mathcal{F}\subseteq S_2$, $\mathcal{F}\subseteq \widetilde{S_1}$ and $\mathcal{F}\subseteq \widetilde{S_2}$ holds.
\end{proof}

\section{A dichotomy for pure $\eo$ signatures} \label{section:allup}

In this section, we prove Theorem \ref{thm:puredichotomy}. Firstly, we present a crucial property of pure-up signatures.

\begin{lemma} \label{lem:alluphasdelta1}
Suppose the size of $X$ is $2d$. If an affine subspace $L$ of $\{0,1\}^X$ is contained in $\eog$, and $L \cap \hw{>}$ is not empty, then there exists $x \in X$, such that for all $\alpha \in L$, $\alpha_x=1$. 
\end{lemma}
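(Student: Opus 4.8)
The plan is to work with the linear‑algebraic structure of the affine subspace $L \subseteq \{0,1\}^X$ over $\mathbb{F}_2$. Write $L = \alpha_0 + V$ where $\alpha_0 \in L$ and $V$ is a linear subspace of $\mathbb{F}_2^X$. The conclusion — that some coordinate $x$ is identically $1$ on all of $L$ — is equivalent to saying that the constant function $1$ occurs as the $x$-th coordinate, i.e. that $x$ lies outside the ``support of the directions'': precisely, for every $v \in V$ we need $v_x = 0$, and additionally $(\alpha_0)_x = 1$. So I would let $Z = \{x \in X : v_x = 0 \text{ for all } v \in V\}$ be the set of coordinates that are constant on $L$, and the goal becomes: some $x \in Z$ has $\alpha_x = 1$ for the (common) value on $L$.

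First I would pick a witness $\beta \in L \cap \hw{>}$, so $\#_1(\beta) \geq d+1 > d = \#_0(\beta)$. The key idea is a counting/averaging argument over $L$. For each coordinate $x$, either $x \in Z$ (constant on $L$) or $x \notin Z$, in which case the $x$-th coordinate takes value $0$ on exactly half of $L$ and $1$ on the other half (a standard fact about cosets of linear spaces over $\mathbb{F}_2$: the linear map $v \mapsto v_x$ on $V$ is either zero or surjective onto $\mathbb{F}_2$, and in the latter case its kernel has index $2$). Hence $\sum_{\gamma \in L} \#_1(\gamma) = |L| \cdot \bigl( |Z_1| + \tfrac12 |X \setminus Z| \bigr)$, where $Z_1 \subseteq Z$ is the set of always‑$1$ coordinates and, writing $Z_0 = Z \setminus Z_1$ for the always‑$0$ coordinates, $|Z_0| + |Z_1| + |X \setminus Z| = 2d$. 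Since $L \subseteq \eog$ we have $\#_1(\gamma) \geq d$ for every $\gamma \in L$, so the average is at least $d$, giving $|Z_1| + \tfrac12|X\setminus Z| \geq d$, i.e. $2|Z_1| + |X \setminus Z| \geq 2d = 2|Z_0| + 2|Z_1| + |X\setminus Z|$, which simplifies to $|Z_0| \leq 0$, hence $Z_0 = \emptyset$.

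Now I would use the strict witness $\beta$ to conclude. Since $Z_0 = \emptyset$, every constant coordinate is an always‑$1$ coordinate; it remains to show $Z_1 \neq \emptyset$. Suppose for contradiction $Z_1 = \emptyset$ as well, so $Z = \emptyset$ and \emph{every} coordinate is balanced on $L$ (value $0$ on exactly $|L|/2$ elements). Then the coordinate‑flipped string $\overline{\beta}$ — obtained by adding the all‑ones vector $\mathbf{1}$ — would also lie in $L$ provided $\mathbf{1} \in V$; but more simply, $\sum_{\gamma \in L}\#_1(\gamma) = |L|\cdot d$ forces the average to be exactly $d$, which is impossible because $\#_1(\beta) \geq d+1$ while every term is $\geq d$ (a sum of terms each $\geq d$, with at least one $\geq d+1$, over $|L| \geq 1$ terms, has average $> d$). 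This contradiction shows $Z_1 \neq \emptyset$, and any $x \in Z_1$ is the desired coordinate.

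The main obstacle, and the step to get right carefully, is the ``balanced coordinate'' lemma and bookkeeping: justifying that a non‑constant coordinate of a linear subspace over $\mathbb{F}_2$ splits $L$ exactly in half, and then tracking the inequality $\sum_{\gamma\in L}\#_1(\gamma) \geq |L|\cdot d$ with equality characterization tightly enough to force both $Z_0 = \emptyset$ and $Z_1 \neq \emptyset$ simultaneously. Everything else (choosing the witness, coset arithmetic) is routine. An alternative, possibly slicker route that avoids averaging entirely: note that $\mathbf{1} \notin V$, since otherwise $\beta$ and $\overline\beta \in L$ with $\#_1(\overline\beta) = 2d - \#_1(\beta) \leq d-1 < d$, contradicting $L \subseteq \eog$; then argue directly that the linear functionals ``$x$-th coordinate'' for $x$ ranging over $X$ cannot all be nonzero on $V$ and pairwise distinct in a way compatible with $\dim V$ and the $\eog$ constraint — but the counting argument above is the cleanest to write, so that is the one I would present.
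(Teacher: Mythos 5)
Your overall strategy is the same as the paper's: classify each coordinate of $L$ as constant or balanced (taking each value on exactly half of $L$, via the involution $\gamma \mapsto \gamma \oplus v$ for a direction $v$ with $v_x = 1$), and then double-count the total number of $1$'s over all strings of $L$. The paper applies the strict inequality $\sum_{\gamma \in L}\#_1(\gamma) > d\,|L|$ (available because every string of $L$ has at least $d$ ones and the witness in $\hw{>}$ has strictly more) directly, concluding that there are strictly more always-$1$ coordinates than always-$0$ coordinates and hence at least one always-$1$ coordinate.

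However, your intermediate derivation of $Z_0 = \emptyset$ contains an arithmetic error and its conclusion is in fact false. You substitute $2d = 2|Z_0| + 2|Z_1| + |X\setminus Z|$, but the correct identity is $2d = |Z_0| + |Z_1| + |X\setminus Z|$; the non-strict averaging inequality $|Z_1| + \tfrac12|X\setminus Z| \geq d$ therefore only yields $|Z_1| \geq |Z_0|$, not $|Z_0| \leq 0$. A counterexample to $Z_0=\emptyset$: for $d=2$ take $L = \{1110, 1100\}$, an affine line contained in $\eog$ meeting $\hw{>}$, whose last coordinate is always $0$. The slip happens to be harmless to your endgame: under the contradiction hypothesis $Z_1 = \emptyset$, the corrected bound $|Z_0| \leq |Z_1|$ still forces $Z = \emptyset$, so every coordinate is balanced, the average of $\#_1$ over $L$ equals $d$ exactly, and this contradicts the strict witness $\beta$. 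Alternatively, and more cleanly (this is what the paper does), use the strict inequality from the outset to obtain $|Z_1| > |Z_0| \geq 0$ in a single step. So the argument is repairable in one line, but as written it asserts and leans on a false claim.
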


\begin{proof}
Suppose the dimension of $L$ is $r$. We assume that $L=\beta_0+ \langle \beta_1, \beta_2, \ldots, \beta_r \rangle$, where $\beta_1, \beta_2, \ldots, \beta_r$ are linearly independent and form the basis of the associated linear space.    

Fix an arbitrary $x \in X$. 
If there is a $\beta_i$ which is 1 at $x$, then $f(\alpha)=\alpha\oplus\beta_i$ forms a 1-1 mapping. Besides, for each $\alpha\in L$, $\alpha\oplus\beta_i\in L$. Consequently, half of strings in $L$ is 0 at $x$, and the other half is 1 at $x$.
If there is no $\beta_i$ which is 1 at $x$, then all strings in $L$ is equal to $\beta_0$ at the $x$-th bit. We call it as a fixed $0$ or $1$ at $x$.

There are $2^r$ strings in $L$, and $2d \cdot 2^r$ bits in these strings in total. Since $L$ is contained in $\eog$, and $L \cap \hw{>}$ is not empty, there are strictly more 1's than 0's. 
Hence, considering all $x\in X$, there must be strictly more fixed $1$'s than fixed 0's. That is, there is at least one $x\in X$, such that for all $\alpha\in L$, $\alpha_x=1$. 
\end{proof}

Pure-up signatures also have good closure property under specific operations, as stated in the lemma below.

\begin{lemma}  \label{lem: allup closed under pin}
    Pure-up signatures are closed under the operation of adding a self-loop by the pinning signature $\bi{0}{1}$, which is also denoted as $\pin$. 
\end{lemma}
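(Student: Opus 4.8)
The plan is to show that if $f$ is a pure-up signature of arity $2d$ and $g = f^{\Delta}$ is obtained from $f$ by adding a $\bi{0}{1}$ self-loop on two variables $x_1, x_2$, then $\mathrm{Span}(g) \subseteq \eog$, i.e.\ $g$ is again pure-up (of arity $2d-2$). Recall that the $\bi{0}{1}$ self-loop gives $g(x_3,\ldots,x_{2d}) = f(0,1,x_3,\ldots,x_{2d}) + f(1,0,x_3,\ldots,x_{2d})$, so $\su(g) \subseteq \{\gamma \in \{0,1\}^{2d-2} : 01\gamma \in \su(f) \text{ or } 10\gamma \in \su(f)\}$. In particular every support string of $g$ extends (by prepending $01$ or $10$) to a support string of $f$; call such an extension an \emph{admissible lift}.

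First I would reduce the claim to a statement about affine spans: it suffices to show that $\mathrm{Span}(g)$, the affine span of $\su(g)$, is contained in $\eog$. The natural approach is to relate $\mathrm{Span}(g)$ to $\mathrm{Span}(f)$. Since $\su(f) \subseteq \mathrm{Span}(f) \subseteq \eog$, the set $L := \{\gamma \in \{0,1\}^{2d-2} : 01\gamma \in \mathrm{Span}(f)\}$ is an affine subspace (it is the preimage of $\mathrm{Span}(f)$ under the affine map $\gamma \mapsto 01\gamma$, intersected appropriately — more carefully, $L$ is a coset of $\{\eta : 00\eta \in \mathrm{Span}(f) - \mathrm{Span}(f)\}$, hence affine or empty), and similarly $L' := \{\gamma : 10\gamma \in \mathrm{Span}(f)\}$. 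Every string of $\su(g)$ lies in $L \cup L'$. If $01\gamma \in \eog$ then, since the first two coordinates $0,1$ contribute one $0$ and one $1$, we get $\gamma \in \eog$ as well; the same holds for $10\gamma$. Hence $L, L' \subseteq \eog$. The subtlety is that $\mathrm{Span}(g)$ could a priori be larger than $L \cup L'$ since the union of two affine spaces need not be affine, so one cannot directly conclude $\mathrm{Span}(g) \subseteq \eog$ from $\su(g) \subseteq L \cup L'$.

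To handle this I would invoke Lemma~\ref{lem:alluphasdelta1} to argue structurally. Consider the affine span $M := \mathrm{Span}(f) \subseteq \eog$. If $M \cap \hw{>} = \emptyset$, then $M \subseteq \eoe$, so $f$ is itself an $\eo$ signature with affine support, hence by Lemma~\ref{lemmaaffineeosupport} $f$ is an $\eom$ signature; every admissible lift of a support string of $g$ then lies in $\eoe$, so pinning $x_1 \neq x_2$ keeps us in $\eoe$, and $g$ is an $\eo$ signature — in particular $\su(g) \subseteq \eoe \subseteq \eog$, and one checks its span stays in $\eog$ by the same $\eom$ structure. If instead $M \cap \hw{>} \neq \emptyset$, Lemma~\ref{lem:alluphasdelta1} gives a coordinate $x \in \{x_1,\ldots,x_{2d}\}$ fixed to $1$ on all of $M$, hence on all of $\su(f)$; so $f = f' \otimes \Delta_1(x)$ for a pure-up signature $f'$ of arity $2d-1$ — but wait, arity parity forbids a bare odd-arity factor inside an $\eo$ setting, so more precisely the fixed-$1$ coordinate forces, together with the $\eo$-like balance on the rest, a structure I would track carefully: removing $x$ and a paired fixed-$0$ coordinate (if one exists) repeatedly reduces $f$ to an $\eom$ core tensored with $\Delta_0$'s and $\Delta_1$'s. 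In either case the $\bi{0}{1}$ self-loop on $x_1,x_2$ commutes with this tensor decomposition: if $x \notin \{x_1,x_2\}$ the $\Delta_1(x)$ factor survives into $g$ and the self-loop acts on the remaining factor, which is pure-up by induction on $d$; if $x \in \{x_1,x_2\}$, say $x = x_1$, then the support string $01\gamma$ is impossible (it has $x_1 = 0$), so $\su(g) = \{\gamma : 10\gamma \in \su(f)\}$ and $g$ is just the restriction of $f'$ fixing $x_2 = 0$, whose span sits inside $\eog$ by a direct coordinate count.

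\textbf{Main obstacle.} The crux is the gap between ``$\su(g)$ lies in a union of two affine subspaces of $\eog$'' and ``$\mathrm{Span}(g) \subseteq \eog$''; closing it cleanly is what forces the case analysis via Lemma~\ref{lem:alluphasdelta1} and the tensor/$\eom$ structure of pure-up signatures, rather than a one-line affine-algebra argument. I expect the bookkeeping in the case $M \cap \hw{>} \neq \emptyset$ — tracking which fixed coordinates are removed by the self-loop and verifying the reduced signature is genuinely pure-up of strictly smaller arity so that induction applies — to be the main technical burden, though each individual step is routine.
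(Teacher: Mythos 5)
Your proposal starts from a misreading of the operation. The signature $\bi{0}{1}$ is the \emph{pinning} signature: connecting $x_1,x_2$ to it yields $a f(0,1,\ldots)+b f(1,0,\ldots)$ with $(a,b)=(0,1)$ (or the reversed order), so exactly one of the two terms survives, i.e.\ $g(\alpha)=f(0,1,\alpha)$ (equivalently $x_1$ is fixed to $0$ and $x_2$ to $1$). Your formula $g=f(0,1,\cdot)+f(1,0,\cdot)$ is the $\neq_2$ (i.e.\ $\bi{1}{1}$) self-loop, a different operation. This matters because your entire ``main obstacle'' --- that $\su(g)$ sits in a union of two affine slices whose union need not be affine --- does not arise for the actual lemma: here $\su(g)\subseteq\{\gamma:01\gamma\in\su(f)\}$, a single slice. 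The paper's proof is then three lines: the affine span of $\su(g)$ consists of odd XORs $\gamma_1\oplus\cdots\oplus\gamma_k$ of support strings of $g$; each lifts to $01\gamma_1\oplus\cdots\oplus01\gamma_k=01\gamma\in\mathrm{Span}(f)\subseteq\eog$, and since the prefix $01$ contributes exactly one $1$ and one $0$, the suffix $\gamma$ has $\#_1\geq\#_0$. (Plus the trivial case where the pin conflicts with a fixed variable and $g\equiv 0$.) None of the machinery you invoke --- Lemma~\ref{lem:alluphasdelta1}, the $\eom$ tensor decomposition, induction on $d$ --- is needed.

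Even granting your reading of the operation, the proposal is incomplete where it matters: the resolution of the union-of-two-slices issue is only sketched (``a structure I would track carefully'', ``by a direct coordinate count''), and the reduction to a tensor form $f'\otimes\Delta_1(x)$ in the case $\mathrm{Span}(f)\cap\hw{>}\neq\emptyset$ is not actually carried through. Ironically, for the $\neq_2$ self-loop the same one-line XOR argument closes the gap you identify: an odd XOR of strings each carrying a prefix in $\{01,10\}$ again has prefix in $\{01,10\}$ (that set is a coset), so every element of $\mathrm{Span}(g)$ lifts to an element of $\mathrm{Span}(f)\subseteq\eog$ whose prefix contributes exactly one $1$, and the conclusion follows without any case analysis. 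So the proof as proposed both targets the wrong operation and, for the operation it does target, replaces a short exact argument with an unfinished structural one.
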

\begin{proof}
    Suppose $f$ is a pure-up $\eo$ signature. As $\text{Span}(f)\subseteq\eog$, for every odd number of strings in $\su(f)$, their summation has no less 1's than 0's.

    Now we add a self-loop by the pinning signature $\pin$ on two variables of $f$, say $x_1$ and $x_2$, to get $f'$. If $\pin$ enforces some fixed variables of $f$ to be the opposite value, then $f'\equiv0$, which is a trivial pure-up signature.

   Otherwise without loss of generality we assume that $\pin$ fix $x_1$ to 0 and $x_2$ to 1. For each $\alpha\in\su(f')$, $01\alpha\in\su(f)$. Then for any odd integer $k\geq3$ and $\alpha_1,\cdots,\alpha_k\in\su(f')$, we have  $01\alpha_1,\cdots,01\alpha_k\in\su(f)$, and consequently $01\alpha_1\oplus\cdots\oplus01\alpha_k=01\alpha$. By the pure-up property of $f$ we have $01\alpha$ has no less 1's than 0'. Therefore $\alpha$ has no less 1's than 0' and $f'$ is also pure-up.
\end{proof}

There are two cases of a pure-up signature $f$. The first case is that there exists a string in $\text{Span}(f)$ having strictly more 1's than 0's. By Lemma \ref{lem:alluphasdelta1} we know that there is at least one variable of $f$ fixed to 1. The second case is that all strings in $\text{Span}(f)$ are in $\eoe$. By Lemma \ref{lemmaaffineeosupport} we know that $f$ is $\eom$. 

Given this intuition, we introduce an polynomial time approach to transform $\#\eo(\mathcal{F})$ to a $\#\csp$ problem, where $\mathcal{F}$ contains only pure-up signatures. We use the solution space of a signature grid $\Omega$, to denote the set of assignments on variables that give non-zero evaluations in the instance.

\begin{lemma} \label{lem:alluptocsp}
    Suppose $\mathcal{F}$ is composed of pure-up signatures. Then for any instance $\Omega(G(U,V,E),\pi)$ of $\hol(\neq_2\mid\mathcal{F})$, in polynomial time we can obtain another instance $\Omega'(G(U,V,E),\pi')$ of $\hol(\neq_2\mid\mathcal{F}')$ such that $\hol_{\Omega'}=\hol_\Omega$ , where $\mathcal{F}'=\{f|_{\eom[P]}\mid f\in\mathcal{F},\text{ P is a pairing of Var}(f)\}$.

\end{lemma}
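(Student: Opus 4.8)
The plan is to implement the "active receiving-sending mechanism" described informally in the insights subsection, turning it into a concrete polynomial-time preprocessing procedure. The key observation is the dichotomy for pure-up signatures established just above Lemma~\ref{lem:alluptocsp}: every pure-up signature $f$ is either already an $\eom$ signature (when $\text{Span}(f)\subseteq\eoe$, by Lemma~\ref{lemmaaffineeosupport}), or else has $\text{Span}(f)\cap\hw{>}\neq\emptyset$, in which case Lemma~\ref{lem:alluphasdelta1} guarantees a variable $x$ of $f$ with $\alpha_x=1$ for every $\alpha\in\su(f)$ — that is, $f$ carries a $\Delta_1$ factor on $x$. The algorithm maintains the invariant that the current instance has the same partition function as $\Omega$, all signatures are pure-up, and the underlying graph/edge set is unchanged (we only relabel signatures at vertices, equivalently pin variables).

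First I would describe the iteration. Scan the vertices for a signature $f_v$ that is not yet $\eom$; by the above it has a variable $x$ fixed to $1$. The edge carrying $x$ is an internal edge of the $\#\eo$ grid, so it is mediated by a $\neq_2$ and its other endpoint is a variable $y$ of some signature $f_u$ (possibly $u=v$, a self-loop situation, which forces the instance to $0$ unless handled — but a self-loop edge on a pure-up signature with one end fixed to $1$ forces the other end to $0$, consistent with $\neq_2$, and simply removes that vertex). Since $x$ is fixed to $1$ and the edge is $\neq_2$, the variable $y$ is forced to $0$: we replace $f_u$ by $f_u^{y=0}$, i.e.\ we restrict $f_u$ to the subspace $y=0$. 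The crucial point is that this restriction of a pure-up signature is again pure-up: restricting $\su(f_u)$ to the face $y=0$ can only decrease the number of $1$'s uniformly by one across all surviving support strings relative to... — more carefully, one argues exactly as in the proof of Lemma~\ref{lem: allup closed under pin} that for an odd number of surviving support strings $\alpha_1,\dots,\alpha_k$, the strings $0\alpha_i\in\su(f_u)$ sum to $0\alpha$ with $\alpha=\alpha_1\oplus\cdots\oplus\alpha_k$, and pure-upness of $f_u$ gives $0\alpha\in\eog$; but we must be slightly careful because fixing $y=0$ is a one-sided pin. In fact the cleanest route is to note that pinning $y$ to $0$ together with its $\neq_2$-partner $x$ fixed to $1$ is precisely an $f_u^{y=0,x'=1}$-style operation internal to the combined gadget, but since $x$ already is fixed to $1$ in $f_v$, the net effect on $f_u$ alone is the restriction $f_u|_{y=0}$, which is pure-up by the same linear-algebra argument: $\text{Span}(f_u|_{y=0})\subseteq \text{Span}(f_u)\cap\{y=0\}\subseteq\eog\cap\{y=0\}\subseteq\eog$ is false in general, so instead one uses that $\text{Span}(f_u|_{y=0})$ sits inside the affine subspace $\text{Span}(f_u)\cap\{\,\beta : \beta_y=0\,\}$, which is contained in $\eog$ because every point of $\text{Span}(f_u)$ is, hence the restriction is pure-up. (I will present this containment argument carefully in the actual proof; it is the technical heart of the correctness claim.)

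Next I would argue termination and polynomiality. Each successful application strictly decreases the total number of free variables across all signatures (a variable gets pinned), and there are at most $|E(G)|$ edges hence $2|E(G)|$ variable-endpoints, so after at most $O(|E|)$ rounds the process halts. It halts precisely when no signature has a variable fixed to $1$ other than through $\eom$ structure — i.e.\ when every remaining signature $f_v$ has $\text{Span}(f_v)\subseteq\eoe$, which by Lemma~\ref{lemmaaffineeosupport} means $f_v\in\eom$, and more precisely $f_v$ equals its original signature restricted by a sequence of pins to some $f|_{\eom[P]}$ — here one checks that the cumulative effect of pinning, one paired variable at a time along $\neq_2$-edges, of an originally pure-up $f$ always lands inside a set of the form $\eom[P]$ for a pairing $P$ of the \emph{original} $\text{Var}(f)$, because the pins come in opposite pairs $(x=1,y=0)$ linked by disequalities. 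Each per-round update costs $O(1)$ time once we store signatures explicitly (their supports have bounded size since arity is bounded over the finite set $\mathcal{F}$, and we may also simply track which variables are pinned). Thus we obtain $\Omega'$ over $\mathcal{F}'=\{f|_{\eom[P]}\}$ with $\hol_{\Omega'}=\hol_\Omega$ in polynomial time.

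\textbf{Main obstacle.} The delicate point is precisely the bookkeeping that lets us conclude the final signatures have the form $f|_{\eom[P]}$ with $P$ a pairing of $\text{Var}(f)$ \emph{for the original $f$}, rather than some ad hoc partial restriction: one must verify that the pins propagated to a given signature always arrive as a set of complementary $(1,0)$-constraints on disjoint pairs of its original variables (which is exactly what $\neq_2$-mediated propagation delivers), and that once the signature is fully "saturated" the surviving support lies in a single $\eom[P]$. A secondary subtlety is handling the degenerate cases cleanly — when a pin forces a contradiction the instance evaluates to $0$ and we may output any trivial instance; when a propagation step would pin a variable already pinned to the same value it is a no-op, and to an opposite value it zeroes the instance — and checking that these do not break the invariant or the complexity bound. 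Neither obstacle is deep, but both require care to state precisely.
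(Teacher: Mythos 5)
Your high-level plan coincides with the paper's (use the dichotomy "either $\eom$ or a variable fixed to $1$" from Lemmas~\ref{lemmaaffineeosupport} and~\ref{lem:alluphasdelta1}, then propagate forced values along the $\neq_2$ edges), but the concrete mechanism — replacing $f_u$ by the \emph{one-sided} restriction $f_u|_{y=0}$ at every vertex the propagation touches, while only scanning \emph{non}-$\eom$ signatures for sources — has two genuine problems. First, the propagation stalls at $\eom$ signatures: if the edge leaving a fixed-to-$1$ variable lands on a signature $f_u$ that is already $\eom$, then $f_u|_{y=0}$ is still $\eom$, so your scan never revisits it; yet the $P$-partner of $y$ is now forced to $1$ and must be forwarded, or else a non-$\eom$ signature further along the chain never receives the $0$-pin it needs in order to collapse. (Take $f_v$ with $\su(f_v)=\{1100,1010,1001\}$ whose four edges attach to a copy of $\neq_2(y_1,y_2)\otimes\neq_2(y_3,y_4)$: your loop either re-issues the same no-op pin forever or halts with $f_v$ untouched, and $f_v$ itself is not of the form $f_v|_{\eom[P]}$ for any pairing $P$, since every such restriction has support of size $2$.) The paper's proof routes the chain \emph{through} $\eom$ signatures without modifying them, using their pairing to convert the incoming $0$ into an outgoing $1$, precisely to avoid this.

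Second, the step you flag as the "main obstacle" does not resolve the way you assert: one-sided $0$-pins do not "arrive as complementary $(1,0)$-constraints on disjoint pairs" — they arrive singly, and the accumulated restriction can be strictly finer than any $f|_{\eom[P]}$. Concretely, for $f$ with $\su(f)=\{1100,1010,1001\}$, receiving $0$-pins at $x_2$ and $x_3$ (from two distinct non-$\eom$ neighbours, which is easily arranged) yields support $\{1001\}$, whereas $f|_{\eom[P]}$ has support $\{1010,1001\}$, $\{1100,1001\}$ or $\{1100,1010\}$ depending on $P$ — never $\{1001\}$. So the output instance is not over $\mathcal{F}'$ as the lemma requires. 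The paper's fix is exactly the two-sided bookkeeping you defer: a forced $0$ is only ever \emph{applied} at a non-$\eom$ signature $g$, where the incoming variable $y_2$ is paired with a variable $y_1$ of $g$ already fixed to $1$, and $g$ is replaced by $\Delta_1\otimes\Delta_0\otimes g^{y_1=1,y_2=0}$ (still pure-up by Lemma~\ref{lem: allup closed under pin}), while $\eom$ signatures are never restricted at all. Your partition-function argument is sound, and the over-restricted signatures would still land in $\mathscr{A}$ or $\mathscr{P}$ downstream, so the final tractability theorem could be salvaged — but the statement actually being proved here, that $\Omega'$ is an instance over $\mathcal{F}'$, is not delivered by your construction.
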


\begin{proof}

Given an instance $I$ of $\hol(\neq\mid\mathcal{F})$ with $n$ signatures, since each signature $f\in \mathcal{F}$ is pure-up, it either satisfies conditions in Lemma \ref{lemmaaffineeosupport}   or satisfies conditions in Lemma \ref{lem:alluphasdelta1} , depending on whether the intersection with $\hw{>}$ is empty or not. 

If $f$ satisfies conditions in Lemma \ref{lemmaaffineeosupport} , we get $\su{(f)} \subseteq \eom[P]$ for some $P$ as a pairing of $\text{Var}(f)$. We remark that for this kind of signatures, if we fix an arbitrary variable to be 0 (or 1), there will be another variable fixed to 1 (or 0).

Suppose the signature grid of $I$ is $\Omega(G(U,V,E),\pi)$, and $\pi$ assigns $\neq_2$ to all vertices in $U$. If for each $v\in V$, $f_v$ assigned by $\pi$ satisfies the conditions in Lemma \ref{lemmaaffineeosupport}, then the instance already meets our requirements of $\Omega'$ in the description of this lemma. We are done by setting $\Omega'=\Omega$. We denote this case as the $\eom$ case.

Otherwise, there exist a signature $f$ that satisfies the conditions in Lemma \ref{lem:alluphasdelta1}. Let $x_1\in\text{Var}(f)$ be the variable fixed to 1. Consider the $\neq_2$ signature related with $x_1$. If the other end is connected to an $\eom$ signature $h\in\mathcal{F}$, then it will force one of its variables to be 0. By the analysis above, another variable of $h$ is fixed to 1. This mechanism can be triggered successively, until the pinning edge is connected to a non-$\eom$ signature $g\in\mathcal{F}$, and $y_2\in\text{Var}(g)$ is fixed to 0. 

As $g$ is a non-$\eom$ signature $g$, a variable $y_1$ of $g$ is fixed to 1. If $y_1$ and $y_2$ are the same variable, the partition function of $I$ would be 0 and consequently trivial, so we may assume $y_1$ and $y_2$ are distinct. 

After $y_1$ and $y_2$ are fixed to 1 and 0 respectively, by Lemma \ref{lem: allup closed under pin}, $g^{y_1=1,y_2=0}$ is still pure-up. We replace $g$ with $g'=\Delta_1\otimes\Delta_0\otimes g^{y_1=1,y_2=0}$ in the instance. The space of $\text{Span}(g)$ in $I$ is restricted by the pair $(y_1,y_2)$ under this operation, while the partition function keeps unchanged. This operation can be repeated as long as there exists a signature that is not $\eom$ in the instance. Again we are done by the $\eom$ case. Since the instance has $n$ signatures, and each operation needs $O(n)$ time, the process will terminate in $O(n^2)$ time.
\end{proof}

Along with Theorem \ref{thm:CSPdichotomy} and Lemma \ref{thm:csp=eom}, we prove the tractability part of Theorem \ref{thm:puredichotomy}.

\begin{proof}[Proof of tractability]
Given an instance of $\hol(\neq_2\mid\mathcal{F})$, by Lemma \ref{lem:alluptocsp} we can transform it to a instance of $\hol(\neq_2\mid\mathcal{F}')$ in polynomial time, where $\mathcal{F}'=\{f|_{\eom[P]}\mid f\in\mathcal{F},\text{ P is a pairing of Var}(f)\}$. By Definition \ref{def:eoaeop}, $\mathcal{F}'\subseteq\mathscr{A}$ or $\mathcal{F}'\subseteq\mathscr{P}$. Then by Lemma \ref{thm:csp=eom} and the polynomial time algorithms in Theorem \ref{thm:CSPdichotomy} for $\#\csp$ problems, the partition function of the latter instance can be calculated in polynomial time.
\end{proof}

Next we prove the \#P-hardness part of Theorem \ref{thm:puredichotomy}. 
Given an pure-up signature $f$, we use $\mathcal{G}(f)$ to denote the set of $\eom$ signatures that $f$ may generate. If $f$ is $\eom$, then $\mathcal{G}(f)=\{f\}$. Otherwise, $f$ has at least one variable fixed to be 1. We add a self-loop on this variable and another arbitrary variable by $\neq_2$ to get $f'$. If $f'$ is $\eom$, then $f'\in\mathcal{G}(f)$, otherwise we repeat this process on $f'$. $\mathcal{G}(f)$ is defined by this recursive process. 

\begin{lemma}\label{lem:alluphardness}
Suppose $\mathcal{F}$ is a finite set of signatures composed of pure-up  signatures. If $\bigcup_{f\in\mathcal{F}}\mathcal{G}(f)$ is not a subset of $\mathscr{A}$ or $\mathscr{P}$, then $\#\eo(\mathcal{F})$ is \#P-hard.
\end{lemma}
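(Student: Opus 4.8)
The plan is to establish the \#P-hardness by realizing, via gadget construction, two bad $\eom$ signatures and then invoking the known dichotomies. Suppose $\bigcup_{f\in\mathcal{F}}\mathcal{G}(f)\not\subseteq\mathscr{A}$ and $\bigcup_{f\in\mathcal{F}}\mathcal{G}(f)\not\subseteq\mathscr{P}$. Then there exist signatures $f_1,f_2\in\mathcal{F}$ together with $g_1\in\mathcal{G}(f_1)$ and $g_2\in\mathcal{G}(f_2)$ such that $g_1\notin\mathscr{A}$ and $g_2\notin\mathscr{P}$ (the case $f_1=f_2$ with a single witness is covered too, taking $g_1=g_2$). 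First I would observe that each element of $\mathcal{G}(f_i)$ is realizable from $f_i$ alone by repeatedly adding $\neq_2$ self-loops on the variable forced to $1$ (whose existence is guaranteed by Lemma~\ref{lem:alluphasdelta1} whenever the signature is not yet $\eom$) and some other variable; hence $\hol(\neq_2\mid\mathcal{F})\equiv_T\hol(\neq_2\mid\mathcal{F}\cup\{g_1,g_2\})$ by \cite[Lemma 1.3]{cai2017complexity} adapted to the $\#\eo$ setting. In particular $\#\eo(\{g_1,g_2\})\leq_T\#\eo(\mathcal{F})$.

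Next I would transfer the problem to $\#\csp$. Since $g_1$ and $g_2$ are $\eom$ signatures, Lemma~\ref{thm:csp=eom} (together with the $\tau/\pi$ correspondence recalled in Section~\ref{section:relation}) gives $\#\eo(\{g_1,g_2\})\equiv_T\#\csp(\tau(g_1)\cup\tau(g_2))$, where $\tau(g_i)$ is the set of arity-$d_i$ signatures obtained from $g_i$ by flipping variables within the pairs of its defining pairing $P_i$. By Lemma~\ref{lem:arity4inA}'s companion fact (stated in the excerpt as ``$f\in\mathscr{A}$ iff $\tau(f)\subseteq\mathscr{A}$'', resp. for $\mathscr{P}$), from $g_1\notin\mathscr{A}$ we get some $g_1''\in\tau(g_1)$ with $g_1''\notin\mathscr{A}$, and from $g_2\notin\mathscr{P}$ we get some $g_2''\in\tau(g_2)$ with $g_2''\notin\mathscr{P}$. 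Then $\{g_1'',g_2''\}\not\subseteq\mathscr{A}$ and $\{g_1'',g_2''\}\not\subseteq\mathscr{P}$, so Theorem~\ref{thm:CSPdichotomy} yields that $\#\csp(\{g_1'',g_2''\})$ is \#P-hard; since $\#\csp(\{g_1'',g_2''\})\leq_T\#\csp(\tau(g_1)\cup\tau(g_2))$, chaining the reductions gives \#P-hardness of $\#\eo(\mathcal{F})$.

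The main obstacle is the first step: verifying precisely that every signature in $\mathcal{G}(f)$ is genuinely realizable from $f$ as an $\mathcal{F}$-gate in the $\#\eo$ sense, and that the recursive self-loop process is well defined. Concretely, one must check that when $f$ is not $\eom$ the variable forced to $1$ is a true pinning effect (so that the $\neq_2$ self-loop on it and an arbitrary partner does not annihilate the signature, or, if it does, the resulting trivial signature is harmless), that the arity strictly drops at each step so the recursion terminates, and that the closure lemma for pure-up signatures under $\Delta$-self-loops (Lemma~\ref{lem: allup closed under pin}) guarantees the intermediate signatures remain pure-up so the ``forced $1$'' structure persists. Once this realizability bookkeeping is in place, the remainder is a routine composition of the two cited dichotomies; I would also remark that if additionally $\mathcal{F}$ itself is already known to lie outside $\eom[\mathscr{A}]$ and $\eom[\mathscr{P}]$, the proof outline in Section~\ref{purewater} (using the decomposition $f|_{\eom[P]}=f'\otimes\Delta_0^{\otimes(2d-2k)}\otimes\Delta_1^{\otimes(2d-2k)}$) gives an alternative route to the same conclusion.
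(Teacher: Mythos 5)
Your proposal is correct and follows essentially the same route as the paper's proof: realize the $\eom$ signatures in $\bigcup_{f\in\mathcal{F}}\mathcal{G}(f)$ by repeated $\neq_2$ self-loops, pass to $\#\csp$ via Lemma~\ref{thm:csp=eom} and the fact that $f\in\mathscr{A}$ (resp.\ $\mathscr{P}$) iff $\tau(f)\subseteq\mathscr{A}$ (resp.\ $\mathscr{P}$), and conclude with Theorem~\ref{thm:CSPdichotomy}. Your additional realizability bookkeeping (termination of the self-loop recursion and preservation of the pure-up/forced-$1$ structure via Lemmas~\ref{lem:alluphasdelta1} and~\ref{lem: allup closed under pin}) is exactly the content the paper leaves implicit in the definition of $\mathcal{G}(f)$.
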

\begin{proof}
    All signatures in $\bigcup_{f\in\mathcal{F}}\mathcal{G}(f)$ are generated by signatures in $\mathcal{F}$ through gadget construction, thus we have $\#\eo(\bigcup_{f\in\mathcal{F}}\mathcal{G}(f))\leq_T\#\eo(\mathcal{F})$. By Lemma \ref{thm:csp=eom}, $\#\eo(\bigcup_{f\in\mathcal{F}}\mathcal{G}(f))$ is equivalent to a $\#\csp(\mathcal{F}')$ problem, where $\mathcal{F}'$ is not a subset of $\mathscr{A}$ or $\mathscr{P}$. By Theorem \ref{thm:CSPdichotomy} $\#\csp(\mathcal{F}')$ is \#P-hard, and consequently $\#\eo(\mathcal{F})$ is \#P-hard.
\end{proof}

\begin{proof}[Proof of hardness]
    Assume $\mathcal{G}(f)\subseteq\mathscr{A}$ (or $\mathscr{P}$ respectively) is of arity $2d$. For each pairing $P$ of $\text{Var}(f)$, we add self-loop by $\neq_2$ on each variable fixed to 1 and its partner in the pair until we get an $\eom$ signature $f'\in\mathcal{G}(f)$ of arity $2k$. $f|_{\eom[P]}=f'\otimes\Delta_0^{2d-2k}\otimes\Delta_1^{2d-2k}$, therefore  $f|_{\eom[P]}\in\mathscr{A}$ (or $\mathscr{P}$ respectively). This shows $f$ is $\eom[\mathscr{A}]$ (or $\eom[\mathscr{P}]$ respectively).

    Consequently, if $\mathcal{F}$ is not a subset of $\eom[\mathscr{A}]$ or $\eom[\mathscr{P}]$, then $\bigcup_{f\in\mathcal{F}}\mathcal{G}(f)$ is not a subset of $\mathscr{A}$ or $\mathscr{P}$, and by Lemma \ref{lem:alluphardness} $\#\eo(\mathcal{F})$ is \#P-hard. 
\end{proof}

\section{Polynomial time algorithm for specific 0-rebalancing and 1-rebalancing signatures}\label{balancing}

In this section, we prove Theorem \ref{thm:rebaalg}. The mapping $\psi$ in Definition \ref{def:reba} particularly denotes the correspondence in the first level of the recursive definition, and we call it the first level mapping of $f$. 
Because the arity $2d-2$ signature $f^{x=0,\psi(x)=1}$ is also \ba[0], we can apply the definition again to $f^{x=0,\psi(x)=1}$, and get the first level mapping of $f^{x=0,\psi(x)=1}$. From the perspective of $f$, this mapping is a second level mapping, and we denote it by $\psi_2^{x=0,\psi(x)=1}$. 
Since \ba[0] is defined in an adaptive recursive way, the second level mappings depend on which $x$ is fixed to $0$. 
This rebalancing process can continue until the arity comes to 0, which implies that $f$ can always, sends out a $1$ after receiving a $0$. 

We now prove that the set of \ba[0] signatures is closed under gadget construction.

\begin{lemma} \label{lem:ba-tensor-closed}
If $f$ and $g$ are \ba[0], then $f\otimes g$ is also \ba[0].
\end{lemma}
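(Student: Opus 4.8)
The plan is to verify the recursive conditions of Definition~\ref{def:reba} for $h = f \otimes g$ directly, using the first level mappings of $f$ and $g$ to build one for $h$, and then invoking induction on the arity. Write $2d_f = \mathrm{arity}(f)$ and $2d_g = \mathrm{arity}(g)$, so $\mathrm{arity}(h) = 2d_f + 2d_g$. If $d_f = d_g = 0$ there is nothing to prove, since a tensor product of nonzero constants is a nonzero constant. Otherwise at least one factor has positive arity; by symmetry of the tensor product it suffices to treat a variable $x$ coming from whichever factor is nontrivial, say $x \in \mathrm{Var}(f)$ with $d_f \ge 1$.

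\textbf{Constructing the first level mapping.} For $x \in \mathrm{Var}(f)$, let $\psi_f$ be the first level mapping of $f$ and set $\psi_h(x) = \psi_f(x) \in \mathrm{Var}(f) \subseteq \mathrm{Var}(h)$; symmetrically, for $x \in \mathrm{Var}(g)$ with $d_g \ge 1$ use $\psi_g$. I must check the two clauses of the $d \ge 1$ case of Definition~\ref{def:reba}. First, the zero-forcing clause: for any $\alpha \in \{0,1\}^{\mathrm{Var}(h)}$ with $\alpha_x = \alpha_{\psi_h(x)} = 0$, we have $h(\alpha) = f(\alpha|_{\mathrm{Var}(f)}) \cdot g(\alpha|_{\mathrm{Var}(g)})$, and since $x, \psi_h(x) \in \mathrm{Var}(f)$, the first level property of $f$ gives $f(\alpha|_{\mathrm{Var}(f)}) = 0$, hence $h(\alpha) = 0$. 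Second, the recursive clause: $h^{x=0,\psi_h(x)=1}$ must be $0$-rebalancing. Here I would use the key identity $h^{x=0,\psi_h(x)=1} = f^{x=0,\psi_f(x)=1} \otimes g$, which holds because pinning two variables both lying in $\mathrm{Var}(f)$ only affects the $f$-factor. Now $f^{x=0,\psi_f(x)=1}$ is $0$-rebalancing by the recursive hypothesis built into $f$'s rebalancing, and it has arity $2d_f - 2 < 2d_f$, so $\mathrm{arity}(f^{x=0,\psi_f(x)=1} \otimes g) < \mathrm{arity}(h)$; by the induction hypothesis on total arity, $f^{x=0,\psi_f(x)=1} \otimes g$ is $0$-rebalancing, which is exactly what we needed. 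The $1$-rebalancing case is dual, with $x=1,\psi(x)=0$ and "$\alpha_x = \alpha_{\psi(x)} = 1$" throughout.

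\textbf{Main obstacle.} The only genuinely delicate point is bookkeeping: one must be careful that the first level mapping $\psi$ in Definition~\ref{def:reba} is required to be defined for \emph{every} variable $x$, so the construction above needs to cover $x \in \mathrm{Var}(f)$ and $x \in \mathrm{Var}(g)$ separately, and in the case where $x$ lies in a trivial factor (say $x \in \mathrm{Var}(g)$ but $d_g = 0$) this situation simply does not arise since then $\mathrm{Var}(g) = \emptyset$. One should also confirm that the pinning/tensor identity $h^{x=0,y=1} = f^{x=0,y=1} \otimes g$ is literally the definition of adding a $\neq_2$ self-loop restricted to $h = f\otimes g$ when $x,y$ are in the same factor — this is immediate from the product structure but worth stating. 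No subtlety about supports, spans, or the $\eom$ property enters; this is a purely structural closure fact, and the induction on total arity closes it cleanly.
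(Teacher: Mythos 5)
Your proof is correct and follows essentially the same route as the paper's: induction on the total arity, taking the first level mapping of $f\otimes g$ to be $\psi_f$ on $\mathrm{Var}(f)$ and $\psi_g$ on $\mathrm{Var}(g)$, verifying the zero-forcing clause factorwise, and using the identity $(f\otimes g)^{x=0,\psi(x)=1}=f^{x=0,\psi_f(x)=1}\otimes g$ to invoke the induction hypothesis. The only cosmetic difference is your cleaner base case ($d_f=d_g=0$) versus the paper's "sum of arities at most 4".
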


\begin{proof}
We will prove this lemma by induction on the sum of arity of $f$ and $g$. It's obvious that when the sum is no more than 4, both of them must be binary signatures, thus trivially correct since no matter which variable $x$ of $f\otimes g$ is chosen, after remove $x$ and $\psi(x)$, there is only $f$ or $g$ left, which is also \ba[0]. The mapping $\psi$ is exactly the first level mapping of the signature that $x$ is related to.

Assume that the lemma is already proved for $f$ and $g$, where the sum of arity of $f$ and $g$ is no more than $2d$. Now we consider the situation that the sum is more than $2d$.

Suppose \ba[0] signatures $f$ and $g$ have arity $a$ and $b$ respectively, and $a+b=2d+2$. We assume that $f$ and $g$ are not 0 signatures, therefore $f\otimes g\not\equiv 0$.
Let $\psi_f$ and $\psi_g$ denote the first level mapping of $f$ and $g$, which witness the \ba[0] property of $f$ and $g$ respectively. 
We define that $\psi_{f\otimes g} = \psi_f \cup \psi_g$. 
Suppose $x$ is an arbitrary variable of $f \otimes g$. If $x$ is related to $f$, then $[x =0 \to \psi_{f\otimes g}(x)= \psi_f(x) =1]$. Hence, $f \otimes g \not\equiv 0 \to  [\forall x, x =0 \to \psi_{f\otimes g}(x) =1]$. If we fix $x$ and $\psi(x)$, we get $(f \otimes g)^{x=0,\psi_{f\otimes g}(x) =1} $, which is in fact $f^{x=0,\psi_{f}(x) =1} \otimes g$, and it is \ba[0] according to the induction hypothesis. If $x$ is related to $g$, the proof is similar. Therefore, the induction is done and the lemma is proved. 
\end{proof}

\begin{lemma} \label{lem:ba-jumper-closed}
Suppose $f$ is a \ba[0] signature, $g$ is obtained from $f$ by adding an arbitrary self-loop by $\neq_2$. Then $g$ is also \ba[0].
\end{lemma}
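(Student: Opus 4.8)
The plan is to induct on $\text{arity}(f)=2d$: the single self-loop we add is handled directly, while the smaller-arity self-loops appearing inside the recursive definition of $0$-rebalancing are handed to the induction hypothesis. Throughout I would use that being $0$-rebalancing depends only on the support of a signature --- both the clause ``$\alpha_x=\alpha_y=0\Rightarrow f(\alpha)=0$'' and the restriction $f^{x=0,y=1}$ are determined by $\su(f)$ --- so the zero signature of any even arity is vacuously $0$-rebalancing, and any cancellation occurring in a self-loop only shrinks the support and hence can only help. The base cases $d\le 1$ are trivial: for $d=0$ there is nothing to loop, and for $d=1$ the signature $g=f^{u\neq v}$ has arity $0$, i.e.\ is a constant, which is $0$-rebalancing by convention.

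For the inductive step I would write $g=f^{u\neq v}$ for a self-loop on distinct variables $u,v\in\text{Var}(f)$; then $\text{Var}(g)=\text{Var}(f)\setminus\{u,v\}$ and, ordering $u,v$ first, $g(\alpha)=f(0,1,\alpha)+f(1,0,\alpha)$. Let $\psi_f$ witness that $f$ is $0$-rebalancing. I would first build a candidate first-level mapping $\psi_g$ for $g$ by cases on where $\psi_f(x)$ lies. If $\psi_f(x)\notin\{u,v\}$, take $\psi_g(x):=\psi_f(x)$. If $\psi_f(x)=u$, observe that $\alpha_x=0$ already annihilates the summand $f(0,1,\alpha)$; set $h:=f^{x=0,u=1}$, which is $0$-rebalancing by the recursive clause of Definition~\ref{def:reba} applied to the variable $x$ of $f$, note that $v\in\text{Var}(h)$, and take $\psi_g(x):=\psi_h(v)$. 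If $\psi_f(x)=v$, do the mirror-image construction with $h:=f^{x=0,v=1}$ and $\psi_g(x):=\psi_h(u)$ (recall $f^{u\neq v}=f^{v\neq u}$). In every case $\psi_g(x)\in\text{Var}(g)\setminus\{x\}$, and a short computation --- in the last two cases rewriting the one surviving summand of $g(\alpha)$ as an evaluation of $h$ in which $v$ (resp.\ $u$) is already pinned to $0$ --- shows that $\alpha_x=\alpha_{\psi_g(x)}=0$ forces $g(\alpha)=0$, which is the first-level clause for $\psi_g$.

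Next I would check the recursive clause, namely that $g^{x=0,\psi_g(x)=1}$ is $0$-rebalancing. Since $\{x,\psi_g(x)\}$ is disjoint from $\{u,v\}$, pinning commutes with the self-loop. When $\psi_f(x)\notin\{u,v\}$ this gives $g^{x=0,\psi_g(x)=1}=\bigl(f^{x=0,\psi_f(x)=1}\bigr)^{u\neq v}$, and since $f^{x=0,\psi_f(x)=1}$ is $0$-rebalancing of arity $2d-2$, the induction hypothesis finishes this case. When $\psi_f(x)\in\{u,v\}$ no induction is needed: the same ``$\alpha_x=0$ annihilates one summand'' observation yields the identity $g^{x=0,\psi_g(x)=1}=h^{v=0,\psi_h(v)=1}$ (resp.\ with $u$), whose right-hand side is $0$-rebalancing directly by the recursive clause of Definition~\ref{def:reba} applied to the $0$-rebalancing signature $h$. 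This completes the induction.

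The hard part is the case $\psi_f(x)\in\{u,v\}$: the obvious partner $\psi_f(x)$ has been ``consumed'' by the self-loop, so one has to descend one level into the recursive structure of $f$ --- passing from $f$ to $h=f^{x=0,\psi_f(x)=1}$ and borrowing $h$'s first-level mapping at the other looped variable --- to supply a replacement partner for $x$. The only genuinely calculational points are the two bookkeeping identities in that case: that the surviving summand of $g$, and likewise of $g^{x=0,\psi_g(x)=1}$, equals the appropriate further restriction of $h$. Everything else is a routine unwinding of Definition~\ref{def:reba}.
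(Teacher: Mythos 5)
Your proposal is correct and follows essentially the same route as the paper's proof: induction on arity, keeping $\psi_f(x)$ as the partner when it survives the self-loop, and otherwise descending one level to $h=f^{x=0,\psi_f(x)=1}$ and borrowing its first-level mapping at the other looped variable, with the same two bookkeeping identities (one summand annihilated, the other equal to a further restriction of $h$). The only cosmetic differences are your leaner base case and your explicit remark that the property depends only on the support, neither of which changes the argument.
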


\begin{proof}
We use induction to prove this lemma. If the arity of $f$ is 2, the conclusion holds naturally since all 0 arity signatures are \ba[0].
If the arity of $f$ is 4, the conclusion holds simply because $g$ is a $\eo$ signature of arity 2, that is a general disequality. 

Assume that the conclusion holds for $f$ of arity no more than $2d$. 

Suppose $f$ is a \ba[0] signature of arity $2d+2$, with variable set $\text{Var}(f)=\{s,t,x_1, \cdots, x_{2d}\}$.
Without loss of generality, let $g(x)=f(01x)+f(10x)$, whose variable set is $X=\{x_1, \cdots, x_{2d}\}$. 

Suppose the first level mapping of $f$ is $\psi$. For any variable $x \in X$,  $y=\psi(x) \in X \cup \{s,t\} -\{x\}$. Define
\[
\psi'(x)= \left \{ \begin{array}{ll}
    \psi(x) &  \text{ if } x\neq s,t,  \psi(x)\in X; \\
    \psi_2^{x=0,s=1}(t) &    \text{ if }  \psi(x)=s ;\\
    \psi_2^{x=0,t=1}(s) &    \text{ if } \psi(x)=t .
\end{array}  \right.
\]

We use $\psi'$ to witness that $g$ is also \ba[0], and there are several cases. 
  
\begin{itemize}
    \item  $y \in X$. 

For any $\alpha \in \{0,1\}^X$, if $\alpha_x=\alpha_y=0$, then $f(01\alpha)=f(10\alpha)=0$ because $y=\psi(x)$, 
so $g(\alpha)=0$. 

Because $f ^{x=0,y=1}$ is \ba[0], by the induction hypothesis, $f ^{x=0,y=1, s\neq t}$ is \ba[0]. That is, $g^{x=0, \psi'(x)=1}$ is also \ba[0].
    
    \item   $y=s$. 

For any $\alpha \in \{0,1\}^X$ with $\alpha_x=\alpha_{\psi'(x)}=0$. First we prove that $g(\alpha)=0$. Since $g(\alpha)=f(01\alpha)+f(10\alpha)$, it suffices to prove that $f(01\alpha)=0$ and $f(10\alpha)=0$.  

For $01\alpha$, we have $s=0,t=1, x=0, \psi'(x)=\psi_2^{x=0,s=1}(t)=0$. 
Because $\psi(x)=s$, that is, $x$ and $s$ can not both be 0, thus we have $f(01\alpha)=0$. 

For $10\alpha$, we have $s=1,t=0,x=0, \psi'(x)=\psi_2^{x=0,s=1}(t)=0$. 
 Obviously, $f(10\alpha)=f^{x=0,s=1}(\beta)$, where $\beta$ denotes the string by removing the first bit and the $\alpha_x$ bit from $10\alpha$. Because $f^{x=0,s=1}$ is \ba[0], with $\psi_2^{x=0,s=1}$ as the second level mapping, and $\beta$ satisfies $t=0, \psi_2^{x=0,s=1}(t)=0$, we have $f^{x=0,s=1} (\beta)=0$. That is, $f(10\alpha)=0$. 

For the second part, we need to prove $g^{x=0, \psi'(x)=1}$ is \ba[0]. $g^{x=0, \psi'(x)=1}(\beta)=f^{x=0, \psi'(x)=1}(01\beta)+f^{x=0, \psi'(x)=1}(10\beta)=f^{x=0, \psi'(x)=1,s=0,t=1}(\beta)+f^{x=0, \psi'(x)=1,s=1,t=0}(\beta)$.
Because $s= \psi(x)$, we know the first item $f^{x=0, \psi'(x)=1,s=0,t=1}$ 
is 0 signature. The second item is in fact $f^{x=0, s=\psi(x)=1,t=0,\psi_2^{x=0,s=1}(t) =1}$, which is \ba[0] by the recursive definition of $f$ in the second level. 
    \item  
    The case $y=t$ is similar as $y=s$. 
\end{itemize}

In summary, we have proved that $g$ is \ba[0], witnessed by $\psi'$, which finishes our induction. 
\end{proof}

\begin{lemma}
    Given any $\eo$ gadget composed of \ba[0] signatures, the signature of this gadget is also \ba[0].
    \label{lem:bagad}
\end{lemma}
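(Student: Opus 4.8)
# Proof Plan for Lemma~\ref{lem:bagad}

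\textbf{Overall strategy.} The plan is to reduce the statement about an arbitrary $\eo$ gadget to the two closure results already proven, namely Lemma~\ref{lem:ba-tensor-closed} (closure under tensor product) and Lemma~\ref{lem:ba-jumper-closed} (closure under adding a $\neq_2$ self-loop). The key observation is that, in the $\#\eo$ setting, any gadget built from a signature set can be decomposed into elementary steps: first take the tensor product of all the vertex signatures appearing in the gadget, and then realize every internal edge as a $\neq_2$ self-loop connecting the two corresponding dangling edges. This is exactly the decomposition discussed in Section~\ref{section:GC}, where gadget construction for $\#\eo$ corresponds to matrix multiplication with $\neq_2$ inserted on internal edges.

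\textbf{Key steps, in order.} First I would recall that an $\eo$ gadget over a set of \ba[0] signatures consists of finitely many vertices $v_1,\ldots,v_m$ carrying signatures $f_{v_1},\ldots,f_{v_m}$, each of which is \ba[0] by hypothesis, together with a set of internal edges and a set of dangling edges. Second, form the disjoint tensor product $F = f_{v_1}\otimes f_{v_2}\otimes\cdots\otimes f_{v_m}$; by repeated application of Lemma~\ref{lem:ba-tensor-closed} (a trivial induction on $m$), $F$ is \ba[0]. Third, enumerate the internal edges $e_1,\ldots,e_t$ of the gadget; each $e_j$ identifies a pair of variables of the current signature, and realizing it in the $\#\eo$ framework means adding a $\neq_2$ self-loop on that pair. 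Apply Lemma~\ref{lem:ba-jumper-closed} once per internal edge, in any order, to obtain signatures $F = F_0, F_1, \ldots, F_t$ where $F_j$ is obtained from $F_{j-1}$ by one $\neq_2$ self-loop; each $F_j$ is \ba[0] by Lemma~\ref{lem:ba-jumper-closed}, so in particular $F_t$ is \ba[0]. Finally, observe that $F_t$, whose remaining variables are exactly the dangling edges of the gadget, is by definition the signature of the gadget, which completes the proof.

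\textbf{Anticipated main obstacle.} The substantive content is entirely packaged into the two preceding lemmas, so the only real work here is bookkeeping: one must be careful that the notion of ``$\eo$ gadget'' used in the statement genuinely matches the tensor-then-self-loop decomposition, i.e., that inserting $\neq_2$ on each internal edge is precisely how $\mathcal{F}$-gates are defined for $\#\eo$ (as stated in Section~\ref{section:GC}: ``we add $\neq_2$ vertices to internal edges and treat the gadget as a standard $\hol$ gate''). One should also note that Lemma~\ref{lem:ba-jumper-closed} is stated for a single self-loop on a \ba[0] signature, and since $F_{j-1}$ is \ba[0] by induction, the lemma applies at each step without modification; no issue arises from self-loops whose two endpoints lie on the same original vertex versus different ones, because after forming the tensor product all variables are simply variables of one signature. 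Hence I expect no genuine mathematical difficulty, only the need to phrase the reduction cleanly.
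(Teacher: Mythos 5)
Your proposal is correct and follows essentially the same route as the paper: decompose the gadget into tensor products of the vertex signatures followed by a $\neq_2$ self-loop for each internal edge, then invoke Lemma~\ref{lem:ba-tensor-closed} and Lemma~\ref{lem:ba-jumper-closed} at each step. The paper's proof is just a terser statement of exactly this reduction.
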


\begin{proof}
Any gadget construction can be represented by a series of operations of two forms.

The first operation is tensor production. It is a binary operation turn $f$ and $g$ into $f \otimes g$. 
The second operation is adding a self-loop. Suppose $f$ is \ba[0] signature of arity $2d$. For any two variables $s,t$ of $f$, the self-loop operation gives an arity $2d-2$ signature $f^{s \neq t}$.  

We have proved that \ba[0] is preserved by two basic operations in Lemma \ref{lem:ba-tensor-closed} and  \ref{lem:ba-jumper-closed}, hence the conclusion holds.  
\end{proof}

It is interesting that when we observe the local signature from a global perspective of the whole instance, it shows an opposite property which is denoted as outside-\ba[1].

\begin{lemma}[outside-1-rebalancing]\label{outside-1-rebalancing}
    Suppose $\mathcal{F}$ is composed of \ba[0] signatures, and there is an instance $G(U,V,E)$ of $\hol(\neq_2\mid\mathcal{F})$. Fix an arbitrary vertex $s\in V$ assigned $f_s$, and let $E(s)=X=\{x_1,x_2, \cdots, x_{2d}\}$. We delete the vertex $s$ from $G$ to get a gadget $H$ whose dangling edges are associated to $X$. Then the signature $f_H$ of $H$ is \ba[1], and we denote this property as the outside-1-rebalancing property of $s$ on $X$ deduced by the instance $G$.
\end{lemma}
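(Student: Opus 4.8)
The plan is to exploit the global structure of a Holant instance for $\#\eo$, where each internal edge carries a $\neq_2$ constraint forcing its two endpoints to take opposite values. The key observation is a ``conservation of bits'' principle: in any valid assignment contributing to the partition function, across the whole graph $G$ the number of edge-endpoints set to $0$ equals the number set to $1$, because every $\neq_2$ vertex contributes one $0$ and one $1$, and every signature in $\mathcal{F}$ (being an $\eo$ signature) also contributes equally many $0$'s and $1$'s among its incident endpoints. Removing the single vertex $s$ and looking at the gadget $H$, the dangling edges $X=\{x_1,\dots,x_{2d}\}$ are exactly the ``half-edges'' of $s$ that now stick out of $H$; the rest of $H$ is an internally-consistent $\#\eo$ fragment. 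So the complement-side ``demand'' seen through $X$ is dual to the local ``supply'' of $f_s$: whatever $f_s$ is $0$-rebalancing on, $f_H$ should be $1$-rebalancing on.

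Concretely, I would prove $f_H$ is \ba[1] by exhibiting a first level mapping and invoking the recursive structure of Definition~\ref{def:reba}, mirroring the inductive arguments used for Lemmas~\ref{lem:ba-tensor-closed} and~\ref{lem:ba-jumper-closed}. First I would note that $f_H$ is itself an $\eo$-gadget signature built from the \ba[0] signatures of $\mathcal{F}$ together with the $\neq_2$'s on its internal edges; by Lemma~\ref{lem:bagad}, $f_H$ is \ba[0]. The heart of the matter is then: \emph{why does deleting $s$ from an Eulerian signature grid convert the \ba[0] property of the ambient gadget into a \ba[1] property on $X$?} The mechanism is that $f_H$ together with $f_s$ (re-attached) must produce a scalar (the whole instance has arity $0$), i.e. $f_s$ and $f_H$ are ``contracted'' along $X$ via $\neq_{2}^{\otimes 2d}$. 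Since $f_s$ is \ba[0] with first level mapping $\psi_s$, fixing a variable $x\in X$ to $0$ on the $f_s$ side forces $\psi_s(x)$ to $1$ on that side; through the $\neq_2$ edges this means $x$ is $1$ and $\psi_s(x)$ is $0$ on the $H$ side. So the promised first level mapping of $f_H$ is $\psi_{f_H}:=\psi_s$ itself (viewed as a permutation-like partial map on $X$), and the defining implication ``$\alpha_x=\alpha_{\psi_s(x)}=1 \Rightarrow f_H(\alpha)=0$'' holds because any such $\alpha$, when glued to $f_s$, forces $x=\psi_s(x)=0$ on the $f_s$ side, killing $f_s$ — but more carefully, I need to argue $f_H$ itself vanishes, not just the glued product. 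Here I would use the $\eo$/pure-type bit-counting: an assignment $\alpha$ to $X$ with $\alpha_x=\alpha_{\psi_s(x)}=1$ that extends to a nonzero term in $f_H$ would, when combined with the forced opposite values inside the deleted star at $s$, violate the balance that every Eulerian signature grid fragment must respect, exactly as in the receiving-sending mechanism of Section~\ref{section:allup}.

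For the recursive clause, I would show $(f_H)^{x=1,\psi_s(x)=0}$ equals the gadget obtained by deleting $s$ \emph{and additionally pinning} $x\mapsto 1$, $\psi_s(x)\mapsto 0$ in $H$; equivalently, it is the outside gadget of the same instance after we fix those two half-edges, which corresponds on the $f_s$ side to $f_s^{x=0,\psi_s(x)=1}$ — a \ba[0] signature of arity $2d-2$ by the recursive definition of $f_s$. Then the induction hypothesis (induct on $d$, i.e. on $\mathrm{arity}(f_s)$) gives that this smaller outside gadget is \ba[1], which is precisely the recursive condition needed for $f_H$ to be \ba[1] witnessed by $\psi_s$. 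The base case $d=0$ is the trivial arity-$0$ signature. I expect the main obstacle to be the careful bookkeeping in the step ``$\alpha_x=\alpha_{\psi_s(x)}=1\Rightarrow f_H(\alpha)=0$'': one must verify that the vanishing really comes from the gadget $H$ alone (so that $f_H$, as a standalone signature, has this zero pattern) rather than only from the re-glued product with $f_s$. Resolving this requires the same support/bit-balance analysis used in Lemmas~\ref{lem:alluphasdelta1} and~\ref{lem:alluptocsp}, together with the fact that the $\neq_2$ edges of $H$ already enforce local balance, so the two constrained dangling edges being both $1$ would leave an unsatisfiable parity deficit inside $H$.
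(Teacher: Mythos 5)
There is a genuine gap, and it comes from looking in the wrong place for the witness of the \ba[1] property. You propose to take the first level mapping of $f_H$ to be $\psi_s$, the first level mapping of the \emph{deleted} signature $f_s$, and to justify the vanishing condition ``$\alpha_x=\alpha_{\psi_s(x)}=1\Rightarrow f_H(\alpha)=0$'' by gluing $f_s$ back in. As you yourself note, this only shows that the contracted product $f_s\cdot f_H$ vanishes, not that $f_H$ does; but the obstacle is not a bookkeeping issue to be patched by a parity argument --- it is fatal. The signature $f_H$ is determined entirely by the part of the instance \emph{outside} $s$ and carries no information about $f_s$, so there is no reason for its support to avoid assignments with $\alpha_x=\alpha_{\psi_s(x)}=1$. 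Concretely, let $f_s$ have arity $4$ with $\psi_s$ pairing $x_1$ with $x_2$, and let $H$ decompose as two disjoint components, one attached to $\{x_1,x_3\}$ and one to $\{x_2,x_4\}$, each realizing a generalized binary disequality; then $f_H(1,1,0,0)\neq 0$, so $\psi_s$ is not a valid first level mapping for $f_H$, and no ``parity deficit inside $H$'' rules this out. This is also why the subsequent algorithm (Lemma~\ref{lem:0rebtocsp}) must track \emph{two} distinct mappings $\psi$ and $\theta$ and look for cycles in their union: if $\theta$ were just $\psi$ the whole construction there would collapse. For the same reason your proposed induction on $\mathrm{arity}(f_s)$ cannot work, since $f_H$ does not change when $f_s$ does.

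The correct source of the \ba[1] property is the rest of the instance, not $f_s$. Let $u_i\in U$ be the $\neq_2$ vertex on the edge $x_i$ and let $x_i'$ be its other incident edge. Deleting $s$, the edges $X$, and the vertices $U'=\{u_1,\dots,u_{2d}\}$ leaves an $\eo$ gadget $H'$ built solely from signatures of $\mathcal{F}$, so by Lemma~\ref{lem:bagad} its signature $f_{H'}$ is \ba[0]; then $f_H$ is obtained from $f_{H'}$ by negating every dangling variable through the $\neq_2$'s at $U'$, which turns \ba[0] into \ba[1]. (Your intermediate claim that ``by Lemma~\ref{lem:bagad}, $f_H$ is \ba[0]'' is also off: $H$ is not an $\eo$ gadget in the paper's convention because the boundary $\neq_2$'s sit on dangling edges, and the flip they induce is precisely what converts \ba[0] into \ba[1].) Your ``conservation of bits'' intuition correctly explains \emph{why} one expects duality between inside and outside, but the proof must route through the closure of \ba[0] under gadget construction applied to $H'$, not through the rebalancing structure of $f_s$.
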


\begin{proof}
For each $1\le i\le 2d$, we denote the vertex adjacent to $x_i$ in $U$ as $u_i$, and the other neighbouring edge of $u_i$ is $x'_i$. We let $U'=\{u_1,u_2, \cdots, u_{2d}\}$ and $X'=\{x'_1,x'_2, \cdots, x'_{2d}\}$.

We delete $s$, $X$ and $U'$ from $G$ to get a gadget $H'$ whose dangling edges are associated to $X'$. Each $v \in V$ is assigned a 0-rebalancing signature, so by Lemma \ref{lem:bagad} the signature $f_{H'}$ of $H'$ is \ba[0]. 
For each $1\le i\le 2d$, since $x_i$ always takes the opposite value of $x'_i$ by the restriction from $u_i$, we have $f_H$ is \ba[1].
\end{proof}

We denote the signature of $f_H$ in Lemma \ref{outside-1-rebalancing} as $\hat{f_s}$, or simply $\hat{f}$ without causing ambiguity.
We remark that for a given vertex $s$ assigned $f$, both the 0-rebalancing property of $f$ and the 1-rebalancing property of $\hat{f}$ are associated with $E(s)$.

\begin{figure}
            \centering
            \includegraphics[height=0.2\textheight]{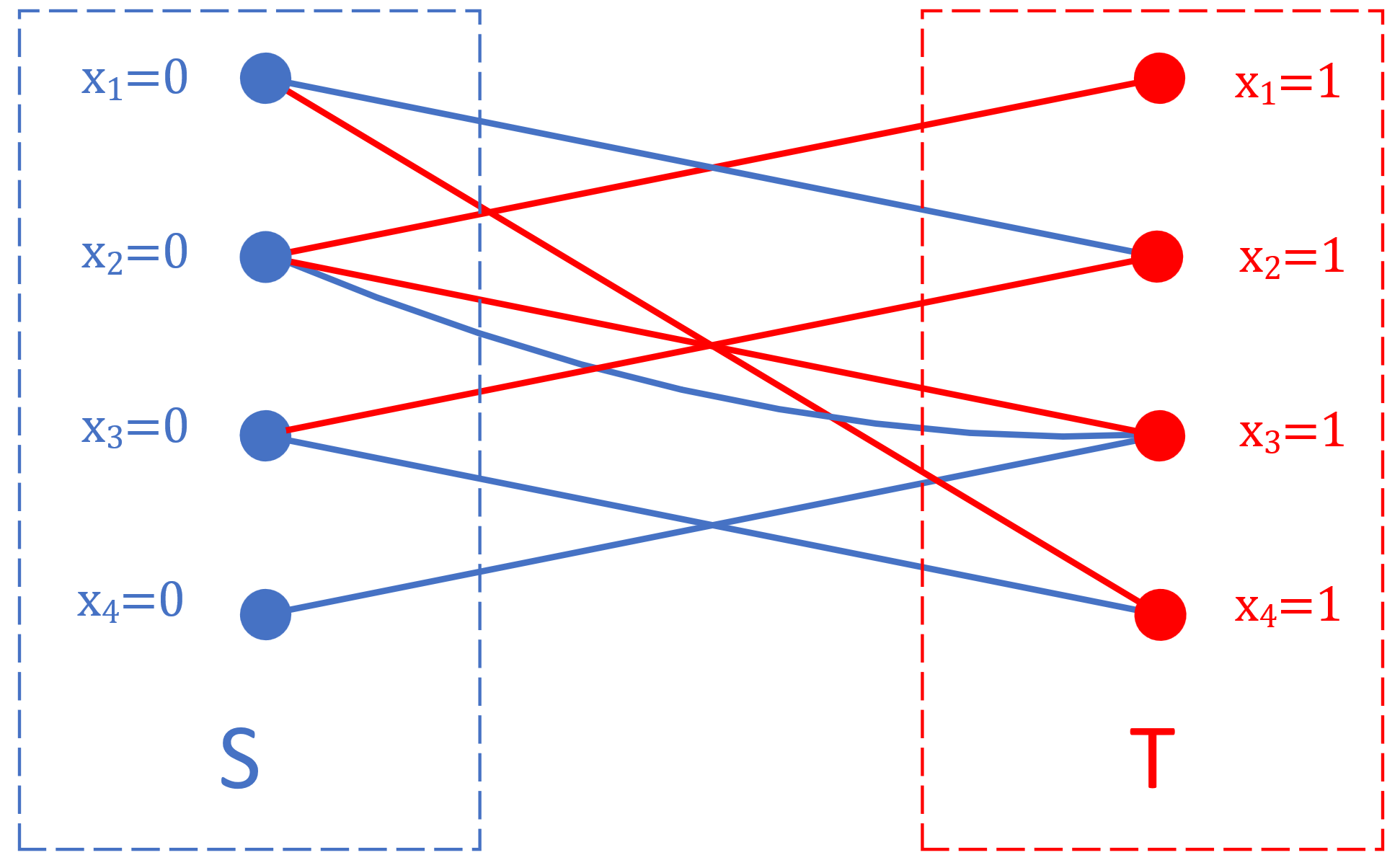}
            \caption{An example of $G_X$. Blue edges refer to directed edges in $B$, heading from $S$ to $T$. Red edges refer to directed edges in $R$, heading from $T$ to $S$.}
            \label{fig:bip}
        \end{figure}

With the two property, we again introduce an polynomial time approach to transform $\#\eo(\mathcal{F})$ to a $\#\csp$ problem, where $\mathcal{F}$ contains only \ba[0] signatures.

\begin{lemma}\label{lem:0rebtocsp}
Suppose $\mathcal{F}$ is composed of \ba[0] signatures. Then for any instance $\Omega(G(U,V,E),\pi)$ of $\hol(\neq_2\mid\mathcal{F})$, in polynomial time we can obtain another instance $\Omega'(G(U,V,E),\pi')$ of $\hol(\neq_2\mid\mathcal{F}')$ such that $\hol_{\Omega'}=\hol_\Omega$ , where $\mathcal{F}'=\{f|_{\eom[P]}\mid f\in\mathcal{F},\text{ P is a pairing of Var}(f)\}$. 
\end{lemma}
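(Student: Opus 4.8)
The plan is to mimic the proof of Lemma~\ref{lem:alluptocsp}, replacing the \emph{active} receiving-sending mechanism (which required a genuine pinned variable $\Delta_1$ coming from Lemma~\ref{lem:alluphasdelta1}) by a \emph{passive} one, where we \emph{assume} a value on a chosen edge and then propagate. Concretely, I would process the instance $\Omega(G(U,V,E),\pi)$ vertex by vertex: as long as some $f_v$ assigned by $\pi$ is not yet an $\eom$ signature, pick such a vertex $s$ with $E(s)=X=\{x_1,\dots,x_{2d}\}$ and $f_s=f$. By the first-level condition in Definition~\ref{def:reba}, there is a variable $x\in X$ and its partner $y=\psi(x)\in X$ such that $f$ vanishes whenever $x=y=0$. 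On the other hand, by Lemma~\ref{outside-1-rebalancing}, the ``outside'' gadget $H$ obtained by deleting $s$ has signature $\hat f$ which is $1$-rebalancing, so it has a first-level mapping $\hat\psi$ pairing the dangling edges of $H$ (which are exactly $X$): there is $x'\in X$ and $\hat y=\hat\psi(x')\in X$ with $\hat f$ vanishing whenever $x'=\hat y=1$. The key combinatorial point is that $\psi$ pairs up $X$ ``from inside $s$'' while $\hat\psi$ pairs it up ``from outside $s$'', and these two pairings together — viewed as a graph $G_X$ on vertex set $X$ with blue edges from $\psi$ and red edges from $\hat\psi$ — decompose $X$ into alternating blue/red cycles and paths. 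I would use such an alternating cycle to locate a pair of edges $\{x_i, x_j\}\subseteq X$ that can be \emph{simultaneously} fixed to $0$ and $1$ (one inside-forced, one outside-forced) without changing the partition function: going around the cycle, the inside constraint ``not both $0$'' and the outside constraint ``not both $1$'' force, along the whole cycle, a consistent $0/1$ alternation in every non-vanishing assignment, so fixing one opposite pair costs nothing.

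The next step is to argue that after fixing such a pair $x_i=1,\ x_j=0$ (implemented by a $\bi{1}{0}$ self-loop, which preserves the $\eo$ setting as in Section~\ref{section:GC}), the new vertex signature is $f^{x_i=1,x_j=0}$. Care is needed because $x_i,x_j$ need not be the specific $x,\psi(x)$ from the top-level recursion of $f$; but here is where the ``adaptive'' recursive structure of \ba[0] is used: following the chosen alternating cycle, the forced pair corresponds exactly to one of the nested pairs $(x,\psi(x))$, $(\text{second-level pair})$, etc., so the restricted signature is again \ba[0] by Definition~\ref{def:reba} (or, more robustly, by the closure Lemma~\ref{lem:bagad}, since pinning by $\bi{1}{0}$ can be realized as a self-loop by $\neq_2$ after the right variable flips). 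Thus each step replaces a non-$\eom$ vertex signature $g$ by $\Delta_1\otimes\Delta_0\otimes g^{x_i=1,x_j=0}$, which is still \ba[0], strictly decreasing the total arity of non-$\eom$ vertices while keeping $\hol_\Omega$ unchanged. Iterating, after $O(n^2)$ such operations (each computable in $O(1)$ time on signatures of bounded arity, with $O(n)$ work to trace the cycle) every vertex signature has become an $\eom$ signature, i.e.\ $\su(f_v)\subseteq\eom[P_v]$ for some pairing $P_v$ of $\mathrm{Var}(f_v)$, so $f_v=f|_{\eom[P_v]}$ for the original $f\in\mathcal{F}$. This yields the desired $\Omega'$ over $\mathcal{F}'=\{f|_{\eom[P]}\mid f\in\mathcal{F},\ P\text{ a pairing of }\mathrm{Var}(f)\}$ with $\hol_{\Omega'}=\hol_\Omega$, and the whole transformation runs in polynomial time.

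I expect the main obstacle to be the core combinatorial claim that the inside pairing $\psi$ and the outside pairing $\hat\psi$ jointly force an \emph{opposite} pair that can be pinned for free. The subtlety is twofold: first, one must show the union $G_X$ of the two perfect matchings indeed always contains a cycle on which the two constraint types (``not both $0$'' along blue edges, ``not both $1$'' along red edges) interact to pin an opposite pair rather than, say, kill the instance or leave a free pair that is \emph{equal} rather than opposite — this is exactly where one checks that $x_i\ne x_j$ and that both constraints are consistent with some non-vanishing assignment. Second, one must verify that the pinned restriction lands back inside the recursively-defined \ba[0] family rather than merely being \ba[0] ``one level down for the wrong variable''; invoking the gadget-closure Lemma~\ref{lem:bagad} is the clean way around this, but one has to phrase the pinning as a legitimate $\neq_2$ self-loop combined with local bit-flips so that Lemma~\ref{lem:bagad} applies verbatim. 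Once these two points are nailed down, the termination and polynomial-time bookkeeping are routine and parallel the proof of Lemma~\ref{lem:alluptocsp}.
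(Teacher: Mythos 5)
Your overall architecture is the same as the paper's: combine the inside first-level mapping $\psi$ of $f_s$ (from \ba[0]) with the outside first-level mapping of $\hat{f_s}$ (from Lemma~\ref{outside-1-rebalancing}), assemble the resulting implications into a graph on the edges of $s$, find a cycle (which exists because every vertex of the implication digraph has out-degree one), extract a forced pair, restrict, update the mappings, and iterate. However, there is a genuine error at the crucial step. The cycle $s_i t_j \cdots s_i$ yields only the two conditional implications $x_i=0\to x_j=1$ and $x_j=1\to\cdots\to x_i=0$, which together exclude the assignments $(x_i,x_j)\in\{(0,0),(1,1)\}$ but leave \emph{both} $(0,1)$ and $(1,0)$ available. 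Unlike the pure-up case, where Lemma~\ref{lem:alluphasdelta1} supplies an unconditional $\Delta_1$ factor, nothing here pins either variable. So your step of fixing $x_i=1,\ x_j=0$ by a $\bi{1}{0}$ self-loop and replacing $g$ by $\Delta_1\otimes\Delta_0\otimes g^{x_i=1,x_j=0}$ discards all contributions with $(x_i,x_j)=(0,1)$ and changes $\hol_\Omega$. A concrete counterexample: let $\mathcal{F}=\{\bi{1}{2}\}$ (which is \ba[0]) and take a cycle of such vertices joined by $\neq_2$; the partition function sums over two global orientations, and pinning any one pair destroys one of them. The correct move, as in the paper, is to impose only the disequality, i.e.\ replace $f$ by $f|_{p\neq q}$ (keeping the arity and both orientations); iterating this is exactly what produces $f|_{\eom[P]}$, whereas iterated pinning would collapse each signature to a single support string rather than to the target form.

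Two secondary points. First, $\psi$ and the outside mapping $\theta$ are mappings with $\psi(x)\neq x$, not perfect matchings; the cycle argument needs only that each vertex of the bipartite digraph $G_X=(S,T,B\cup R)$ has out-degree one, and bipartiteness guarantees the cycle alternates between a $[x_i=0]$-vertex and a $[x_j=1]$-vertex, so the forced pair is automatically \emph{opposite} and $x_i\neq x_j$ as variables. Second, preservation of \ba[0] under the restriction $f\mapsto f|_{p\neq q}$ cannot be delegated to Lemma~\ref{lem:bagad}: that lemma concerns gadget constructions (tensoring and $\neq_2$ self-loops, which reduce arity), while $f|_{p\neq q}$ keeps the arity and merely restricts the support. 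The paper instead exhibits an explicit updated witness $\psi'$ (rerouting any $x$ with $\psi(x)\in\{p,q\}$ through the second-level mappings $\psi_2^{x=0,p=1}$ or $\psi_2^{x=0,q=1}$, exactly as in the proof of Lemma~\ref{lem:ba-jumper-closed}), and observes that in the updated graph the four vertices $s_i,s_j,t_i,t_j$ become isolated components, so the procedure can recurse on the remaining variables to build the full pairing $P_s$.
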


\begin{proof}
    Given an instance $I$ of $\hol(\neq_2\mid\mathcal{F})$ with $n$ signatures, let $s\in V$ assigned $f\in\mathcal{F}$ be an arbitrary vertex. By Lemma \ref{outside-1-rebalancing}, $f$ is \ba[0] and $\hat{f}$ is \ba[1]. We use $\psi$ to denote the first level mapping of $f$ and $\theta$ to to denote the first level mapping of $\hat{f}$, and both mappings are associated with $E(s)=X=\{x_1,x_2, \cdots, x_{2d}\}$.

    We use $G_X=(S,T,B\cup R)$ to denote the directed bipartite graph defined by $X,\psi$ and $\theta$ where $S=\{s_i\mid 1\le i\le 2d\},T=\{t_j\mid 1\le j\le 2d\}, B=\{s_it_j\mid 1\le i\le 2d,1\le j\le 2d, \psi(x_i)=x_j\}$ and $R=\{t_js_i\mid 1\le i\le 2d,1\le j\le 2d, \theta(x_j)=x_i\}$. See Figure \ref{fig:bip} for an example.

We remark that for each $x_i\in X$, $s_i$ and $t_i$ represent the statements $[x_i=0]$ and $[x_i=1]$ respectively. Each directed edge $s_it_j$ represents that the statement $[x_i=0\to x_j=1]$ is always True. We write $x_i=0\to x_j=1$ for short. This is guaranteed by the 0-rebalancing property. Similarly, each directed edge $t_js_i$ represents that $ x_j=1\to x_i=0$.

     As $\psi$ and $\theta$ are two mappings on $X$, the out-degree of each vertex in $S\cup T$ is exactly one. Consequently, there exists a directed cycle, namely $s_it_js_k\cdots s_i$.

     If $x_i=0$, by the edge $s_it_j$ we have $x_j=1$. If $x_j=1$, by the path $t_js_k\cdots s_i$ we have $x_i=0$. Consequently, $x_i$ and $x_j$ can not take the same value in the instance. Let $x_i=p,x_j=q$.  We add the pair $(p,q)$ to the pairing $P_s$ and replace $f$ and $\hat{f}$ with $f|_{p\neq q}$ and $\hat{f}|_{p\neq q}$.

    After this, we update $\psi$ to $\psi'$ in the following way:

     \[
\psi'(x)= \left \{ \begin{array}{ll}
q &  \text{ if }  x=p;  \\
p &  \text{ if }  x=q; \\
    \psi(x) &  \text{ if } x\neq p,q, \psi(x)\in X ; \\
    \psi_2^{x=0,p=1}(q) &   \text{ if }  x\neq p,q,\psi(x)=p ;\\
    \psi_2^{x=0,q=1}(p) &   \text{ if }  x\neq p,q,\psi(x)=q .
\end{array}  \right.
\]
Similar to the proof of Lemma \ref{lem:ba-jumper-closed}, $\psi'$ can also be a witness for the 0-rebalancing property of the signature $f|_{p\neq q}$. We also update $\theta$ to $\theta'$ similarly. 

Now consider the directed bipartite graph $G_X'$ defined by $X,\psi'$ and $\theta'$. We can see that $s_it_j,s_jt_i,t_is_j,t_js_i\in G_X'$, forming two isolated connecting components. Consequently we may ignore them and apply the analysis above to $G_X'-s_i,s_j,t_i,t_j$ to obtain another opposite pair of variables, which means this process can be done successively. At the end of the procedure, we obtain $P_v$ which is the desired pairing that restrict $f$.

At the beginning of the algorithm, $\theta$ can be computed in $O(2d\cdot n)$ time, by using the 0-rebalancing property successively on the gadget $G-s$ to compute a tour from each variable of $s$ to another variable of it. After that, finding a directed loop in $G_X$ and updating $\psi$ need $poly(d)$ time, which is a constant. Updating $\theta$, however, needs $O(n)$ time. Each vertex $s$ need $d$ steps to clarify $P_s$, and there are $n$ vertices in total. Consequently, the time complexity of the algorithm is $O(n^2)$, which is polynomial.
\end{proof}

Along with Theorem \ref{thm:CSPdichotomy} and Lemma \ref{thm:csp=eom}, we present the algorithm in Theorem \ref{thm:rebaalg}.

\begin{proof}[Proof of tractability]
Given an instance of $\hol(\neq_2\mid\mathcal{F})$, by Lemma \ref{lem:0rebtocsp} we can transform it to a instance of $\hol(\neq_2\mid\mathcal{F}')$ in polynomial time, where $\mathcal{F}'=\{f|_{\eom[P]}\mid f\in\mathcal{F},\text{ P is a pairing of Var}(f)\}$. By Definition \ref{def:eoaeop}, $\mathcal{F}'\subseteq\mathscr{A}$ or $\mathcal{F}'\subseteq\mathscr{P}$. Then by Lemma \ref{thm:csp=eom} and the polynomial time algorithms in Theorem \ref{thm:CSPdichotomy} for $\#\csp$ problems, the partition function of the latter instance can be calculated in polynomial time.
\end{proof}
\section{Conclusion}\label{sec:ccls}

In this article, we prove two dichotomies for $\#\eo(\mathcal{F})$: one with $\mathcal{F}$ restricted to binary and quaternary signatures, and another with $\mathcal{F}$ restricted to pure signatures. We also present an algorithm for rebalancing signature sets satisfying $\eom[\mathscr{A}]$ or $\eom[\mathscr{P}]$. A more detailed characterization of pure signatures and rebalancing signatures would be valuable, similar to what has been achieved for $\delta$-affine kernels in \cite{shao2024eulerian}. Indeed, pure signatures clearly exhibit close connections to $\delta$-affine signatures.

The pursuit of a complete dichotomy for $\#\eo$ remains an important research direction. Notably, the signature $f_{56}$ defined in Section~\ref{4056} is neither pure nor rebalancing. The complexity of $\#\eo(\{f_{56}\})$ remains unresolved and presents significant interest. While some progress has been made, substantial challenges persist in addressing this problem.

Recent work \cite{meng2025fpnp} has established an $\text{FP}^\text{NP}$ versus \#P dichotomy for complex-valued $\#\eo$, which includes an interesting polynomial-time algorithm with a specific NP oracle for solving $\#\mathrm{EO}(f_{56})$. This development has also prompted renewed investigation of Boolean constraint satisfaction problems \cite{feder2006classification}. Nevertheless, the polynomial-time computability of $\#\eo(f_{56})$ remains an open question, and the most general known polynomial-time algorithm for $\#\eo$ is still our rebalancing signature algorithm presented in this work.



\bibliography{P-time-EO}

\appendix
\end{document}